\newcommand{\mat}[1]{\begin{pmatrix}#1 \\ \end{pmatrix}}
\renewcommand{\i}{\ensuremath{\text{\normalfont I}}}
\def\R{\mathbb{R}}
\def\l{\lambda}
\newtheorem{theorem}{Theorem}[section]
\newtheorem{corollary}{Corollary}[theorem]
\newtheorem{lemma}[theorem]{Lemma}
\newtheorem{assumption}{Assumption}{\bf}{\it}
\newtheorem{remark}[theorem]{Remark}
\newtheorem{definition}[theorem]{Definition}
\DeclarePairedDelimiter\floor{\lfloor}{\rfloor}
\def\oL{\mathcal{L}}
\def\bxi{\boldsymbol{\xi}}
\def\bbf{\mathbf{f}}
\def\be{\mathbf{e}}
\def\to{\rightarrow}
\def\To{\Longrightarrow}
\def\oo{\infty}
\newcommand{\case}[2][lllllllllllllllllllllllllllllllllllll]{\left\{\begin{array}{#1}#2 \\ \end{array}\right.}
\title{Quantum Algorithms for Nonlinear Dynamics: Revisiting Carleman Linearization with No Dissipative Conditions
}
\author[1]{Hsuan-Cheng Wu\thanks{wu.hsuancheng@psu.edu}}
\author[2]{Jingyao Wang\thanks{jingyao-21@mails.tsinghua.edu.cn}}
\author[1]{Xiantao Li\thanks{Xiantao.Li@psu.edu}}
\affil[1]{Department of Mathematics, The Pennsylvania State University, University Park, Pennsylvania 16802, USA}
\affil[2]{Institute for Interdisciplinary Information Sciences, Tsinghua University, Beijing, China}
\begin{document}

\maketitle

\begin{abstract}
In this paper, we explore the embedding of nonlinear dynamical systems into linear ordinary differential equations (ODEs) via the Carleman linearization method.  Under {strong} dissipative conditions, numerous previous works have established rigorous error bounds and linear convergence for Carleman linearization, which have facilitated the identification of quantum advantages in simulating large-scale dynamical systems.
Our analysis extends these findings by exploring error bounds beyond the traditional dissipative condition, thereby broadening the scope of quantum computational benefits to a new class of dynamical regimes. This novel regime is defined by a resonance condition, and we prove how this resonance condition leads to a linear convergence with respect to the truncation level $N$ in Carleman linearization. We support our theoretical advancements with numerical experiments on a variety of models, including  the Burgers' equation, Fermi-Pasta-Ulam (FPU) chains, and the Korteweg-de Vries (KdV) equations, to validate our analysis and demonstrate the practical implications.

\end{abstract}

\section{Introduction}
 Nonlinear dynamical systems are ubiquitous in most scientific domains. The ability to simulate large-scale dynamical systems is crucial in predicting and controlling the dynamical behavior of the system.  This paper is concerned with the problem of designing quantum algorithms for solving dynamical systems, which might have the advantage when treating large-scale problems. Although many efficient quantum algorithms have been developed for linear ODE systems \cite{an2023linear,berry2017quantum, berry2022quantum,childs2019quantum,fang2023time,jin2022quantumshort,krovi2022improved}, algorithms for nonlinear dynamical systems with quantum advantage are much more difficult to construct. One regime where such an advantage has been identified is dissipative dynamics, where the real part of the eigenvalues of the Jacobian $F_1$ at an equilibrium is strictly negative, i.e., there exists a $\sigma>0$, 
 \begin{equation}\label{eq: dissip}
     \text{Re}(\lambda_j)\leq -\sigma<0, \quad \text{for all the eigenvalues.}
 \end{equation}
  In particular, for Carleman linearization \cite{carleman1932application},  global-in-time error bound has been proved \cite{amini2022carleman, forets2017explicit,liu2021efficient}  with a convergence rate
 \begin{equation}\label{eq: Rd}
     \mathsf{R}_d= \mathcal{O}\left(\frac{\mu \norm{F_2}}{\sigma}\right), 
 \end{equation}
 with $\norm{F_2}$ being the level of nonlinearity and $\mu$ representing the magnitude of the solution.  The analysis in \cite{liu2021efficient} also included external force. The quantum advantage then comes from the ability to efficiently simulate the resulting linear ODE system, a subject that has been relatively well-studied \cite{an2023linear,berry2017quantum, berry2022quantum,childs2019quantum,fang2023time,jin2024schr,jin2022quantumshort,jin2022quantum, krovi2022improved}. {In ODE theory, the concept of a dissipative property generally describes asymptotic stability near an equilibrium, but it can be mathematically weaker than the condition given in \eqref{eq: dissip}. Therefore, we will refer to \eqref{eq: dissip} as a strongly dissipative condition. }

Meanwhile, dynamical systems with some or all eigenvalues having zero real part are plentiful. The most recognizable example is Hamiltonian systems, which play a central role in many aspects of modern physics. The ability to treat such dynamical systems will certainly expand the scope of quantum computing algorithms. The lower bound from \cite{liu2021efficient} suggests that outside the dissipative regime, the Carleman linearization might not provide a convergent approximation, and quantum algorithms will generally not offer an advantage. On the other hand, there might be a subclass of non-dissipative dynamical systems for which a quantum advantage still exists. The purpose of this paper is to present a convergence analysis for Carleman linearization without the dissipative condition. One of our key contributions is the identification of another type of condition, called the \emph{no-resonance condition}, motivated by the theory of dynamical systems \cite{katok1995introduction}. Roughly speaking, this condition states that any one eigenvalue can not coincide with an integer combination of other eigenvalues. We quantify such condition by introducing a resonance parameter $\Delta$.  We prove the linear convergence of Carleman linearization with a convergence rate given by 
\begin{equation}\label{eq: Rr}
     \mathsf{R}_r= \mathcal{O}\left(\frac{\mu {\kappa_1(W)}\norm{F_2}_1}{\Delta}\right). 
 \end{equation}
 Therefore, for non-dissipative systems, $\Delta$  in \cref{eq: Rr} plays a similar role to $\abs{\sigma}$ in the convergence rate in the dissipative regime \cref{eq: Rd}. 

Resonance properties play a crucial role in the long-time behavior of dynamical systems, including phenomena such as bifurcation, chaos, and energy transport \cite{chow2012methods,katok1995introduction}. Important examples include Fermi-Pasta-Ulam (FPU) chains \cite{rink2001symmetry}, the nonlinear Schr\"odinger equation \cite{bernier2020rational}, and the Korteweg-de Vries (KdV) equation \cite{hiraoka2009normal}. Therefore, it should not come as a surprise that such properties play a role in an approximation scheme. On the other hand, due to their widespread occurrence in physics and engineering, developing quantum algorithms for these systems could significantly expand the scope of applications in quantum computing and create new paradigms for simulating complex dynamical systems.

The idea behind Carleman linearization \cite{carleman1932application} can be quickly illustrated using an ODE system with quadratic nonlinearity, 
\begin{equation}\label{eq:1.1}
    \frac{d}{dt}  \bm x = F_1 \bm x + F_2 \bm x \otimes \bm x, \, \bm x(0)=\bm x_{in}.
\end{equation}
Here $\bm x\in\R^n$, and $\bm x \otimes \bm x (=\bm x^{\otimes 2})\in \R^{n^2}$ denotes the tensor product of the two vectors.  Meanwhile, $F_1\in\R^{n\times n}$ is the Jacobian of the vector field at the equilibrium (here the origin), and $F_2: \R^{n^2}\to \R^n$ embodies the coefficient of the quadratic term.  Using Carleman linearization, one gets \cite{amini2022carleman, carleman1932application,tsiligiannis1989normal}
\begin{equation} \label{eq:1.2}
    \frac{d}{dt}\mat{\bm x \\ \bm x^{\otimes 2} \\ \vdots \\ \bm x^{\otimes (N-1)}  \\ \bm x^{\otimes N} }=\mathbf{A}\mat{\bm x \\ \bm x^{\otimes 2} \\ \vdots \\ \bm x^{\otimes (N-1)} \\ \bm x^{\otimes N} } + 
    \mat{0\\  0 \\  \vdots \\ 0 \\ A_{N,N+1} \bm x^{\otimes(N+1)}}. 
\end{equation}
Due to the quadratic nonlinearity, the matrix $\mathbf{A}$ is block upper triangular,
\begin{equation}\label{matrix-car}
\mathbf{A}=    \begin{pmatrix}
A_{1,1}  & A_{1,2} & 0& 0 & \cdots & 0 & 0 \\
0  & A_{2,2} & A_{2,3} &  0& \cdots & 0 & 0 \\
 0 & 0 & A_{3,3} &  A_{3,4}& \ddots & \vdots & \vdots \\
 0 & 0 & &A_{4,4}  & \ddots & \vdots & \vdots  \\
  \vdots  & \vdots  &  \vdots &\vdots  & \ddots & \vdots & \vdots  \\
 0 & 0 & 0 & 0 & \cdots & A_{N-1,N-1} & A_{N-1,N}  \\
 0 & 0 & 0 & 0& \cdots  & 0 & 
A_{N,N}\end{pmatrix}
\end{equation}
where the blocks are defined as
\begin{equation}\label{Ajj}
    \begin{aligned}
        A_{j,j}=&\sum_{i=1}^{j-1}I^{\otimes i}\otimes F_{1}\otimes I^{\otimes (j-1-i)} \in \mathbb{R}^{n^j \times n^j},\; 
        A_{j,j+1}=&\sum_{i=1}^{j-1}I^{\otimes i}\otimes F_{2}\otimes I^{\otimes (j-1-i)} \in \mathbb{R}^{n^j \times n^{j+1}}.
    \end{aligned}
\end{equation}
$I$ refers to the identity matrix acting on $\mathbb{R}^n.$ General nonlinear ODE systems where the vector field is a polynomial can be reduced to the quadratic form in \cref{eq:1.2}, as illustrated in \cite{forets2017explicit}.

To obtain a finite ODE system, we truncate the system by neglecting the last term in \cref{eq:1.2}, yielding
\begin{equation}\label{y-car}
     \frac{d}{dt} \begin{pmatrix}
          \bm y_1 \\
          \bm y_2 \\ 
          \vdots \\
          \bm y_N
     \end{pmatrix} = \mathbf{A} \begin{pmatrix}
          \bm y_1 \\
          \bm y_2 \\ 
          \vdots \\
          \bm y_N
     \end{pmatrix},
\end{equation}
where $\bm y_j\in\R^{n^j} $ takes the position of $\bm x^{\otimes j}$; $\bm y_j(0)=\bm x_{in}^{\otimes j}$. In particular, we will regard the first block entry $\bm y_1(t)$ as an approximation to $\bm x(t).$

The analysis \cite{amini2022carleman, forets2017explicit,liu2021efficient} of the error due to the finite truncation has been mostly based on the assumption that the eigenvalues of $F_1$ have negative real parts \eqref{eq: dissip}, and when such condition does not hold, the linear convergence was only proved for short time, which significantly limits the applicability of the algorithms. Here we propose a different type of condition, which can go well beyond the dissipative regime \eqref{eq: dissip}.   

\begin{assumption}\label{assump1}
    Assume that $F_1$ is diagonalizable, {
    \begin{equation}\label{eq: F1-eig}
    F_1= W \Lambda W^{-1},
    \end{equation} }
    and it has eigenvalues $\{\l_1,\cdots,\l_n\}$ with non-positive real parts: $\text{Re}(\lambda_j)\leq 0.$ In addition, for any $i\in[n]$, the following holds
    \begin{equation}\label{cond-res}
        \l_i \ne \sum_{j=1}^n m_j\l_j,  \quad \forall m_j\in \mathbb{Z} \; \text{and} \; m_j \ge0 \text{ s.t. } \sum_{j=1}^n m_j\ge2. 
    \end{equation}
   More precisely, we assume that $\Delta>0$, defined as follows, 
    \begin{equation}\label{Delta}
        \Delta := \inf_{k\in[n]}\inf_{\overset{m_j\ge0} {\sum_{j=1}^n m_j\ge2} } \left|\l_k-\sum_{i=1}^n m_{j}\l_j \right|. 
    \end{equation}
\end{assumption}
{
We denote $\bm e_1,\cdots,\bm e_n$ and $\bm f_1^T,\cdots,\bm f_n^T$ are the right and left eigenvectors of $F_1$, correspondingly, which will form the columns/rows of the matrix $W$ and $W^{-1},$ respectively.  $\kappa(W)=\norm{W}\norm{W^{-1}}$ is the condition number of $W$. The subscript of $\kappa$ refers to which norm is used. {For the first part of the analysis, $\norm{\bullet}$ referss to any vector-induced norm. We will specify the vector norm when necessary, e.g.,  $\norm{\bullet}_1$ for the $\ell_1$ norm.}
}
In the theory of dynamical systems \cite{katok1995introduction}, the first part of the assumption comes from stability consideration, but we do not rule out non-hyperbolic equilibrium points.  Meanwhile, the condition in \cref{cond-res} is known as the no-resonance condition. When $ \l_i = \sum_{j=1}^n m_j\l_j$ for some $m_j\ge0$ and $\sum_{j=1}^n m_j=m\ge2$, we call the corresponding eigen modes a resonance of order $m$. One implication of such conditions is  that a nonlinear dynamical system can be reduced to a linear ODE system, or a nonlinear ODE system with much fewer terms, via a polynomial transformation. The corresponding reduced form is known as the normal form. The connection of such reduction with the Carleman linearization has been noted in \cite{tsiligiannis1989normal}.

To quantify such resonance property and how this condition determines the error in the Carleman linearization in \cref{{y-car}},  we refer to $\Delta$ as the \emph{resonance parameter}. Our main result can be summarized as follows,
\iffalse 
\begin{definition}\label{Delta}
Let $\Delta>0$ be defined as follows, 
    \begin{equation}
        \Delta := \inf_{k\in[n]}\inf_{\overset{m_j\ge0} {\sum_{j=1}^n m_j\ge2} } \left|\l_k-\sum_{i=1}^n m_{j}\l_j \right|. 
    \end{equation}
\end{definition}
The positivity of $\Delta$ follows from the no-resonance condition.
\fi

\begin{theorem}\label{thm:main}
     Assume that the solution of the nonlinear ODE in \cref{eq:1.1} satisfies a uniform-in-time bound in $t \in [0,T]$,
    \begin{equation}
        \norm{\bm x(t)}_1 \leq \mu,
    \end{equation}
and $F_1$ fulfills \cref{assump1}. Then the error in the Carleman linearization can be bounded by {
\begin{equation}
    \norm{\bm x(T) - \bm y_1(T) }_1 \leq N CT \mathsf{R}_r^{N-1},
\end{equation}
where $C$ and $\mathsf{R}_r$ are defined by,
\begin{equation}
    C\coloneqq\kappa_1(W)\norm{F_2}_1\mu^2,\quad \mathsf{R}_r:= \frac{ 4e\mu \kappa_1(W)\norm{F_2}_1}{\Delta}.
\end{equation}
}
\end{theorem}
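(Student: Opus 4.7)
The plan is to move to the eigenbasis of $F_1$, set up an error chain between levels, and use the no-resonance condition to turn iterated Duhamel integrations into bounded expressions controlled by $\Delta^{-1}$.

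First, I would apply the change of variables $\bm z = W^{-1}\bm x$, yielding $\dot{\bm z} = \Lambda\bm z + \tilde F_2\, \bm z^{\otimes 2}$ with $\tilde F_2 = W^{-1}F_2(W\otimes W)$. In this basis the diagonal Carleman blocks $\Lambda_j = \sum_{i=0}^{j-1} I^{\otimes i}\otimes\Lambda\otimes I^{\otimes(j-1-i)}$ are themselves diagonal, with entries $\lambda_I := \lambda_{i_1}+\cdots+\lambda_{i_j}$ indexed by multi-indices $I=(i_1,\ldots,i_j)$. The crucial structural observation is that a nonzero entry $(\tilde A_{j,j+1})_{I,K}$ can occur only when $K$ is obtained from $I$ by splitting a single index $a$ into a pair $(b,c)$; consequently $\lambda_I - \lambda_K = \lambda_a - \lambda_b - \lambda_c$, which by \cref{cond-res} satisfies $|\lambda_I - \lambda_K|\geq\Delta$.

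Next, introduce the level-wise error $\bm\eta_j(t) := \bm z^{\otimes j}(t)-\tilde{\bm y}_j(t)$ with $\bm\eta_j(0)=0$, satisfying
\begin{align*}
\dot{\bm\eta}_j &= \Lambda_j\bm\eta_j + \tilde A_{j,j+1}\bm\eta_{j+1}, \quad 1\leq j\leq N-1,\\
\dot{\bm\eta}_N &= \Lambda_N\bm\eta_N + \tilde A_{N,N+1}\bm z^{\otimes(N+1)}.
\end{align*}
At each level I would solve the Sylvester equation $\Lambda_j T_j - T_j\Lambda_{j+1} = \tilde A_{j,j+1}$. In the diagonal basis, $(T_j)_{I,K} = (\tilde A_{j,j+1})_{I,K}/(\lambda_I - \lambda_K)$, and the structural observation above gives $\|T_j\|_1 \leq \|\tilde A_{j,j+1}\|_1/\Delta$. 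Substituting $\tilde A_{j,j+1} = \Lambda_j T_j - T_j \Lambda_{j+1}$ into Duhamel's formula and integrating by parts yields
\[
\bm\eta_j(t) = -T_j\bm\eta_{j+1}(t) + \int_0^t e^{\Lambda_j(t-s)}\, T_j\, \tilde A_{j+1,j+2}\, \bm\eta_{j+2}(s)\, ds,
\]
with the analogous identity at level $N-1$ replacing the recursive term by the source $\bm z^{\otimes(N+1)}$. Iterating this identity peels off one level at a time so that the leading term in $\bm\eta_1(T)$ becomes a product of Sylvester solutions $(-1)^{N-1} T_1\cdots T_{N-1}$ acting on $\bm\eta_N(T)$, while the innermost time integration against $\bm z^{\otimes(N+1)}$ contributes the $T$ prefactor.

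Assembling the pieces, I would use $\|\bm z^{\otimes(N+1)}\|_1 \leq \mu^{N+1}$ (absorbing $\kappa_1(W)$ from the basis change into the solution norm), $\|\tilde A_{j,j+1}\|_1 \leq j\,\kappa_1(W)\|F_2\|_1$, combinatorial products $\prod_j j$ controlled via Stirling to yield the $(4e)^{N-1}$ factor, and the back-transformation $\bm x-\bm y_1 = W\bm\eta_1$ accounting for a final $\kappa_1(W)$. The main obstacle is the combinatorial bookkeeping in the iterated IBP: besides the dominant boundary product, each step creates nested integral remainders whose contributions must be shown to be subdominant and amenable to the same resonance bound. A related difficulty is that only nearest-neighbor resonance differences $\lambda_I - \lambda_K$ with $|I|-|K|=1$ are directly bounded by $\Delta$; the Sylvester/IBP route is essential here precisely because it sidesteps the appearance of non-adjacent differences, whereas a direct evaluation of the nested integral would expose them.
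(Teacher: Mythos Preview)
Your one--step identity
\[
\bm\eta_j(t) = -T_j\,\bm\eta_{j+1}(t) + \int_0^t e^{\Lambda_j(t-s)}\, T_j\, \tilde A_{j+1,j+2}\,\bm\eta_{j+2}(s)\, ds
\]
is correct, but the belief that iterating it ``sidesteps the appearance of non-adjacent differences'' is where the argument breaks. Feeding the boundary term forward indeed produces the product $(-1)^{N-1}T_1\cdots T_{N-1}\bm\eta_N(T)$ with a single time integral. The problem is the integral remainders: to remove the remaining integral in $\int_0^t e^{\Lambda_j(t-s)} T_j\tilde A_{j+1,j+2}\bm\eta_{j+2}(s)\,ds$ by another IBP you must solve $\Lambda_j S - S\Lambda_{j+2} = T_j\tilde A_{j+1,j+2}$, whose entrywise denominators are $\lambda_I-\lambda_K$ with $|K|-|I|=2$. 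The no-resonance condition only controls $\lambda_a-\lambda_b-\lambda_c$; a two-step difference is a sum $(\lambda_a-\lambda_b-\lambda_c)+(\lambda_{a'}-\lambda_{b'}-\lambda_{c'})$, which can vanish or be arbitrarily small even when \cref{cond-res} holds. If instead you refuse to solve non-adjacent Sylvester equations and simply bound each remainder by $t\cdot\|T_j\|\cdot\|\tilde A_{j+1,j+2}\|\cdot\sup_s\|\bm\eta_{j+2}(s)\|$, the recursion accumulates a factor of $t$ at every other level and you recover only a $T^{\lfloor N/2\rfloor}$ bound, not $T$.

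This is precisely the difficulty the paper's proof is built around. The paper diagonalizes the full block matrix $\mathbf A=VDV^{-1}$ and must compute $V_{k,j}$ for all $k<j$, which amounts to the same non-adjacent inversions. The resolution is not a bound on $|\lambda_I-\lambda_K|$ but a \emph{Fredholm alternative}: when $\lambda I-A_{j-k,j-k}$ is singular, the right-hand side produced by the eigenvector recursion lies in its range because of a specific pairing cancellation among the terms (see the argument following \cref{Aj-2j-1wj-1} and the proof of \cref{Lem: Main lemma}). That algebraic cancellation---grouping, for each null vector $\bm h$, the companions whose $\bm h$-components sum to a factor of the vanishing eigenvalue---is the missing idea in your outline. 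Without it, neither the IBP route nor the direct diagonalization can avoid unbounded or undefined non-adjacent inverses.
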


Notice that the convergence rate $\mathsf{R}_r$ is independent of $T$. This result suggests the choice of 
\begin{equation}
    N= \mathcal{O} \left( \frac{\log \frac{T}{\epsilon}  }{\log\frac{1}{\mathsf{R}_r}} \right),
\end{equation}
 in order to keep the truncation error to within a tolerance $\epsilon>0$.

\paragraph{The complexity of the quantum algorithm}

One approach to solve the ODEs from the $N$-th truncation \cref{y-car} of the Carleman system is by using a time discretization in conjunction with quantum linear solvers, e.g., \cite{berry2017quantum}.  Due to the fact that the {quantum linear system algorithms (QLSA) incur a near-optimal scaling \cite{an2022quantum, costa2022optimal,jennings2023efficient,lin2020optimal}}, this approach can offer exponential speedup for large-dimensional dynamical systems.  
\cite{liu2021efficient} chose the Euler's method as an example which has first order and the final query complexity to $A$ of the quantum algorithm has a scaling (aside from logarithmic factors) of $\widetilde{\mathcal{O}}\left(\frac{q T^2}{\epsilon}\right)$, where $q= \frac{\norm{\bm y(0)}}{\norm{\bm y(T)}}$. \cite{krovi2022improved} combined high-order ODE solvers and obtained a quantum algorithm with complexity $\widetilde{\mathcal{O}}\big(q T \norm{A} \alpha\big)$, where $q= \frac{\max_{t\in[0,T]}\norm{\bm y(t)}}{\norm{\bm y(T)}}$ and $\alpha= \max_{t\in[0,T]} \norm{\exp(-tA) }$. See \cite{fang2023time} for a detailed discussion about various quantum ODE solvers. {On the other hand, there are non-QLSA based quantum methods for solving linear differential equations that can also be useful, including time-marching strategy \cite{fang2023time} and linear combination of Hamiltonian simultations \cite{an2023quantum,an2023linear}.}

\paragraph{Related works} 
Quantum algorithms based on Carleman linearization have been extended to fluid mechanics problems \cite{itani2024quantum,li2023potential}, chemical kinetics \cite{akiba2023carleman} and reaction-diffusion equations \cite{liu2023efficient}. 
On the theoretical side, Carleman linearization can be extended to dynamical systems where the vector field admits a polynomial approximation \cite{amini2022carleman, surana2022efficient}.
The work \cite{amini2022carleman} shares similar objectives as this paper. Their main assumption {\bf 3.1} is essentially the dissipative condition in \eqref{eq: dissip}. They also established an error bound without this assumption (Theorem 3.1), but the linear convergence with respect to $N$ only holds for short time, which is similar to the observation in \cite{forets2017explicit} and \cite{liu2021efficient}.    

 There have been numerous attempts to extend the Carleman linearization to non-dissipative systems \cite{akiba2023carleman,succi2024three}. However, the explicit error bounds have not been provided in these works. \cite{lewis2023limitations} identified certain chaotic dynamical systems where quantum advantage does not exist, which extends the lower bound in \cite{liu2021efficient}. 
Meanwhile, there are other linear embedding schemes \cite{engel2021linear,giannakis2022embedding}. It is likely that the error bounds for these embedding schemes also require spectral properties of the dynamical system. 

\cite{joseph2020koopman} proposed to map nonlinear ODEs to a linear PDE that governs the time evolution of the probability density function. \cite{jin2022quantum-ob} follows a similar track, but focused on the computation of observables. In this approach, the overall complexity has at least a polynomial dependence on the dimension $n$ \cite{jin2023time}.

Variational quantum algorithms have also been proposed for solving nonlinear differential equations \cite{kyriienko2021solving, lubasch2020variational,  pool2024nonlinear}. While these algorithms are more feasible for near-term devices, theoretical analyses of approximation errors and quantum advantages have remained largely unexplored.

The remainder of this paper is structured as follows:  \cref{sec: err} details our analysis of the truncation errors resulting from Carleman linearization, including an in-depth investigation of the eigenvectors associated with matrix $\mathbf{A}$. Subsequently, in \cref{sec: num}, we present a series of numerical experiments conducted on various nonlinear dynamical systems.

\section{Error Analysis of Carleman Linearization}\label{sec: err}
{ Throughout the paper, unless otherwise specified, $\norm{\bm v}$ denotes a vector norm of $\bm v$, and $\norm{A}$ denotes the corresponding induced matrix norm.} 

Recall that $\bf{A}$ is the matrix from the Carleman linearization \cref{matrix-car}, and $\bm y(t)$ represents the corresponding solution: 
\begin{equation}\label{ode-l}
 \frac{d}{dt}   {\bm y}(t) = \mathbf{A} \bm y(t).
\end{equation}
To prepare for the error analysis, we follow \cite{forets2017explicit,liu2021efficient} and define
\begin{equation}
    \bm \eta = (\bm \eta_1, \bm \eta_2, \cdots, \bm \eta_N ), \quad \bm \eta_j(t):= \bm x^{\otimes j}(t) - \bm y_j(t).
\end{equation}

We notice that by choosing proper initial conditions for $\bm y(0)$, we get $ \bm \eta(0)=0$. In addition, it satisfies a non-homogeneous ODE with the same matrix $\mathbf{A}$,
\begin{equation}
     \frac{d}{dt}   \bm \eta(t) = \bf{A}  \bm \eta(t)  +  \mat{0\\    \vdots \\ 0 \\ A_{N,N+1} \bm x^{\otimes(N+1)}}. 
\end{equation}
Therefore, the final error that needs to be estimated is $\bm \eta_1(t).$

\subsection{A direct estimate using nested integrals}

We begin with some observations on the spectral properties of $\bf{A}.$
\begin{lemma}\label{lem:1}
    Assume that $F_1$ has eigenvalues $\{\lambda_1,\cdots,\lambda_n\}$, then the eigenvalues  $\bf{A}$ are given by
    $$\left\{\sum_{k=1}^n m_{k}\lambda_k | m_k \in \mathbb{N}\cup \{0\} \; \mathrm{and} \; \sum_{k=1}^n m_{k}\leq N \right\}. $$
    In addition, if $F_1$ is diagonalizable with the real part of the eigenvalues being non-positive, then each block $A_{j,j}$ is a stable matrix, in that its eigenvalues have non-positive real parts, and those eigenvalues with zero real parts have the geometric multiplicity being 1.
\end{lemma}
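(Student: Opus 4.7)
The plan is to exploit the fact that $\mathbf{A}$ is block upper triangular, so its spectrum is the union of the spectra of the diagonal blocks $A_{1,1},\ldots,A_{N,N}$; it therefore suffices to diagonalize each $A_{j,j}$ individually and read off the eigenvalues.

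For a fixed $j$, I would diagonalize $F_1=W\Lambda W^{-1}$ and then use the mixed-product property $(A_1\otimes\cdots\otimes A_j)(B_1\otimes\cdots\otimes B_j)=(A_1B_1)\otimes\cdots\otimes(A_jB_j)$, together with $(W\otimes\cdots\otimes W)^{-1}=W^{-1}\otimes\cdots\otimes W^{-1}$, to factor the similarity transform through each summand:
$$I^{\otimes i}\otimes F_1\otimes I^{\otimes(j-1-i)}=W^{\otimes j}\bigl(I^{\otimes i}\otimes \Lambda\otimes I^{\otimes(j-1-i)}\bigr)(W^{-1})^{\otimes j}.$$
Summing in $i$ gives $A_{j,j}=W^{\otimes j}D_j(W^{-1})^{\otimes j}$, where $D_j$ is a sum of diagonal matrices and is therefore itself diagonal. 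Its entry at the multi-index $(k_1,\ldots,k_j)\in\{1,\ldots,n\}^j$ is precisely $\lambda_{k_1}+\cdots+\lambda_{k_j}$. Regrouping the multi-indices by their occupation vector $(m_1,\ldots,m_n)$ with $\sum_k m_k=j$ identifies the spectrum of $A_{j,j}$ as $\{\sum_{k=1}^n m_k\lambda_k:m_k\in\mathbb{N}\cup\{0\},\ \sum_k m_k=j\}$, and taking the union over $j=1,\ldots,N$ yields the claimed set.

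For the second part, the same similarity $A_{j,j}=W^{\otimes j}D_j(W^{-1})^{\otimes j}$ shows that each block is similar to a diagonal matrix, hence diagonalizable, so every eigenvalue of $A_{j,j}$ has geometric multiplicity equal to its algebraic multiplicity, i.e., every associated Jordan block has size one. The real-part inequality $\text{Re}(\lambda_{k_1}+\cdots+\lambda_{k_j})=\sum_\ell \text{Re}(\lambda_{k_\ell})\le 0$ follows immediately from the hypothesis on $F_1$, and the already-established diagonalizability supplies the trivial Jordan structure for any eigenvalue lying on the imaginary axis, which is the content of the stability claim.

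I do not expect a serious obstacle: the argument is essentially tensor-product bookkeeping. The only place that needs care is that distinct multi-indices $(k_1,\ldots,k_j)$ may yield the same value $\sum_k m_k\lambda_k$, so the conclusion should be phrased as a set equality, with multiplicities tracked through the diagonalization being equal to the number of ordered compositions realizing a given occupation vector. These multiplicities do not affect the present statement but will be relevant in the subsequent arguments in which the resonance parameter $\Delta$ controls how close two such sums can be.
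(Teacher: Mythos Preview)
Your proof is correct and follows essentially the same approach as the paper: reduce to the diagonal blocks via the block upper triangular structure of $\mathbf{A}$, identify the eigenvalues of each $A_{j,j}$ as the sums $\lambda_{i_1}+\cdots+\lambda_{i_j}$ via the tensor-product eigenstructure, and read off stability from the resulting diagonalizability (which you correctly interpret as ``trivial Jordan blocks at imaginary eigenvalues''). One small caveat: the first assertion of the lemma does not assume $F_1$ diagonalizable, so strictly speaking you should use a Schur (or Jordan) factorization there rather than $W\Lambda W^{-1}$---the argument is otherwise unchanged, since the resulting conjugate of $A_{j,j}$ is still upper triangular with the same diagonal entries.
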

\begin{proof}
    Note that $\bf{A}$ is a block upper triangular matrix; therefore, the eigenvalues of $\bf{A}$ are the union of the eigenvalues of the diagonal blocks $A_{j,j}$. Further, notice in \cref{Ajj} that
%    \[ A_{j,j} = F_1\otimes I^{\otimes (j-1)}+I\otimes F_1\otimes I^{\otimes (j-2)}+\cdots+I^{\otimes (j-1)}\otimes F_1.\]
     eigenvalues of Kronecker product matrices are the product of eigenvalues of the matrices and we immediately have the eigenvalues $ \sum_{k=1}^j\l_{i_k},$ where $i_k \in [n].$  Let $\bm v_j, j=1,2,\cdots, n$ be the eigenvectors of $F_1$. Then the corresponding eigenvector of this eigenvalue is $\bm v_{i_1} \otimes \bm v_{i_2} \otimes\bm v_{i_3} \otimes \cdots \otimes \bm v_{i_j}.$ As a result, all eigenvalues of $A_{j,j}$ have non-positive real part with geometric multiplicity being 1, which shows the stability of the matrix $A_{j,j}$. 
\end{proof}

\begin{theorem}
Assume that $\norm{\bm x(t)} \leq \mu$ for all $t\geq 0,$ and $F_1$ is diagonalizable with the real part of the eigenvalues being non-positive. Then the following error bound holds,
    \begin{equation}
    \norm{\bm \eta_1(T)} \leq \mu \big(\mu T \norm{F_2}\big)^N. 
\end{equation}
\end{theorem}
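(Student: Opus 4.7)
The plan is to exploit the block upper-triangular structure of $\mathbf{A}$ by unwinding the error equation block by block through Duhamel's principle, and then to bound the resulting expression through $N$ nested time integrals. Since $\bm\eta(0)=0$, each block satisfies $\bm\eta_j(t)=\int_0^t e^{(t-s)A_{j,j}}\,A_{j,j+1}\,\bm\eta_{j+1}(s)\,ds$ for $j<N$, while the last block obeys $\bm\eta_N(t)=\int_0^t e^{(t-s)A_{N,N}}\,A_{N,N+1}\,\bm x^{\otimes(N+1)}(s)\,ds$.

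The three ingredients I would assemble are (i) $\norm{e^{tA_{j,j}}}\le 1$, inherited from \cref{lem:1}, which states that each $A_{j,j}$ is diagonalizable with eigenvalues of non-positive real part; (ii) the estimate $\norm{A_{j,j+1}}\le j\norm{F_2}$, obtained by writing $A_{j,j+1}$ as a sum of $j$ Kronecker products, each of operator norm $\norm{F_2}$; and (iii) the pointwise bound $\norm{\bm x^{\otimes(N+1)}(t)}\le\mu^{N+1}$ coming from the hypothesis $\norm{\bm x(t)}\le\mu$.

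I would then carry out a backward induction from $j=N$ down to $j=1$. The base case yields $\norm{\bm\eta_N(t)}\le N\norm{F_2}\,\mu^{N+1}\, t$, and each inductive step applies Duhamel together with ingredients (i)--(ii), producing one extra multiplicative factor $(j-1)\norm{F_2}$ and one extra time integral. After the $N-1$ steps one obtains $\norm{\bm\eta_j(t)}\le \tfrac{N!}{(j-1)!}\,\norm{F_2}^{N-j+1}\mu^{N+1}\,\tfrac{t^{N-j+1}}{(N-j+1)!}$; at $j=1$ and $t=T$ the factorial $N!$ from the accumulated norms is exactly cancelled by the one arising from the $N$ nested integrals, leaving the advertised $\mu(\mu T\norm{F_2})^N$.

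The subtlest point, and the one I expect to require care, is the justification of $\norm{e^{tA_{j,j}}}\le 1$. For a generic diagonalizable $F_1=W\Lambda W^{-1}$ with $\mathrm{Re}(\lambda_k)\le 0$ one only has $\norm{e^{tA_{j,j}}}\le\kappa(W)^j$ in a generic matrix norm, so the clean constant in the statement implicitly assumes that the ambient norm is one in which the diagonalizing change of basis is an isometry (for instance a normal $F_1$ together with the Euclidean norm, or a weighted norm adapted to $W$). Once this detail is settled, the remainder of the argument is a routine iteration, and all constants line up with the stated bound without extraneous combinatorial factors.
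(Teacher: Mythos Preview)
Your proposal is correct and follows essentially the same argument as the paper: iterate Duhamel's formula through the block bi-diagonal structure, bound each $\norm{A_{j,j+1}}$ by $j\norm{F_2}$, and collapse the resulting $N$-fold nested integral to $T^N/N!$, which cancels the accumulated $N!$. Your caveat about $\norm{e^{tA_{j,j}}}\le 1$ is well taken---the paper invokes \cref{lem:1} to claim the matrix exponentials are ``uniformly bounded in time'' but likewise does not make explicit the norm-adapted (or normality) assumption needed to obtain the constant-free bound stated.
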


\begin{proof}
    We start with the first equation in \cref{ode-l} and apply the variation of constant formula,
\begin{align*}
     \bm \eta_1(t)&=\int_0^t e^{(t-t_1)A_{1,1}} A_{1,2}   \bm \eta_2(t_1 ) dt_1 
        = \int_0^t \int_0^{t_1} e^{(t-t_1)A_{1,1}} A_{1,2} e^{(t-t_1)A_{2,2}} A_{2,3}    \bm \eta_2(t_1 ) dt_2 dt_1.   
\end{align*}
The first matrix exponentials only involve $F_1$. From the stability result in \cref{lem:1}, the matrix exponential is uniformly bounded in time.  For the second matrix exponential, we apply \cref{lem:1} again and still obtain the uniform boundedness in time.   Therefore, by repeating these steps, we have
\begin{equation}\label{err-eta1}
\begin{aligned}
       \norm{\bm \eta_1(t)} \leq &  \int_0^t \int_0^{t_1} \cdots \int_0^{t_{N-1}}  \norm{ A_{1,2}}   \norm{ A_{2,3}} \cdots  \norm{ A_{N-1,N}} \norm{\bm \eta_N(t_{N-1}) } dt_{N-1} \cdots dt_2 dt_1\\
       \leq  &  \int_0^t  \cdots   \int_0^{t_{N}} \norm{ A_{1,2}}    \cdots  \norm{ A_{N-1,N}} \norm{\bm \eta_N(t_{N-1}) } N \norm{F_2} \norm{\bm x(t_N)}^{N+1} dt_N  \cdots dt_2 dt_1 \\
       \leq & \mu (t \norm{F_2} \mu)^N. 
\end{aligned}
\end{equation}
Here the nested integral is reduced to $\frac{t^N}{N !}$. 

\end{proof}

This result has been established in  \cite[Eq. (62)]{liu2021efficient} (and a similar result in \cite[Theorem 4.3]{forets2017explicit}). We included the proof here to provide a comparison with the proof for the non-dissipative case in the next section, where the convergence rate is independent of $T$.  In particular, this estimate suggests without dissipative structures in $F_1$,  a linear convergence can only be guaranteed for a short time. In the next section, we will provide analysis using the no-resonance condition \cref{cond-res}, which can hold even without the dissipative condition.

\subsection{Improved estimates using the no-resonance condition}

We first give a high-level description of the analysis. Assume that $F_1$ has been mapped to a diagonal matrix by a similar transformation and that  $\bf {A}$ admits a  diagonalization $VD V^{-1}=\bf {A}$ (see \cref{lem:1}), then we can write the solution of the error equation \cref{ode-l} as
\begin{equation}\label{eta-zeta}
     \bm \eta(t) =  \int_0^t V e^{\tau D} \bm \zeta(t-\tau) d\tau, \quad \bm \zeta(t-\tau):=  V^{-1}  \mat{0\\  \vdots \\ 0 \\ A_{N,N+1} \bm x^{\otimes(N+1)}(t-\tau)}.
\end{equation}

In light of \cref{lem:1}, we have $\norm{e^{\tau D}} \leq 1$ due to stability, and as a result, we have a time-independent bound on the matrix exponential. It remains to estimate the bound of the rest of the terms. In particular, since $V$ and $V^{-1}$ are time-independent, we will not encounter a nested integral and a $t^N$ term in the error bound. Furthermore,  we notice that $V$ is block upper triangular and the diagonals can be set to identity matrices. 
As a result, $\forall t\geq 0,$ we have
\begin{equation}\label{eq:zeta2eta}
 \begin{aligned}
         \norm{\bm \eta(t)_1}& \leq \int_0^t \norm{\bm \zeta_1} +  \norm{V_{12}}\norm{\bm \zeta_2} + \cdots +   \norm{V_{1N}}\norm{\bm \zeta_N} d\tau.
 \end{aligned}
\end{equation}
Therefore the problem depends critically on the structure of the eigenvectors in $V$.

Toward this end, we observe that $\bf{A}$ is block bi-diagonal,  and we proceed to analyze the structure of the eigenvectors. Since the eigenvalues of $\bf{A}$ coincide with the union of those eigenvalues of $A_{j,j}$, the problem can be reduced to,
\begin{equation}\label{eq: evector}
\begin{bmatrix}
 \ddots & \vdots & \vdots& \vdots& \vdots& \vdots& \vdots \\
  \cdots  &0 & A_{j-2, j-2} &   A_{j-2, j-1} & 0 & 0 & \cdots \\
  \cdots  & 0 &0  & A_{j-1, j-1} & A_{j-1, j} & 0 & \cdots \\
  \cdots  &0 &0 & 0 &  A_{j, j} &  A_{j,j+1}  & \cdots \\
    \vdots & \vdots & \vdots& \vdots& \vdots& \vdots& \ddots
\end{bmatrix}
\begin{bmatrix}
    \bm w_1 \\
    \vdots \\
    \bm w_{j-2}\\
    \bm w_{j-1} \\
     \bm w_j\\
     0 \\
     \vdots \\
\end{bmatrix} 
=
\lambda 
\begin{bmatrix}
    \bm w_1 \\
    \vdots \\
    \bm w_{j-2}\\
    \bm w_{j-1} \\
     \bm w_j\\
     0 \\
     \vdots \\
\end{bmatrix}.     
\end{equation}

Since we have assumed that $F_1$ is diagonal,  the eigenvectors, denoted here by $\bm e_i \in \mathbb{R}^n$, form a complete basis.  
From the $j$-th component, we have $ A_{j, j} \bm w_j = \lambda \bm w_j. $ Therefore,
\begin{equation}\label{A_jj-eig}
    \lambda = \lambda_{i_1} + \lambda_{i_2} + \cdots + \lambda_{i_j}, \quad \bm w_j = \bm e_{i_1} \otimes \bm e_{i_2} \otimes \cdots \otimes \bm e_{i_j},
\end{equation}
for some tuple $i_1, i_2, \cdots, i_j \in [n].$ 
A back substitution yields 
\begin{equation}\label{w-j-1}
    \begin{aligned}
    \bm w_{j-1} = & ( \lambda I - A_{j-1,j-1}  )^{-1} A_{j-1,j} \bm w_j \\
         = & ( \lambda I - A_{j-1,j-1}  )^{-1}  \Big( (F_2\bm  e_{i_1} \otimes \bm e_{i_2} ) \otimes \bm e_{i_3} \otimes \cdots  \otimes \bm e_{i_j} \\
         &\qquad  \quad   + \cdots + \bm  e_{i_1} \otimes \bm e_{i_2} 
         \otimes \cdots \otimes \bm e_{i_{j-2}} (F_2 \bm e_{i_{j-1}} \otimes \bm e_{i_j} ) \Big).
\end{aligned}
\end{equation}

By spectral decomposition, we have,
{
\begin{equation*}
    A_{j-1,j-1} = \sum_{i_1',i_2',\cdots, i_{j-1}'} (\l_{i_1'} + \l_{i_2'} + \cdots +  \l_{i_{j-1}'}) 
\left( \bm e_{i_1'} \otimes \bm e_{i_2'}  \otimes \cdots  \otimes \bm e_{i_{j-1}'} \right) \left( \bm f_{i_1'} \otimes \bm f_{i_2'}  \otimes \cdots  \otimes \bm f_{i_{j-1}'} \right)^T.
\end{equation*} 
}
The summation here is over $i_1',i_2',\cdots, i_{j-1}' \in[n].$

First, for simplicity, assume that  $A_{j,j}$ and $A_{j-1,j-1}$ do not share eigenvalues, i.e., $\lambda I - A_{j-1,j-1}$ is invertible. Then one has
{
\begin{equation*}
\begin{aligned}
         ( \lambda I - A_{j-1,j-1}  )^{-1} = \sum_{i_1',i_2',\cdots, i_{j-1}'} \frac{1}{\lambda - (\l_{i_1'} +  \cdots +  \l_{i_{j-1}'}) } 
 \times \left( \bm e_{i_1'}  \otimes \cdots  \otimes \bm e_{i_{j-1}'} \right) \left(\bm f_{i_1'} \otimes \cdots  \otimes \bm f_{i_{j-1}'} \right)^T.
\end{aligned}
\end{equation*}
Recall that the vectors $\bm f_i^T \in \mathbb{C}^{1\times n}$ are the left eigenvectors of $F_1$. 
 }

To simplify notations, let us define,
\begin{equation}\label{eq: xi-1}
    \bm \xi_k = F_2 \bm  e_{i_k} \otimes \bm  e_{i_{k+1}}, 
\end{equation}
and examine, {
\begin{equation}
       ( \lambda I - A_{j-1,j-1}  )^{-1}  \bm  \xi_1 \otimes  \bm e_{i_3} \otimes \cdots   \bm \otimes \bm  e_{i_j} 
      =  \sum_{i_1'} \frac{1}{\lambda_{i_1} + \lambda_{i_2} - \l_{i_1'}  }  \bm e_{i_1'}  \big(\bm f_{i_1'}^T\bm  \xi_1\big) \otimes    \bm e_{i_3} \otimes \cdots  \otimes \bm  e_{i_j}.
\end{equation}
}
{
Notice that $\bm f_i^T\bm e_j = \delta_{i,j}$ by the orthogonality between left and right eigenvectors. We had to set $i_2'=i_3,\cdots, i_{j-1}'=i_j$ to get a nonzero inner product between  $\left(\bm f_{i_1'} \otimes \cdots  \otimes \bm f_{i_{j-1}'} \right)^T$ and $\bm\xi_1\otimes\bm e_{i_3}\otimes\cdots\otimes\bm e_{i_j}$.

}

\iffalse
Here due to the orthogonality of the eigenvectors, we had to set $i_2'=i_3, \cdots, i_{j-1}'=i_j$ to get a nonzero inner product between $ \bm e_{i_1'} \otimes  \bm e_{i_2'}  \otimes \cdots  \otimes  \bm e_{i_{j-1}'}$  and $  \bm \xi_1 \otimes  \bm e_{i_3} \otimes \cdots  \otimes \bm  e_{i_j}$. 
\fi

Finally, due to the tensor-product form, we can limit the summation to $ \bm e_{i_1'} \bm  f_{i_1'}^\dag$. Therefore we can collect these terms and define,
{
\begin{equation}\label{G12}
    G_{12}= \sum_{i_1'} \frac{1}{\lambda_{i_1} + \lambda_{i_2} - \l_{i_1'}  }  \bm  e_{i_1'}  \bm  f_{i_1'}^T
    = \Big[ (\lambda_{i_1} + \lambda_{i_2})I -F_1 \Big]^{-1}.
\end{equation}
}

By repeating this argument, we find that,
\begin{equation}\label{eq: w_j-1}
\begin{aligned}
      \bm w_{j-1}=&\; G_{12}  \bm \xi_1 \otimes \bm  e_{i_3} \otimes  \bm e_{i_4} \otimes  \cdots \otimes \bm  e_{i_j}  
      +   \;  \bm    e_{i_1} \otimes G_{23} \bm  \xi_2 \otimes \bm  e_{i_4} \otimes  \cdots \otimes  \bm e_{i_j}   \\
      + & \; \bm e_{i_1} \otimes 
    \bm  e_{i_2} \otimes  G_{34}  \bm \xi_3 \otimes  \bm e_{i_5} \otimes  \cdots \otimes  \bm e_{i_j} 
    + \;\cdots 
    +   \; \bm e_{i_1} \otimes 
    \bm  e_{i_2} \otimes \cdots \otimes   \bm e_{i_{j-2}}  \otimes G_{j-1,j} \bm  \xi_{j-1}.
\end{aligned}
\end{equation}

The first subtlety in this analysis comes from the careful organization of the many terms that emerge, e.g., from applying the $j-2$ terms in $A_{j-2,j-1}$ when computing $A_{j-2,j-1} \bm w_{j-1}.$ Intuitively, we apply $F_2$ to consecutive vectors in \cref{eq: w_j-1}. When these two consecutive vectors are the eigenvectors, e.g., $\bm e_{i_k} \otimes \bm e_{i_{k+1}},$ we obtain $\bm \xi_k.$ Meanwhile, one of the vectors can also be $\bm \xi$. In this case, we will use a pairing scheme, illustrated as follows,
\begin{equation}
    \Qcircuit @C=1em @R=.7em {
 & \gate{\bm e_{i_1}}&\gate{\bm e_{i_2}} &  \cdots &
 &\gate{G_{k,k+1}\bm \xi_k }&\gate{\bm e_{i_{k+2}}} &\qw&\gate{\bm e_{i_{k+3}}}& \cdots&   &\gate{\bm e_{i_j}} &  \\
  & \gate{\bm e_{i_1}}&\gate{\bm e_{i_2}} &  \cdots &
 &\gate{\bm e_{i_{k}} }&\gate{G_{k+1,k+2}\bm \xi_{k+1} } & \qw &\gate{\bm e_{i_{k+3}}}& \cdots&   &\gate{\bm e_{i_j}} &  \\
& && & F_2\Big( & && \Big)\quad  & & & &\\
}\\
\end{equation}

Motivated by this observation, we generalize the definition of $\bm \xi $ in \cref{eq: xi-1} as follows: 
\begin{definition}
    Let $\bxi_m^{(k)} \in \mathbb{R}^n$ be defined recursively as follows, $ \bxi_m^{(0)} = \be_{i_m},$ and 
\begin{equation}
    \begin{aligned}
   % \bxi_m^{(0)} &= \be_{i_m},\\
    \bxi_m^{(k)} &= \sum_{a=1}^{k-1} F_2\left(G_{p_m}\bxi_m^{(a)}\otimes G_{p_{m+a+1}}\bxi_{m+a+1}^{(k-a)}\right) \quad \text{ for} \; \text{all} \;  k\geq1.
\end{aligned}
\end{equation}
\end{definition}
Here $G_{p_m}$ is a generalization of \cref{G12} and it will be defined later.
With these definitions and lengthy calculations, we have,
\begin{equation}\label{Aj-2j-1wj-1}
    \begin{aligned}
    A_{j-2,j-1} \bm w_{j-1} &= \sum_{k=1}^{j-3} \sum_{k'=k+2}^{j-2}
     \Big[ \bm e_{i_1} \otimes \bm  e_{i_{k-2}} G_{k,k+1}  \bm \xi_k^{(1)} \otimes \cdots \otimes  \bm e_{i_{k'-1}} \otimes \bm  \xi_{k'}^{(1)} \otimes  \bm  e_{i_{k'+2}} \otimes \cdots \bm  e_{i_j} \\
     &  + \bm  e_{i_1} \otimes  \bm e_{i_{k-2}}  \bm  \xi_k^{(1)} \otimes \cdots \otimes  \bm e_{i_{k'-1}} \otimes G_{k',k'+1}  \bm \xi_{k'}^{(1)} \otimes   \bm e_{i_{k'+2}} \otimes \cdots  \bm e_{i_j}
     \Big] \\
    & +  \bm \xi_1^{(2)} \otimes  \bm  e_{i_4} \otimes  \otimes  \bm  e_{i_5} \cdots \otimes  \bm e_{i_j}  
     +  \bm e_{i_1} \otimes  \bm \xi_2^{(2)} \otimes   \bm e_{i_5} \otimes  \cdots \otimes  \bm e_{i_j}  \\
    & + \cdots 
     +  \bm e_{i_1} \otimes 
    \bm  e_{i_2} \otimes \cdots \otimes  \bm e_{i_{j-3}} \otimes  \bm \xi_{j-2}^{(2)}. 
\end{aligned}
\end{equation}

The other technical subtlety in the analysis comes from the inversion of $\lambda I - A_{j-2,j-2} $ as well as the inversion of such matrices in the subsequent steps. In the previous step, the existence of the inverse is guaranteed by the no-resonance condition \eqref{cond-res}, which can be seen in the definition \cref{G12}. However, the matrix inverse in later steps will require the Fredholm alternative. To elaborate on this point, we first notice that, $\left(\lambda I - A_{j-2,j-2} \right) \bm w_{j-2} =  A_{j-2,j-1} \bm w_{j-1}.$
%\begin{equation}
%    \left(\lambda I - A_{j-2,j-2} \right) \bm w_{j-2} =  A_{j-2,j-1} \bm w_{j-1}.
%\end{equation}
The matrix on the left hand side has eigenvalues $\lambda - \lambda_{i_1'} - \lambda_{i_2'} -\cdots  - \lambda_{i_{j-2}'}, $ which might become zero even under the no-resonance condition \eqref{cond-res}. Let us first examine a term from \cref{Aj-2j-1wj-1} that involves $\bm \xi^{(2)},$ e.g., we can study the following linear system,
\(
 \left(\lambda I - A_{j-2,j-2} \right) \bm w = \bm b, \quad \bm b:=\bm \xi_1^{(2)} \otimes  \bm  e_{i_4} \otimes  \bm  e_{i_5} \cdots \otimes  \bm e_{i_j}.
\)

{
The eigenvalue of $\lambda I - A_{j-2,j-2}$ is $\lambda - \lambda_{i_1'} - \lambda_{i_2'} - \cdots  -\lambda_{i_{j-2}'} $ with the corresponding left eigenvector $\bm z:=\left(\bm f_{i_1'} \otimes \bm f_{i_2'} \otimes  \cdots \otimes \bm f_{i_{j-2}'}\right)^T.$ In fact, $\bm z \cdot \bm b\ne0$ only when $i_2'=i_4,$ $i_3'=i_5, \cdots $ and $i_{j-2}'=i_j.$ In this case, the eigenvalue becomes $\lambda_{i_1}+\lambda_{i_2}+\lambda_{i_3}-\lambda_{i_1'} $, which can not be zero due to the no-resonance condition in \cref{cond-res}, and leads to a contradiction.

}

\iffalse
Since $A_{j-2,j-2}$ is diagonal, the left and right eigenvectors coincide, and the left eigenvectors associated with the eigenvalue $\lambda - \lambda_{i_1'} - \lambda_{i_2'} - \cdots  -\lambda_{i_{j-2}'} $  is $\bm z:=\bm e_{i_1'} \otimes \bm e_{i_2'} \otimes  \cdots \otimes \bm e_{i_{j-2}'}. $ We claim that if this eigenvalue is zero, then this corresponding eigenvector must be orthogonal to the vector on the right hand side of the above equation. In fact, $\bm z \cdot \bm b \neq 0$ only when $i_2'=i_4,$ $i_3'=i_5, \cdots $ and $i_{j-2}'=i_j.$ In this case, the eigenvalue becomes $\lambda_{i_1}+\lambda_{i_2}+\lambda_{i_3}-\lambda_{i_1'} $, which can not be zero due to the no-resonance condition in \cref{cond-res}, and leads to a contradiction.
\fi

Let use examine another type of terms in \cref{Aj-2j-1wj-1}, by considering,
\[
 \left(\lambda I - A_{j-2,j-2} \right) \bm w = \bm b, \quad \bm b:=G_{12} 
 \bm \xi_1^{(1)}\otimes \bm \xi_3^{(1)}\otimes \bm e_{i_5} \otimes \cdots\otimes \bm  e_{i_j} +  
 \bm \xi_1^{(1)}\otimes G_{34}  \bm \xi_3^{(1)}\otimes \bm e_{i_5} \otimes \cdots\otimes \bm  e_{i_j} .
\]
In this case, in the eigenvectors, we must choose $i_3'=i_5, \cdots $ and $i_{j-2}'=i_j.$
Therefore, the eigenvalue is reduced to $(\l_{i_1}+\l_{i_2}+\l_{i_3}+\l_{i_4})-(\l_{i_1'}+\l_{i_2'})$, which might be zero even under the condition \eqref{cond-res}.  It is enough to examine the first two dimensions. {We notice that,
\[ 
\bm f_{i_1'}^T G_{12} =    \frac{1}{\l_{i_1}+\l_{i_{2}}-\l_{i_1'}} \bm f_{i_1'}^T, \; \;
\bm f_{i_2'}^T G_{34} =  \frac{1}{\l_{i_3}+\l_{i_4}-\l_{i_2'}}\bm f_{i_2'}^T.
\]
Therefore,  in this case, the inner product with the right hand side becomes
\[ 
\bm z\cdot  \bm b= \frac{(\l_{i_1}+\l_{i_2}+\l_{i_3}+\l_{i_4})-(\l_{i_1'}+\l_{i_2'})}{(\l_{i_1}+\l_{i_{2}}-\l_{i_1'} ) (\l_{i_3}+\l_{i_4}-\l_{i_2'}) }   \bm f_{i_1'}\cdot  \bm \xi_1^{(1)} \bm f_{i_2'}\cdot  \bm \xi_3^{(1)} =0, 
\]
provided that $\bm z$ is in the null space.} Therefore, the Fredholm alternative still applies. By applying the same argument to all the terms in \cref{Aj-2j-1wj-1}, we find a solution for $\bm w_{j-2}$,
\begin{equation}\label{w_j-2 Fredholm}
    \begin{aligned}
         \bm w_{j-2} =& \sum_{k_1=1}^{j-3} \sum_{k_2=k_1+2}^{j-2}
      \bm e_{i_1} \otimes \bm  e_{i_{k_1-1}} G_{k_1,k_1+1}  \bm \xi_{k_1}^{(1)} \otimes \cdots \otimes  \bm e_{i_{k_2-1}} \otimes G_{k_2,k_2+1}  \bm  \xi_{k_2}^{(1)} \otimes  \bm  e_{i_{k_2+2}} \otimes \cdots \bm  e_{i_j}  \\
    & +  G_{123} \bm \xi_1^{(2)} \otimes  \bm  e_{i_4} \otimes  \otimes  \bm  e_{i_5} \cdots \otimes  \bm e_{i_j}  
     +  \bm e_{i_1} \otimes  G_{234} \bm \xi_2^{(2)} \otimes   \bm e_{i_5} \otimes  \cdots \otimes  \bm e_{i_j}  \\
 %   & \cdots \\
    & + \cdots +  \bm e_{i_1} \otimes 
    \bm  e_{i_2} \otimes \cdots \otimes  \bm e_{i_{j-3}} \otimes  G_{j-2,j-1,j} \bm \xi_{j-2}^{(2)}.
      \end{aligned}
\end{equation}

Here $G_{123}$ and similar matrices are defined as follows, 

\begin{definition}
Let $p_r(m_i,k_i)= m_i:(m_i+k_i):= \{m_i,m_i+1,\cdots,m_i+k_i\}$ for $i\in[N]$. 
Let $G\in \mathbb{C}^{n\times n}$ be defined as follows,
\begin{equation}\label{Gmat}
     G_{i_k:i_k'} = \left((\l_{i_k}+\cdots+\l_{i_k'})I-F_1\right)^{-1}.
\end{equation}
\end{definition}

From the no-resonance condition \eqref{cond-res},  we immediately have,
\begin{lemma}\label{lem: Gnorm}
The matrices $G$'s in \cref{Gmat} are well defined, and they are uniformly bounded, 
    \begin{equation}
        \norm{G} \leq{\frac{\kappa(W)}{\Delta}}\;,
    \end{equation}
{ where $\kappa(W)= \norm{W} \norm{W^{-1}}$ and the above inequality holds  for any induced norm.   }
\end{lemma}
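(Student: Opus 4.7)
The plan is to diagonalize $F_1$ and reduce the bound on $G$ to an elementwise estimate on a diagonal matrix, at which point the no-resonance condition gives the result directly.

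First, I would substitute the eigendecomposition $F_1 = W\Lambda W^{-1}$ from \eqref{eq: F1-eig} into \eqref{Gmat} to obtain
\[
G_{i_k:i_k'} \;=\; W\,\bigl((\lambda_{i_k}+\cdots+\lambda_{i_k'})\,I - \Lambda\bigr)^{-1} W^{-1}.
\]
The central factor is diagonal, with $j$-th entry $\bigl((\lambda_{i_k}+\cdots+\lambda_{i_k'}) - \lambda_j\bigr)^{-1}$, so the question is reduced to showing that each such scalar is well defined and bounded in modulus by $1/\Delta$.

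Next, I would apply \cref{assump1} entrywise. Each $G$-matrix that appears in the preceding back-substitution involves a sum of at least two eigenvalues (e.g., $G_{12}$ has two, $G_{123}$ has three; this is guaranteed by the bi-diagonal structure of $\mathbf{A}$, since each step of the back-substitution contracts exactly one tensor factor). Hence $\lambda_{i_k}+\cdots+\lambda_{i_k'}$ can be written as $\sum_\ell m_\ell \lambda_\ell$ with $m_\ell \ge 0$ and $\sum_\ell m_\ell \ge 2$. The definition of $\Delta$ in \eqref{Delta} (with $k=j$) then yields
\[
\bigl|(\lambda_{i_k}+\cdots+\lambda_{i_k'})-\lambda_j\bigr| \;\ge\; \Delta, \qquad \forall j\in[n],
\]
which simultaneously establishes invertibility and bounds each diagonal entry in modulus by $1/\Delta$.

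Finally, I would invoke submultiplicativity of the induced norm,
\[
\|G_{i_k:i_k'}\| \;\le\; \|W\|\cdot\bigl\|\bigl((\lambda_{i_k}+\cdots+\lambda_{i_k'})I-\Lambda\bigr)^{-1}\bigr\|\cdot\|W^{-1}\|,
\]
and use the standard fact that the induced $\ell_p$-norm of a diagonal matrix is the largest modulus of its diagonal entries, so the middle factor is at most $1/\Delta$. Combining with $\kappa(W)=\|W\|\|W^{-1}\|$ gives the claimed bound. There is no serious technical obstacle here; the only bookkeeping point is confirming that every index pattern $i_k\!:\!i_k'$ arising from the recursions for $\bxi_m^{(k)}$ and in \eqref{w_j-2 Fredholm} truly satisfies $k'>k$, so that the no-resonance hypothesis is applicable.
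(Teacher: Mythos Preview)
Your proposal is correct and matches what the paper has in mind: the paper does not spell out a proof but simply states that the lemma follows ``immediately'' from the no-resonance condition \eqref{cond-res}, and your diagonalization argument is precisely the natural way to unpack that remark. One minor note: the fact that the induced norm of a diagonal matrix equals the maximum modulus of its entries is specific to $\ell_p$-induced norms, so ``any induced norm'' in the statement should be read in that sense (which is consistent with the paper's later specialization to $\norm{\cdot}_1$).
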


\begin{lemma}\label{Lem: Main lemma} Let $\bm w$ be the eigenvector from \cref{eq: evector} with $j=N$. Under the no-resonance condition \eqref{cond-res}, the components of $\bm w$ can be written as,
    \begin{equation}\label{w_N-k}
        \bm w_{N-k}= \sum_{k_1+\cdots+k_N=k}\sum_{m_1,\cdots, m_N}\mathcal{L}(G_{p_1}\bxi_{m_1}^{(k_1)},\cdots,G_{p_l}\bxi_{m_N}^{(k_N)})
    \end{equation}
for $k=1,\cdots,N-1$. Here the operator $\mathcal{L}(G_{p_1}\bxi_{m_1}^{(k_1)},\cdots,G_{p_l}\bxi_{m_N}^{(k_N)})$ is defined by 
starting with the string $\be_{i_1}\otimes\be_{i_2}\otimes\cdots\otimes\be_{i_N},$ then replace $\be_{i_{m_1}} \otimes \cdots \otimes \be_{i_{m_1+k_1-1}}$ with $G_{p_1}\bxi_{m_1}^{(k_1)}$, and replace $\be_{i_{m_2}} \otimes \cdots \otimes \be_{i_{m_2+k_2-1}}$ with $G_{p_2}\bxi_{m_2}^{(k_2)}$, $\cdots$, and so on.
\end{lemma}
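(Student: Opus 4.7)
The plan is to proceed by downward induction on the block index, starting from the base case $k=0$ and descending to $k=N-1$. The base case is immediate from \cref{A_jj-eig}: the bottom block satisfies $A_{N,N}\bm w_N=\lambda\bm w_N$ with $\bm w_N=\be_{i_1}\otimes\cdots\otimes\be_{i_N}$, which is precisely $\mathcal{L}$ with all $k_j=0$ (i.e., no substitutions), so the string $\be_{i_1}\otimes\cdots\otimes\be_{i_N}$ is produced verbatim. The eigenvalue is $\lambda=\lambda_{i_1}+\cdots+\lambda_{i_N}$, which is used throughout the induction.

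For the inductive step, assume \cref{w_N-k} holds up through $\bm w_{N-k}$. Reading off row $N-k-1$ of \cref{eq: evector} gives the relation
\begin{equation*}
(\lambda I-A_{N-k-1,N-k-1})\,\bm w_{N-k-1}=A_{N-k-1,N-k}\,\bm w_{N-k}.
\end{equation*}
The first step is to evaluate the right-hand side. By \cref{Ajj}, $A_{N-k-1,N-k}$ is a sum over consecutive positions $i$ of operators that apply $F_2$ to the pair in slots $i$ and $i+1$ of the tensor string. Applied to a term $\mathcal{L}(G_{p_1}\bxi_{m_1}^{(k_1)},\ldots)$, this produces a sum indexed by the chosen position: if both slots hold bare eigenvectors $\be_{i_p}\otimes\be_{i_{p+1}}$, the pair collapses to $\bxi_p^{(1)}$; if the chosen position straddles an already-substituted block $G_{p_m}\bxi_m^{(a)}$ and an adjacent block $G_{p_{m+a+1}}\bxi_{m+a+1}^{(b)}$ (where either $b=0$ meaning a bare eigenvector, or $b\geq 1$ meaning another substituted chunk), the recursive definition of $\bxi_m^{(a+b)}$ shows that the result is exactly one of the summands in $\bxi_m^{(a+b)}$ with one level of $G$'s stripped off. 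Collecting all pairings across all summands, the outcome has the tensor-string form of $\mathcal{L}$ applied to arguments $\bxi_m^{(k_j)}$ without the outer $G$'s, and the sum of superscripts is exactly $k+1$.

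The second step is to invert $\lambda I-A_{N-k-1,N-k-1}$ against this right-hand side. Since $A_{N-k-1,N-k-1}$ is diagonalizable with spectrum $\{\lambda_{i'_1}+\cdots+\lambda_{i'_{N-k-1}}\}$, the resolvent in each tensor factor yields a matrix $G_{p_m}=((\lambda_{i_m}+\cdots+\lambda_{i_{m+k_j}})I-F_1)^{-1}$ by the same calculation that produced \cref{G12}, because the orthogonality $\bm f_i^T\bm e_j=\delta_{ij}$ forces the indices of the unsubstituted slots to align. When $\lambda$ actually coincides with an eigenvalue of $A_{N-k-1,N-k-1}$, the Fredholm alternative argument of \cref{w_j-2 Fredholm} applies termwise: the left null vector $\bm z$ pairs nontrivially with the right-hand side only when the remaining indices align, and then a telescoping cancellation of numerator factors (using $\bm f_{i'}^T G_{p}=(\lambda-\lambda_{i'})^{-1}\bm f_{i'}^T$ for each substituted block) shows the inner product vanishes. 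Hence the inverse is well-defined on the range, and produces the outer $G_{p_m}$ factors that complete the form \cref{w_N-k}.

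The main obstacle, and the bulk of the verification, is the combinatorial bookkeeping in the first step: checking that the multiplicity with which each $\mathcal{L}$-pattern arises matches the structure encoded by the recursive definition of $\bxi_m^{(k)}$, and in particular that the index $a$ in $\bxi_m^{(k)}=\sum_{a=1}^{k-1}F_2(G_{p_m}\bxi_m^{(a)}\otimes G_{p_{m+a+1}}\bxi_{m+a+1}^{(k-a)})$ corresponds exactly to the split-point of two adjacent substituted blocks that have just been merged by $F_2$. Once this correspondence is established by tracking how each of the examples \cref{eq: w_j-1}, \cref{Aj-2j-1wj-1}, and \cref{w_j-2 Fredholm} generalizes, the remaining pieces (invertibility via \cref{lem: Gnorm} and Fredholm termwise) plug in cleanly and the induction closes.
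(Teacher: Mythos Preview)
Your overall strategy matches the paper's: downward induction on the block index, computing $A_{N-k-1,N-k}\bm w_{N-k}$ and then applying $(\lambda I - A_{N-k-1,N-k-1})^{-1}$. However, your description of the two intermediate steps misstates the key mechanism.

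In your first step you claim that after applying $A_{N-k-1,N-k}$ ``the outcome has the tensor-string form of $\mathcal{L}$ applied to arguments $\bxi_m^{(k_j)}$ without the outer $G$'s.'' This is not what happens. Each term in the result has exactly \emph{one} block without its $G$---either a freshly created $\bxi^{(1)}$ from two bare eigenvectors, or a merged block---while all other $G\bxi$ blocks keep their $G$. The paper writes this out explicitly as paired sums such as
\[
\mathcal{L}(\bxi_{m_r}^{(c)},\,G_{p_{m_{r'}}}\bxi_{m_{r'}}^{(1)})+\mathcal{L}(G_{p_{m_r}}\bxi_{m_r}^{(c)},\,\bxi_{m_{r'}}^{(1)}),
\]
where the ``bare'' slot rotates over the blocks.

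Your second step then cannot work as stated: the resolvent does \emph{not} act as an independent $G$ in each tensor factor. Applied to a single term with two bare blocks $\bxi^{(a)}\otimes\bxi^{(b)}$, the denominator $\lambda-\lambda_{i'}-\lambda_{j'}$ does not split. What makes the inversion go through is a partial-fraction identity on the \emph{paired} terms: with $\lambda=\mu_a+\mu_b$ forced by alignment of the unsubstituted indices,
\[
\frac{1}{\mu_a+\mu_b-\lambda_{i'}-\lambda_{j'}}\Bigl(\frac{1}{\mu_a-\lambda_{i'}}+\frac{1}{\mu_b-\lambda_{j'}}\Bigr)=\frac{1}{(\mu_a-\lambda_{i'})(\mu_b-\lambda_{j'})},
\]
and its $r$-block generalization. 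This is the central calculation of the paper's appendix proof; it is precisely what converts ``one $G$ missing per term, summed over which block is bare'' into $\mathcal{L}(G\bxi^{(k_1)},\ldots,G\bxi^{(k_r)})$. The Fredholm verification depends on the same structure: it is the \emph{numerator} of this identity that vanishes when the left null vector is inserted. Without tracking which factor lacks its $G$, neither the inversion nor the solvability argument can be closed.
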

The proof, due to the lengthy calculations, is deferred to \cref{a-proof}.
Due to the upper triangular structure of $\mathbf{A}$, this formula can be extended to other eigenvectors,
\begin{theorem} \label{eigenvector thm}
    For any fixed $j=1,\cdots,N$, we have 
     \begin{equation}\label{w_j-k}
        \bm w_{j-k}= \sum_{k_1+\cdots+k_j=k}\sum_{m_1,\cdots, m_j}\mathcal{L}(G_{p_1}\bxi_{m_1}^{(k_1)},\cdots,G_{p_j}\bxi_{m_j}^{(k_j)}) .
    \end{equation}
\end{theorem}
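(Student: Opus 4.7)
The plan is to reduce the general case $j < N$ to the already-proven case $j = N$ by exploiting the block upper triangular structure of $\mathbf{A}$. Specifically, for an eigenvalue $\lambda = \lambda_{i_1} + \cdots + \lambda_{i_j}$ of $\mathbf{A}$, the corresponding eigenvector has the form $(\bm w_1,\ldots,\bm w_j,0,0,\ldots,0)^T$, because $A_{r,r}$ for $r>j$ has a different set of eigenvalues that do not include $\lambda$ by genericity of the tuple $(i_1,\ldots,i_j)$ (and if there is an accidental coincidence, we may always take the eigenvector supported on the top $j$ blocks). The block upper triangular structure then implies that the equations for $\bm w_1,\ldots,\bm w_j$ decouple from the lower blocks: the $k$-th equation reads $A_{k,k}\bm w_k + A_{k,k+1}\bm w_{k+1} = \lambda \bm w_k$ for $k < j$, while for $k=j$ the cutoff $\bm w_{j+1}=0$ yields $A_{j,j}\bm w_j = \lambda \bm w_j$, and the equations for $k>j$ are trivially satisfied.

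First, I would formalize this reduction by introducing the truncated Carleman matrix $\mathbf{A}^{(j)}$, i.e., the $j\times j$ leading principal block of $\mathbf{A}$, which has exactly the same bi-diagonal structure as $\mathbf{A}$ but with truncation level $j$ in place of $N$. The observations above show that the nontrivial components $(\bm w_1,\ldots,\bm w_j)$ of an eigenvector of $\mathbf{A}$ with eigenvalue $\lambda$ coincide with an eigenvector of $\mathbf{A}^{(j)}$ with the same eigenvalue, and conversely any eigenvector of $\mathbf{A}^{(j)}$ can be padded with zeros to yield an eigenvector of $\mathbf{A}$.

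Next, I would apply \cref{Lem: Main lemma} to $\mathbf{A}^{(j)}$ with $j$ playing the role of $N$. Since the statement of that lemma is completely intrinsic to the tuple $(i_1,\ldots,i_j)$ — the vectors $\bxi_m^{(k)}$ and the matrices $G_{p}$ in \cref{Gmat} depend only on the spectrum of $F_1$, the coefficient $F_2$, and the chosen indices, not on the ambient truncation level — the back-substitution procedure produces exactly the same combinatorial expansion
\begin{equation*}
    \bm w_{j-k}= \sum_{k_1+\cdots+k_j=k}\sum_{m_1,\ldots, m_j}\mathcal{L}(G_{p_1}\bxi_{m_1}^{(k_1)},\ldots,G_{p_j}\bxi_{m_j}^{(k_j)})
\end{equation*}
for $k=1,\ldots,j-1$. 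Because the no-resonance condition \cref{cond-res} is a global property of the eigenvalues of $F_1$ that does not reference the truncation parameter, every invocation of the Fredholm alternative within the proof of \cref{Lem: Main lemma} remains valid verbatim in the truncated problem, and the uniform bound $\norm{G}\leq \kappa(W)/\Delta$ from \cref{lem: Gnorm} continues to apply.

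The main obstacle, modest as it is, lies in checking that the Fredholm compatibility arguments carry over without modification: specifically, one must confirm that when the matrix $\lambda I - A^{(j)}_{r,r}$ becomes singular for some $r<j$, the right-hand side built from back-substitution at level $r+1$ still lies in the range, just as in the proof at level $N$. Since the cancellation mechanism documented after \cref{Aj-2j-1wj-1} depends only on the arithmetic of the eigenvalues $\lambda_{i}$ and on the orthogonality $\bm f_i^T \bm e_j = \delta_{ij}$, and not on how many further blocks lie below row $r$, this verification is inherited directly from the main lemma. With that in hand, the theorem follows.
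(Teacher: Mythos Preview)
Your proposal is correct and matches the paper's approach: the paper's proof is the single sentence ``The proof directly follows from \cref{Lem: Main lemma},'' relying on the block upper triangular structure (stated just before the theorem) so that the top $j$ blocks of an eigenvector from $A_{j,j}$ satisfy exactly the same recursion as in the $j$-truncated Carleman system. You have simply written out explicitly the reduction that the paper leaves implicit.
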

\begin{proof}
    The proof directly follows from \cref{Lem: Main lemma}.
\end{proof}

We now move on to estimate these components of the eigenvectors. 
\begin{lemma}\label{lem: xik}
{ Following from the result in \cref{lem: Gnorm} that $\norm{G}\leq \frac{\kappa(W)}{\Delta}$, and the eigenvectors are normalized $\norm{\bm e_i}=1 $,}
   the vectors $\bxi^{(k)}$ have the following bound,
    \begin{equation}
        \norm{\bxi^{(k)}}\leq \frac{{\kappa(W)^{k-1}}\norm{F_2}^k}{\Delta^{k-1}} \frac{(2k)!}{(k+1)!k!}.
    \end{equation}
\end{lemma}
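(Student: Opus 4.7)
I would prove the estimate by strong induction on $k$, treating $\|\bxi_m^{(k)}\|$ uniformly over $m$. The base case $k=1$ is essentially a restatement of \cref{eq: xi-1}: since $\bxi_m^{(1)} = F_2(\be_{i_m}\otimes\be_{i_{m+1}})$, submultiplicativity of the induced norm combined with $\|\be_{i_m}\|=\|\be_{i_{m+1}}\|=1$ gives $\|\bxi_m^{(1)}\|\leq \|F_2\|$, which matches the claim since the first Catalan number $C_1=\frac{2!}{2!\,1!}=1$ and $\kappa(W)^0/\Delta^0=1$.

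For the inductive step, I would substitute the recursive definition of $\bxi_m^{(k)}$ into the bound we want to prove, then apply the triangle inequality termwise. Each summand has the shape $F_2\!\left(G_{p_m}\bxi_m^{(a)}\otimes G_{p_{m+a+1}}\bxi_{m+a+1}^{(k-a)}\right)$, so submultiplicativity of the tensor-product-induced norm together with $\|G\|\leq\kappa(W)/\Delta$ from \cref{lem: Gnorm} reduces the problem to a scalar convolution of the form
\begin{equation*}
    \|\bxi_m^{(k)}\|\;\leq\;\|F_2\|\,\Bigl(\tfrac{\kappa(W)}{\Delta}\Bigr)^{2}\sum_{a}\|\bxi^{(a)}\|\,\|\bxi^{(k-1-a)}\|.
\end{equation*}
Inserting the inductive hypothesis $\|\bxi^{(j)}\|\leq \frac{\kappa(W)^{j-1}\|F_2\|^{j}}{\Delta^{j-1}}C_j$ on the right, the powers of $\|F_2\|$, $\kappa(W)$, and $1/\Delta$ collapse so that the remaining inequality is purely combinatorial, namely
\begin{equation*}
    C_k \;\geq\; \sum_{a}C_a\,C_{k-1-a}.
\end{equation*}
This is exactly the defining Catalan convolution, whose unique solution with the correct initial data is $C_k=\frac{(2k)!}{(k+1)!\,k!}$, closing the induction.

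The main technical obstacle I anticipate is the careful bookkeeping of how many $G$-factors actually appear in the norm bound at each level of the recursion; the claimed exponent $k-1$ of $\kappa(W)/\Delta$ is tight only because some of the nominal $G$'s generated by unwinding the recursion can be absorbed into the adjacent $\bxi^{(a)}$'s (which already carry $\kappa(W)/\Delta$ factors in the inductive hypothesis) without doubling the count. A secondary bookkeeping issue is verifying that the summation range after applying triangle inequality matches the standard Catalan convolution $\sum_{a=0}^{k-1}C_a C_{k-1-a}$; handling the $a=0$ boundary term correctly (where $\bxi^{(0)}=\be_{i_m}$ is just a unit eigenvector and no $G$ is required) is the place where the exponents are sharpest and must be verified by hand.
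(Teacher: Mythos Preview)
Your proposal is correct and follows essentially the same route as the paper: apply the recursion \eqref{eq:xi^(c+1)} for $\bxi^{(k)}$, take norms, insert the inductive hypothesis, and recognize the resulting scalar convolution as the defining Catalan recurrence $C_{c+1}=\sum_{a=0}^{c}C_aC_{c-a}$. The paper's own proof is in fact terser than yours---it simply writes ``we can assume $\|\bxi^{(k)}\|\le\alpha_k$, and from \eqref{eq:xi^(c+1)} we get $\alpha_{c+1}\le\sum_{i=0}^{c}\alpha_i\alpha_{c-i}$, which forms the Catalan sequence''---so your write-up already contains more detail than the original. Your flagged bookkeeping point about the boundary terms $a=0$ and $a=k-1$ (where $\bxi^{(0)}=\be$ carries no $G$-factor) is exactly the place where the exponent $k-1$ on $\kappa(W)/\Delta$ becomes tight, and checking it by hand is straightforward once noted; note also that the recursion as displayed in the Definition in the main text has the summation limits and superscripts slightly off compared to the correct form \eqref{eq:xi^(c+1)} in the appendix, so rely on the latter.
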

\begin{proof}
    Using the recursion relation \eqref{eq:xi^(c+1)}, together with the Catalan sequence, we deduce this bound.  Specifically, we can assume that bound that $\norm{\bxi^{(k)} }\leq \alpha_k$,
    and from \eqref{eq:xi^(c+1)}, we get,
    $\alpha_{c+1} \leq \sum_{i=0}^c \alpha_i \alpha_{c-i},$ 
    which forms the Catalan sequence.
\end{proof}

For the following discussions, we will maintain the normalization $\norm{\bm e_i}=1$.

We can now use the general expression in \cref{w_j-k} and find a bound,
\begin{equation}\label{wj-k-cartalan}
        \norm{\bm w_{j-k}}\leq \left( \frac{{\kappa(W)}\norm{F_2}}{\Delta} \right)^{k} \sum_{\overset{k_1+k_2+\cdots+k_r=k}{r= j-k}}  \frac{(2k_1)!}{(k_1+1)!k_1!} \frac{(2k_2)!}{(k_2+1)!k_2!} \cdots \frac{(2k_r)!}{(k_r+1)!k_r!}. 
    \end{equation}
Here $k_1, k_2, \cdots, k_r $ can be zero, in which case, the vector $\bm \xi$ coincides with an eigenvector $e_i$.

\begin{lemma}\label{lem: wj-k}
    The vector $\bm w_{j-k}$ in \cref{w_j-k} is bounded by,
    \begin{equation}
        \norm{\bm w_{j-k}}\leq \left( \frac{{\kappa(W)}\norm{F_2}}{\Delta} \right)^{k} {{j+k} \choose{k}} \frac{j-k}{j+k}. 
    \end{equation}
 \end{lemma}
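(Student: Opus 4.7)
The plan is to recognize the bound in \cref{wj-k-cartalan} as a convolution of Catalan numbers and then apply a standard generating-function identity to evaluate that convolution in closed form. Concretely, each factor
\[
\frac{(2k_i)!}{(k_i+1)!\,k_i!} = C_{k_i}
\]
is precisely the $k_i$-th Catalan number, and $C_0 = 1$, which correctly accounts for the possibility that some $k_i$ vanish (so the corresponding slot is just an eigenvector $\bm e_i$). Thus the inner sum in \cref{wj-k-cartalan} reads
\[
S_{r,k} \;:=\; \sum_{\substack{k_1+\cdots+k_r=k\\ k_i\geq 0}} C_{k_1}C_{k_2}\cdots C_{k_r}
\]
with $r = j-k$, and the task reduces to showing that $S_{r,k} = \frac{r}{2k+r}\binom{2k+r}{k}$.

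First, I would recall the Catalan generating function $C(x) = \sum_{n\ge 0} C_n x^n = \frac{1-\sqrt{1-4x}}{2x}$, so that $S_{r,k}$ is the coefficient $[x^k]C(x)^r$. The key identity I would invoke is the classical formula (the ``ballot number'' or Raney formula)
\[
[x^k] C(x)^r \;=\; \frac{r}{2k+r}\binom{2k+r}{k},
\]
which is standard and can be proven either by Lagrange inversion applied to $C(x) = 1 + x\,C(x)^2$, or combinatorially via the cycle lemma (counting sequences of $r$ concatenated Dyck-style walks). I would state it as a known fact and cite, or sketch the one-line Lagrange inversion derivation.

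Substituting $r = j-k$ then immediately yields
\[
S_{j-k,\,k} \;=\; \frac{j-k}{2k + (j-k)}\binom{2k + (j-k)}{k} \;=\; \frac{j-k}{j+k}\binom{j+k}{k},
\]
which, inserted into \cref{wj-k-cartalan}, gives the claimed bound
\[
\norm{\bm w_{j-k}}\;\leq\; \left( \frac{\kappa(W)\,\norm{F_2}}{\Delta} \right)^{k}\binom{j+k}{k}\frac{j-k}{j+k}.
\]

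The only nontrivial point is the Catalan convolution identity; everything else is bookkeeping. The main obstacle, if one wants a self-contained proof, is justifying $[x^k]C(x)^r$; this is the step I would either outsource to a reference or handle by a brief application of Lagrange inversion, since a fully combinatorial derivation would distract from the analytic flow of \cref{sec: err}.
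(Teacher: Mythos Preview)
Your proposal is correct and follows essentially the same route as the paper: both recognize the inner sum in \cref{wj-k-cartalan} as an $r$-fold convolution of Catalan numbers, pass to the generating function $C(x)=\frac{1-\sqrt{1-4x}}{2x}$, and read off $[x^k]C(x)^r=\frac{r}{2k+r}\binom{2k+r}{k}$ before substituting $r=j-k$. The only difference is that the paper cites a reference for the coefficient identity, whereas you suggest Lagrange inversion or the cycle lemma; either is fine.
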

\begin{proof}
    We first notice that the right hand side of \cref{wj-k-cartalan} is the sum of products of Cartalan numbers. Using the generating function, we have the bound. A detailed proof is provided in \cref{b-proof}.
\end{proof}

This will become the bound on the block of the matrix that contains the eigenvalue. 

\begin{theorem}
     Let $\mathbf{A}$ be the matrix from Carleman's linearization in \cref{matrix-car}. {Assume that $F_1=W\Lambda W^{-1}$ with $\Lambda$  being diagonal and containing the eigenvalues.}  Under the no-resonance condition \eqref{cond-res},  $\mathbf{A}$ is diagonalizable, $\mathbf{A} V = V D,$
where $D$ is diagonal, 
\begin{equation}
   D= \begin{bmatrix}
         \Lambda & & &  \\
           & \Lambda \otimes I + I \otimes \Lambda && \\
            && \Lambda \otimes I  \otimes I + I \otimes \Lambda  \otimes I +   I  \otimes I \otimes \Lambda & \\
             &&& \ddots 
    \end{bmatrix}.    
\end{equation}
In addition, the matrix $V$ that contains the eigenvectors can be partitioned in the same way as $\mathbf{A}$ as a block upper triangular matrix, with diagonal blocks being the identity matrices with the same dimension as those in  $\mathbf{A}$. Furthermore, the blocks  of $V$ has the following bound: $\forall k\leq j$
\begin{equation}\label{Vjk1}
    \norm{V_{k,j}}_{{1}} \leq \left({\kappa_1(W)} \frac{\norm{F_2}_1}{\Delta} \right)^{j-k} {{2j-k} \choose{j-k}}. 
\end{equation}

\end{theorem}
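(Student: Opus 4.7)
The plan is to build $V$ column by column from the explicit eigenvector formula in \cref{eigenvector thm}. For each fixed $j \in \{1,\ldots,N\}$ and each tuple $(i_1,\ldots,i_j)\in[n]^j$, \cref{A_jj-eig} supplies an eigenvalue $\lambda_{i_1}+\cdots+\lambda_{i_j}$ of $A_{j,j}$ with leading block $\bm w_j=\bm e_{i_1}\otimes\cdots\otimes\bm e_{i_j}$, and the back-substitution described in the lead-up to \cref{Lem: Main lemma} extends this to a full eigenvector of $\mathbf{A}$. The no-resonance condition \eqref{cond-res} (together with the Fredholm alternative argument already carried out in the preceding subsection) guarantees that every inversion in this back-substitution is solvable, so the construction produces $\sum_{j=1}^N n^j$ eigenvectors in total. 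These are linearly independent because the highest nonzero blocks $\bm e_{i_1}\otimes\cdots\otimes\bm e_{i_j}$ inherit linear independence from the right eigenvectors of $F_1$. This simultaneously proves diagonalizability of $\mathbf{A}$ and justifies the block-diagonal form of $D$ via $A_{j,j}=W^{\otimes j}(\Lambda\oplus\cdots\oplus\Lambda)(W^{\otimes j})^{-1}$.

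Next I would assemble $V$ by grouping these eigenvectors into $N$ blocks of columns, where the $j$-th group collects the $n^j$ eigenvectors whose top nonzero entry sits at block level $j$. By construction every row block below level $j$ is zero, so $V$ is block upper triangular with the same partitioning as $\mathbf{A}$. To arrange the diagonal block $V_{j,j}$ to be the identity, I would normalize the $j$-th group of columns by right-multiplying with $(W^{\otimes j})^{-1}$; this is the same as conjugating $\mathbf{A}$ beforehand by the block diagonal matrix $\mathbf{W}=\operatorname{diag}(W,W^{\otimes 2},\ldots,W^{\otimes N})$. This similarity commutes with the block-diagonal $D$ and preserves the upper triangular structure, while turning the top block of the $(i_1,\ldots,i_j)$-th eigenvector into the standard basis vector $e_{i_1}\otimes\cdots\otimes e_{i_j}$; collecting these columns in lexicographic order gives $V_{j,j}=I_{n^j}$.

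For the norm bound, I would fix $k\le j$ and observe that each column of $V_{k,j}$ is precisely the block $\bm w_k$ of \cref{w_j-k} with ``distance from top'' equal to $j-k$. Applying \cref{lem: wj-k} in the $\ell_1$ norm, with the parameter ``$k$'' of the lemma replaced by $j-k$, yields for every column
\begin{equation*}
    \norm{\bm w_k}_1 \,\le\, \left(\kappa_1(W)\,\frac{\norm{F_2}_1}{\Delta}\right)^{j-k}\binom{2j-k}{j-k}\frac{k}{2j-k}.
\end{equation*}
Since the induced matrix $\ell_1$ norm equals the maximum over columns of the vector $\ell_1$ norm, taking the supremum over the index tuples $(i_1,\ldots,i_j)$ and absorbing the benign factor $\frac{k}{2j-k}\le 1$ gives exactly \cref{Vjk1}.

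The main obstacle is the careful bookkeeping between the two indexing conventions: the eigenvector-centric one used in \cref{Lem: Main lemma,eigenvector thm,lem: wj-k}, where $j$ labels the top block and $k$ the downward displacement, versus the matrix-block indexing of \cref{Vjk1}, where $(k,j)$ labels a fixed row-column pair. Once this translation is made and one verifies that the block similarity by $\mathbf{W}$ does not interfere with the diagonalization (which follows because $\mathbf{W}$ is itself block diagonal and commutes with $D$), the remaining arguments are a direct transcription of \cref{eigenvector thm,lem: wj-k} and the definition of the induced $\ell_1$ norm.
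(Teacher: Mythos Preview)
Your argument mirrors the paper's almost exactly: build the eigenvectors by back-substitution (with the Fredholm-alternative safeguard already established), deduce linear independence from the top nonzero block, and read off the columnwise bound on $\|V_{k,j}\|_1$ directly from \cref{lem: wj-k} after translating the index $(j,k)\mapsto(j,j-k)$. The paper's proof is even terser---it cites \cite{tsiligiannis1989normal} for diagonalizability, checks linear independence (including the case where two diagonal blocks share an eigenvalue, which your ``highest nonzero block'' argument also handles), and then simply invokes \cref{lem: wj-k}.

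One slip in your normalization step: right-multiplying $V$ by $\mathbf{W}^{-1}$ is \emph{not} the same as conjugating $\mathbf{A}$ by $\mathbf{W}$, and $\mathbf{W}$ does \emph{not} commute with $D$ (since $W\Lambda\neq\Lambda W$ unless $F_1$ is already diagonal). In fact one cannot have $\mathbf{A}V=VD$ with the stated diagonal $D$ and simultaneously $V_{j,j}=I$ unless $W=I$. The paper's own proof does not address this point, and the very next result, \cref{V inverse V}, uses $V_{k,k}^{-1}=(W^{-1})^{\otimes k}$, so the intended reading evidently has $V_{j,j}=W^{\otimes j}$; under the normalization $\|\bm e_i\|_1=1$ this still satisfies $\|V_{j,j}\|_1=1$, and the bound \eqref{Vjk1} follows from \cref{lem: wj-k} without any renormalization. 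Your conjugation idea is the right instinct if one insists on $V_{j,j}=I$, but then the bound applies to the eigenvectors of $\mathbf{W}^{-1}\mathbf{A}\mathbf{W}$ (with $F_2$ replaced by $W^{-1}F_2(W\otimes W)$), which is precisely what \cref{V inverse V} computes more carefully.
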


\begin{proof}
    The fact that when no resonance is present, $\mathbf{A}$ is diagonalizable has been noted in \cite{tsiligiannis1989normal}. For the eigenvalues obtained from each block $A_{j,j},$ since the eigenvectors in \cref{A_jj-eig} are linearly independent,  the resulting eigenvectors $\bm w$ constructed from previous Lemmas are linearly independent as well. Meanwhile, it is possible for two diagonal blocks, e.g., $A_{j,j}$ and $A_{m,m}$ for some $1<j<m$ to have common eigenvalues. But since the eigenvectors computed in the previous Lemma from $A_{j,j}$ have zero entries in the $m$th block, they are also linearly independent from those eigenvectors constructed from $A_{m,m}$. The rest of the Theorem follows from \cref{lem: wj-k}.
\end{proof}

{ Notice that we chose to use the $1$-norm in \cref{Vjk1}. This is because our previous analysis in \cref{lem: wj-k} showed a column-wise bound. Thus the matrix  $1$-norm provides a convenient bound without an explicit dimension dependence.   }
 A simple upper bound for this, using the inequality,
\( {n\choose k}  \leq \left(\frac{en}{k}\right)^k, \) 
is
\begin{equation}\label{bd-V_kj}
    \norm{V_{k,j}}_{{1}} \leq \left( \frac{2 e{\kappa_1(W)}\norm{ F_2}_{{1}}}{\Delta} \right)^{j-k}.
\end{equation}

{
\begin{corollary} \label{V inverse V}
 The matrix product $V_{k,k}^{-1}V_{k,j}$ has the following bound,
    \begin{equation}
        \norm{V_{k,k}^{-1}V_{k,j}}_1\le \left( \frac{2 e\kappa_1(W)\norm{F_2}_1}{\Delta} \right)^{j-k}.
    \end{equation}
\end{corollary}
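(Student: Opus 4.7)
The plan is to observe that this corollary is essentially immediate from the structure of $V$ that was just established in the preceding theorem. That theorem asserts that $V$, partitioned conformally with $\mathbf{A}$, is block upper triangular with diagonal blocks $V_{k,k}$ equal to the identity matrices of the appropriate dimensions. Therefore $V_{k,k}^{-1} = I$, and the product $V_{k,k}^{-1} V_{k,j}$ collapses to $V_{k,j}$ itself.

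Given this observation, I would simply invoke the already-established bound \eqref{bd-V_kj}, namely
\[
\norm{V_{k,j}}_1 \le \left( \frac{2 e \kappa_1(W) \norm{F_2}_1}{\Delta} \right)^{j-k},
\]
and conclude. Concretely the proof is one line: write $\norm{V_{k,k}^{-1}V_{k,j}}_1 = \norm{V_{k,j}}_1$ using $V_{k,k}=I$, then apply \eqref{bd-V_kj}.

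There is no real obstacle here, since the nontrivial content has been absorbed into the preceding theorem. If the authors wanted to prove a genuinely nontrivial version of this statement (e.g.\ retaining an arbitrary $V_{k,k}$ rather than normalizing it to the identity), the main work would be to estimate $\norm{V_{k,k}^{-1}}_1$ separately, which would involve tracking how the inverse of a block formed by tensor products of eigenvectors of $F_1$ behaves — this would introduce additional $\kappa_1(W)$ factors raised to the $k$-th power. But under the present normalization convention adopted in the theorem, no such work is needed, and the corollary is a direct restatement of \eqref{bd-V_kj}. The purpose of stating it separately is presumably to package the bound in the form that will be most convenient when \eqref{eq:zeta2eta} is invoked later in the error analysis.
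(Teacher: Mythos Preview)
Your argument is internally consistent with the literal wording of the preceding theorem, which does state that the diagonal blocks $V_{k,k}$ are identity matrices. Under that reading, the corollary is indeed an immediate restatement of \eqref{bd-V_kj}.

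However, the paper's own proof of the corollary works under the more general situation $V_{k,k}=W^{\otimes k}$ (so $V_{k,k}^{-1}=(W^{-1})^{\otimes k}$), which is what actually occurs when $F_1$ is not already diagonal. The theorem's phrase ``diagonal blocks being the identity matrices'' appears to be a remnant of the earlier simplifying assumption that $F_1$ had been diagonalized; the corollary and the remark following it make clear that the authors intend the general case. In that setting your one-line argument does not go through, and your final paragraph actually predicts the \emph{wrong} outcome: you say the general case ``would introduce additional $\kappa_1(W)$ factors raised to the $k$-th power.'' The whole point of the paper's proof is that this does \emph{not} happen.

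The mechanism the paper uses is to push $(W^{-1})^{\otimes k}$ through the tensor-product structure of each column $\bm w_{j-k}$ from \eqref{w_j-k}. Two observations make this work: (i) $W^{-1}\bm e_i$ is a standard unit vector, so those factors contribute norm $1$; and (ii) $W^{-1}G$ satisfies the \emph{same} bound $\|W^{-1}G\|_1\le \kappa_1(W)/\Delta$ as $G$ itself, because $W^{-1}G=\big((\lambda_{i_1}+\cdots+\lambda_{i_m})I-\Lambda\big)^{-1}W^{-1}$ and, under the normalization $\|\bm e_i\|_1=1$, one has $\|W^{-1}\|_1=\kappa_1(W)$. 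Thus the column bound for $(W^{-1})^{\otimes k}\bm w_{j-k}$ is identical to that for $\bm w_{j-k}$, and no $\kappa_1(W)^k$ appears. The paper emphasizes this immediately after the proof: a direct submultiplicative estimate would incur that extra factor and is to be avoided. So while your shortcut is valid under the stated (but apparently imprecise) normalization $V_{k,k}=I$, you have missed the actual content the authors are proving here.
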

\begin{proof}
Since $F_1W=\Lambda W$, we know 
\begin{equation}
    \begin{aligned}
    ((\l_{i_1}+\cdots+\l_{i_m})I-F_1)W &= G^{-1}W = W(((\l_{i_1}+\cdots+\l_{i_m})-\Lambda)\\
    \Longrightarrow W^{-1}G &= (((\l_{i_1}+\cdots+\l_{i_m})-\Lambda)^{-1}W^{-1}.
    \end{aligned}
\end{equation}
Since we chose the normalization $\norm{\bm e_i}=1$,
$\kappa_1(W)=\norm{W}_1\norm{W^{-1}}_1=\norm{W^{-1}}_1$. Therefore, for any matrix $G$ defined in \cref{Gmat},
\begin{equation}\label{W-1G}
    \norm{W^{-1}G}\le \frac{\kappa_1(W)}{\Delta}.
\end{equation}
Furthermore, in light of $V_{k,k}^{-1}=\left(W^{-1}\right)^{\otimes k}$, every column of $V_{k,k}^{-1}V_{k,j}$ takes the form of,
\begin{equation} \label{W inverse w}
    \left(W^{-1}\right)^{\otimes k}\bm w_{j-k}= \sum_{k_1+\cdots+k_j=k}\sum_{m_1,\cdots, m_j}\mathcal{L}\left((W^{-1})G_{p_1}\bxi_{m_1}^{(k_1)},\cdots,(W^{-1})G_{p_j}\bxi_{m_j}^{(k_j)}\right) .
\end{equation}
The implicit terms not shown in \eqref{W inverse w}, $W^{-1}\bm e_{i}$, is the standard unit vector with $1$ on $i$-th element. Since $W^{-1}G$ share same $\ell_1$ norm upper bound with $G$, we have
\begin{equation}
    \norm{V_{k,k}^{-1}V_{k,j}}_1=\norm{(W^{-1})^{\otimes k}\bm w_{j-k}}_1\le  \left( \frac{2 e\kappa_1(W)\norm{F_2}_1}{\Delta} \right)^{j-k}.
\end{equation}

\end{proof}
It is worthwhile to emphasize that a direct estimate for \cref{bd-V_kj} using multiplicative property will incur another $\kappa_1(W)^k$ term, which will cause a significant over estimation, especially when $k$ is close to $N$.  
}

We now proceed to estimate $\bm \zeta$ in \cref{eta-zeta}. Recall that $V$ is an upper tringular block matrix and
\begin{equation}
     \bm \zeta(t):=  V^{-1}  \mat{0\\  \vdots \\ 0 \\ A_{N,N+1} \bm x^{\otimes(N+1)}(t)} = \mat{ \bm \zeta_1 \\ \vdots \\  \bm \zeta_{N-1}\\ \bm  \zeta_N}.
\end{equation}

{
\begin{lemma}
     $\bm \zeta$ can be partitioned into blocks $\bm \zeta=(\bm \zeta_1, \cdots,  \bm \zeta_N)^T$ according to the Carleman linearization,
    and the blocks obey the following the bounds
    \begin{equation}
        \norm{\bm \zeta_j}_1 \leq  \left( \frac{4 e\kappa_1(W)\norm{ F_2}_1}{\Delta} \right)^{N-j}\norm{\bm\zeta_N}_1.
    \end{equation}
\end{lemma}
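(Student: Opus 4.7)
My plan is to exploit the block upper triangular structure of $V$ and perform a backward block substitution, then convert the resulting recursion into the claimed geometric bound by induction. By the definition of $\bm\zeta$, we have $V\bm\zeta = \bm r$ where $\bm r=(0,\cdots,0,A_{N,N+1}\bm x^{\otimes(N+1)})^T$ has only its last block nonzero. Since the diagonal blocks $V_{jj}=W^{\otimes j}$ are invertible, the $j$-th block equation reads $V_{jj}\bm\zeta_j + \sum_{k=j+1}^{N}V_{j,k}\bm\zeta_k = 0$ for every $j<N$, so back-substitution yields
\begin{equation*}
\bm\zeta_j = -\,V_{jj}^{-1}\sum_{k=j+1}^{N} V_{j,k}\,\bm\zeta_k.
\end{equation*}

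Next I would take $\ell_1$ norms and invoke \cref{V inverse V}, which gives $\norm{V_{jj}^{-1}V_{j,k}}_1 \le \rho^{\,k-j}$ with $\rho := 2e\kappa_1(W)\norm{F_2}_1/\Delta$. This reduces the estimate to the scalar recursion
\begin{equation*}
\norm{\bm\zeta_j}_1 \le \sum_{k=j+1}^{N}\rho^{\,k-j}\,\norm{\bm\zeta_k}_1 .
\end{equation*}
The target bound is $\norm{\bm\zeta_j}_1\le (2\rho)^{N-j}\norm{\bm\zeta_N}_1$, which is exactly the factor $(4e\kappa_1(W)\norm{F_2}_1/\Delta)^{N-j}$ appearing in the statement.

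Finally, I would establish the bound by downward induction on $j$. The base case $j=N$ is trivial. Assuming the bound for all indices greater than $j$, the recursion gives
\begin{equation*}
\norm{\bm\zeta_j}_1 \le \sum_{k=j+1}^{N}\rho^{\,k-j}\,(2\rho)^{N-k}\norm{\bm\zeta_N}_1
= \rho^{\,N-j}\,\norm{\bm\zeta_N}_1\sum_{k=j+1}^{N} 2^{\,N-k}
\le \rho^{\,N-j}\,(2^{N-j}-1)\norm{\bm\zeta_N}_1 < (2\rho)^{N-j}\norm{\bm\zeta_N}_1,
\end{equation*}
which closes the induction. The only mildly delicate point, and the main obstacle to the argument, is ensuring that the geometric sum $\sum_{k=j+1}^{N}2^{N-k}$ is absorbed into the factor of $2$ in the base of the final estimate without accumulating any extra polynomial or dimension-dependent prefactor; that is precisely why the Corollary \ref{V inverse V} style bound (free of the $\kappa_1(W)^k$ blow-up noted after \eqref{bd-V_kj}) is essential rather than the naive bound \eqref{bd-V_kj} applied together with $\norm{V_{jj}^{-1}}_1$.
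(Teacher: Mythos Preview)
Your argument is correct and is essentially the same as the paper's: both start from the block back-substitution $\bm\zeta_j=-V_{jj}^{-1}\sum_{k>j}V_{j,k}\bm\zeta_k$, both invoke \cref{V inverse V} to get the $\rho^{k-j}$ bound, and both pick up the extra factor of $2$ from the same geometric/subset count $\sum_{k=j+1}^N 2^{N-k}=2^{N-j}-1$. The only cosmetic difference is that the paper unrolls the recursion completely into a sum over chains $j<j_1<\cdots<j_r<N$ and bounds the number of chains by $2^{N-j}$, whereas you package the same combinatorics as a downward induction; your presentation is arguably cleaner.
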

\begin{proof}
    By back substitution and the triangle inequality, we have,
    \begin{equation}
       \norm{\bm \zeta_j}_1 \le \sum_{j< j_1 <j_2 < \cdots < j_r< N } \norm{V_{j,j}^{-1}V_{j,j_1}V_{j_1,j_1}^{-1}V_{j_1,j_2}V_{j_2,j_2}^{-1} \cdots V_{j_{r},\ell} \bm  \zeta_N}_1.
    \end{equation}
    Furthermore, we observe that 
    \begin{equation}\label{summation subset}
        \sum_{j< j_1 <j_2 < \cdots < j_r< N } 1 = 2^{N-j-1}-1\le 2^{N-j}.
    \end{equation}
    This follows from the fact that $\sum_{j< j_1 <j_2 < \cdots < j_r< N } 1$ is equivalent to the number of subsets of $\{j+1,\cdots,N-1\}$ excluding the empty set. 
    Using the bound in \cref{V inverse V} and \eqref{summation subset}, we find that
    \begin{equation}
        \norm{\bm \zeta_j}_1\le \sum_{j< j_1 <j_2 < \cdots < j_r< N}\left(\frac{2e\kappa_1(W)\norm{F_2}_1}{\Delta}\right)^{N-j}\norm{\bm\zeta_N}_1\le \left(\frac{4e\kappa_1(W)\norm{F_2}_1}{\Delta}\right)^{N-j}\norm{\bm\zeta_N}_1.
    \end{equation}
    Notice that
    \begin{equation}
        \bm\zeta_N = V_{N,N}^{-1}A_{N,N+1}\bm x^{\otimes(N+1)}. 
    \end{equation}
    We can then bound  $\bm\zeta_N$ by
    \begin{equation}
        \norm{\bm\zeta_N}_1\le N\kappa_1(W)^N\norm{F_2}_1 \mu^{N+1}. 
    \end{equation}
    
\end{proof}

}

As a result, $\forall t\geq 0,$ we have from \eqref{eq:zeta2eta},  {
\begin{equation}
 \begin{aligned}
         \norm{\bm \eta_1(t)}_1 &\leq  \int_0^t\left(\norm{V_{1,1}}_1\norm{\bm \zeta_1}_1 +  \norm{V_{1,2}}_1\norm{\bm \zeta_2}_1 + \cdots +   \norm{V_{1,N}}_1\norm{\bm \zeta_N}_1\right)d\tau\\
         & \leq t\left( \sum_{j=1}^{N-1}  \left(\frac{2e\kappa_1(W)\norm{F_2}_1}{\Delta}\right)^{j-1}\left(\frac{4e\kappa_1(W)\norm{F_2}_1}{\Delta}\right)^{N-j}\norm{\bm\zeta_N}_1+\norm{V_{1,N}}_1\norm{\bm\zeta_N}_1\right)\\
         &\leq t\sum_{j=1}^N \left(\frac{4e\kappa_1(W)\norm{F_2}_1}{\Delta}\right)^{N-1}\norm{\bm\zeta_N}_1\\
         &\le tN\kappa_1(W)\norm{F_2}_1\mu^2\left(\frac{4e\kappa_1(W)^2\norm{F_2}_1\mu}{\Delta}\right)^{N-1}.
 \end{aligned}
\end{equation}
}

Collecting these results, we state  the main theorem, {
\begin{theorem}
     Assume that the solution of the nonlinear ODE {\cref{eq:1.1}} satisfies a uniform bound,
    \begin{equation}
        \norm{\bm x(t)}_1 \leq \mu, \; \forall t\in [0,T],
    \end{equation}
{and $F_1$ satisfies \cref{assump1}}. Then the error in the Carleman linearization can be bounded by
\begin{equation}\label{1-norm-bound}
    \norm{\bm x(T) - \bm y_1(T) }_1 \leq N CT \mathsf{R}_r^{N-1},
\end{equation}
where 
\begin{equation}
    C\coloneqq\kappa_1(W)\norm{F_2}_1\mu^2,\quad \mathsf{R}_r:= \frac{ 4e\mu \kappa_1(W)\norm{F_2}_1}{\Delta}.
\end{equation}
\end{theorem}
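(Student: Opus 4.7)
My plan is to assemble the preceding lemmas into the final error bound via a variation-of-constants representation in the diagonalized basis. Starting from the inhomogeneous error equation
\[
\frac{d}{dt}\bm\eta(t) = \mathbf{A}\bm\eta(t) + \bm g(t), \qquad \bm g(t) := (0,\dots,0, A_{N,N+1}\bm x^{\otimes(N+1)}(t))^T,
\]
and the diagonalization $\mathbf{A} = VDV^{-1}$ (available under Assumption \ref{assump1}), I would write the solution with $\bm\eta(0)=0$ as
\[
\bm\eta(t) = \int_0^t V e^{\tau D}\, V^{-1}\bm g(t-\tau)\, d\tau = \int_0^t V e^{\tau D} \bm\zeta(t-\tau)\, d\tau,
\]
which is precisely \eqref{eta-zeta}.

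The next step is to extract the first block. Because $V$ is block upper triangular, $\bm\eta_1(t) = \int_0^t \sum_{j=1}^N V_{1,j}\, e^{\tau D_j}\, \bm\zeta_j(t-\tau)\, d\tau$, where $D_j$ is the diagonal block of $D$ corresponding to the $j$th level. Here is where the no-resonance assumption pays off in a qualitatively different way from the dissipative case: by \cref{lem:1}, every eigenvalue of $D_j$ has non-positive real part, so $\|e^{\tau D_j}\|_1 \leq 1$ uniformly in $\tau\geq 0$. This removes the $t^N$ growth that appeared in the nested-integral estimate \eqref{err-eta1}, collapsing the time dependence to a single factor of $t$. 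The triangle inequality and submultiplicativity of the induced $\ell_1$ norm then give
\[
\|\bm\eta_1(t)\|_1 \leq t \sum_{j=1}^N \|V_{1,j}\|_1\, \|\bm\zeta_j\|_1.
\]

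Now I would plug in the two bounds already developed: the eigenvector estimate $\|V_{1,j}\|_1 \leq (2e\kappa_1(W)\|F_2\|_1/\Delta)^{j-1}$ from \eqref{bd-V_kj}, and the $\bm\zeta$-estimate $\|\bm\zeta_j\|_1 \leq (4e\kappa_1(W)\|F_2\|_1/\Delta)^{N-j}\|\bm\zeta_N\|_1$ from the lemma preceding the theorem, together with $\|\bm\zeta_N\|_1 \leq N\kappa_1(W)\|F_2\|_1\mu^{N+1}$. Since $2e\leq 4e$, each of the $N$ summands is majorized by $(4e\kappa_1(W)\|F_2\|_1\mu/\Delta)^{N-1}\cdot N\kappa_1(W)\|F_2\|_1\mu^2$, and combining with the leading $t$ and recognizing the definitions of $C$ and $\mathsf{R}_r$ yields the claimed bound.

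The genuinely hard work has already been done in the preceding section: the Fredholm-alternative argument that produces the structural formula \eqref{w_j-k} for the eigenvectors, the Catalan recursion in \cref{lem: xik}, and the combinatorial bound in \cref{lem: wj-k}. At the level of this collecting theorem, the only subtle point I would pay attention to is avoiding an extraneous $\kappa_1(W)^N$ factor when passing from $\bm g$ to $\bm\zeta = V^{-1}\bm g$; this is precisely what \cref{V inverse V} provides by bounding $V_{k,k}^{-1}V_{k,j}$ as a whole rather than splitting it into $\|V_{k,k}^{-1}\|\,\|V_{k,j}\|$. With that ingredient in hand, the remainder is bookkeeping.
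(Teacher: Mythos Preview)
Your proposal is correct and follows essentially the same route as the paper: represent $\bm\eta$ via \eqref{eta-zeta}, use $\|e^{\tau D}\|_1\le 1$ to collapse the time integral to a single factor of $t$, bound $\|V_{1,j}\|_1$ by \eqref{bd-V_kj} and $\|\bm\zeta_j\|_1$ by the preceding lemma (whose proof relies on \cref{V inverse V} exactly as you note), and then majorize each of the $N$ summands by the common quantity $\mathsf{R}_r^{N-1}\|\bm\zeta_N\|_1$. The paper's displayed chain of inequalities just before the theorem is precisely this computation.
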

Notice that since $\norm{\bullet}_2 \leq \norm{\bullet}_1$, the bound in \cref{1-norm-bound} implies that $\norm{\bm x(T) - \bm y_1(T) }_2\leq N CT \mathsf{R}_r^{N-1}.$
}

\begin{remark}
    For Hamiltonian systems, the eigenvalues of $F_1$ are purely imaginary, i.e., $\pm i\omega_j $. Then no-resonance condition \eqref{cond-res} implies that any two diagonal blocks, $A_{j,j}$ and $A_{k,k}$ with $k>j$, do not share eigenvalues. Therefore, it is not necessary to involve Fredholm alternative condition in the proof.
\end{remark}

\begin{remark}\label{rem:PBC}
Some ODEs come with zero eigenvalues that correspond to translational symmetry or the application of the periodic boundary conditions. This implies that $\bm v^T \bm x(t)$ is a first integral for some vector $\bm v$, implying that,
\begin{equation}\label{first-integral}
    \bm v\in \left( \text{Range}(F_1)\right)^\perp, \quad \bm v\in \left( \text{Range}(F_2)\right)^\perp.
\end{equation}
   
Although the presence of the zero eigenvalues obviously leads to a resonance, i.e., $\lambda_i = \lambda_i + m\times 0$ for any $\lambda_i$ and any positive integer $m$, these resonance modes can be easily separated, { assuming we know the vector $\bm v$, which is often the case.}  For example, when $\sum_{i=1}^n x_i $ is a first integral, then one can apply a coordinate transformation and reduce the ODEs to an $(n-1)$-dimensional ODE system, while still in the form \eqref{eq:1.1}, i.e.,
\begin{equation}\label{ode-x-tilde}
    \frac{d}{dt}  \widetilde{\bm x} = \widetilde{F}_1 \widetilde{\bm x} + \widetilde{F}_2 \widetilde{\bm x} \otimes \widetilde{\bm x}.
\end{equation}
In addition, one can use \eqref{first-integral} and show that the Carleman system \cref{y-car} is exactly equivalent to the Carleman system derived from \cref{ode-x-tilde}.
As a result, such zero eigenvalues can be removed from the error analysis.
\end{remark}

\begin{remark}
    When the no-resonance condition \eqref{cond-res} is violated, the analysis has to be modified accordingly. First, when the resonance occurs at an order higher than $N$, i.e., $\sum m_j >N$, then the current analysis still applies straightforwardly.
    Secondly, the presence of the resonant modes within the truncation order $N$ might introduce repeated eigenvalues with geometric multiplicity greater than $m>1$. As a result, the matrix exponential in \cref{eta-zeta} will introduce a polynomial growth $O(t^m)$. Therefore, if $m \ll N$, {  it is still possible to get a convergence that only has logarithmic dependence on $T$. }   
\end{remark}

\section{Numerical Results}\label{sec: num}

\subsection{Carleman linearization for { viscous} Burgers' equation}

Our first numerical test is  performed on the Burgers' equation,  a standard example for a nonlinear PDE
\begin{align}
     u_t + cu u_x   = u_{xx}, \quad x \in [-\frac12, \frac12 ].  \label{bphyequ}
\end{align}
We introduce a parameter $c>0$ here to test the effect of the nonlinearity. By assuming a homogeneous Dirichlet boundary condition, and applying an upwind scheme (e.g., when $u\geq 0$) \cite{leveque1992numerical}, we can reduce the PDE to a nonlinear ODE system, with grid size  $\Delta x = \frac{1}{n}$, 
\begin{equation} \label{bnumcal}
    \begin{aligned}
    \frac{d}{dt}{u}_{j}=& -c\frac{u_{j}^{2}-u_{j-1}^{2}}{2\Delta x} + \frac{u_{j+1}-2u_{j}+u_{j-1}}{\Delta x^{2}}, \quad j=1, 2, \cdots, n  \quad 
u_{0}=&u_{n+1}=0.
\end{aligned}
\end{equation}

As a result of this semi-discrete approximation, we arrive at a nonlinear ODE system in the form of \eqref{eq:1.1}. In particular, $F_1$ is the standard tri-diagonal matrix, {with resonance paramteter $\Delta = 0.1497$}. In addition, $F_2$  is 1-sparse with only the following non-zero entry.

\[ \left(F_{2}\right)_{i,(i,i)}=-\frac{c}{2\Delta x}\quad \text{for all }1\le i\le n, \quad  \left(F_{2}\right)_{i, (i-1,i-1)}=\frac{c}{2\Delta x}\quad \text{for all }2\le i\le n. \]

In the numerical tests, we initialize the Burgers' equation by a bell-shaped profile,
\begin{align}
 u(x,0)=-\tan^{-1}{\left(20(x-\frac{1}{4})\right)}+\tan^{-1}{\left(20(x+\frac{1}{4})\right)}. \label{sim_initial}
 \end{align}
The solution at later times is displayed in \cref{BUR_mesh}, which shows decay due to the smoothing from the viscosity term. 
\begin{figure}[htbp]
	\centering
		\includegraphics[width=0.56\linewidth]{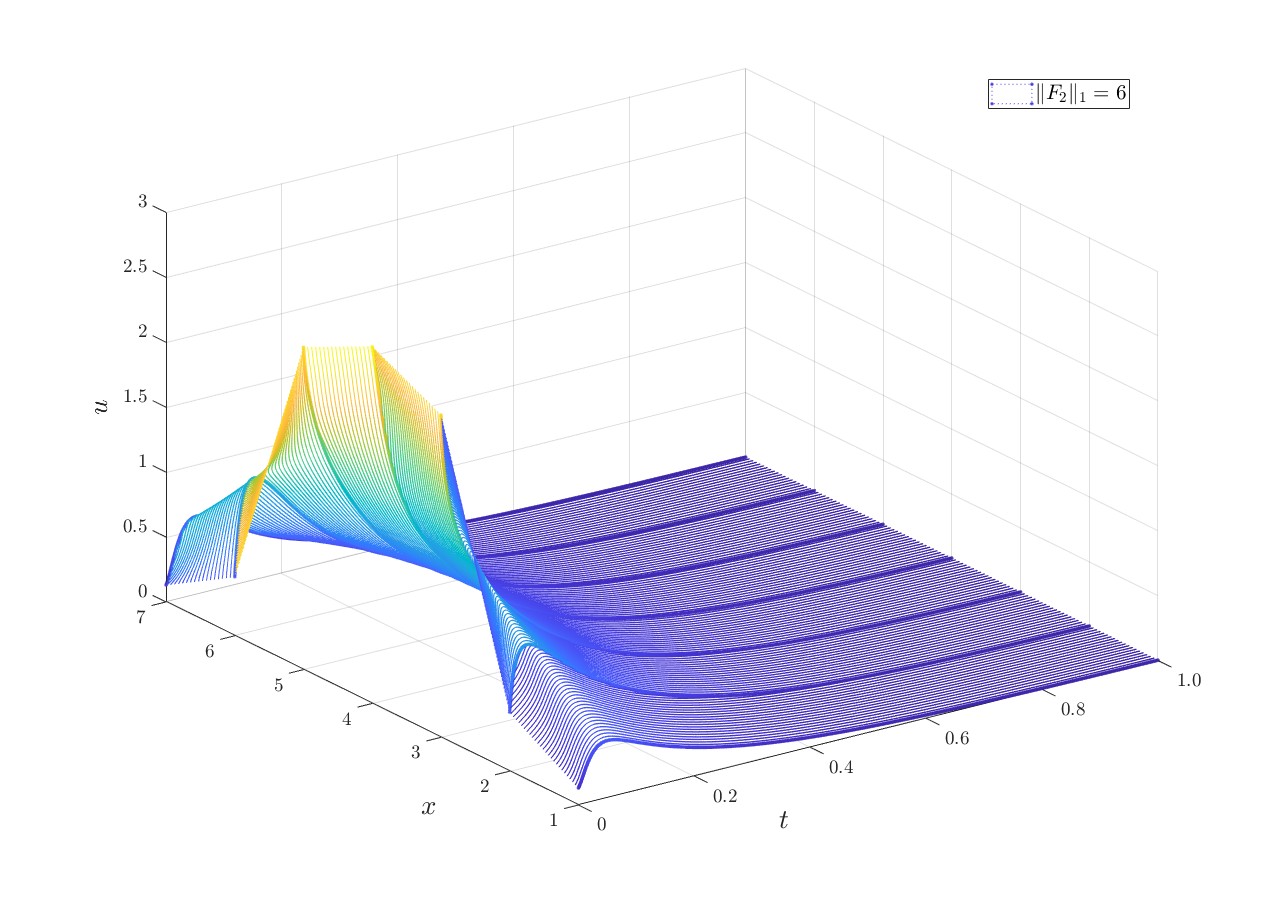}
		\caption{Solution of the Burgers' equation \eqref{bphyequ} in the time interval $[0,1]$. The solutions are obtained using the 4th-order Runge-Kutta methods applied to the nonlinear ODEs \eqref{bnumcal}. }
		\label{BUR_mesh}
\end{figure}

We first pick $n=7,$ which is a relatively small system compared to the numerical tests in \cite{liu2021efficient}. But we extend the Carleman linearization \eqref{matrix-car} to much higher levels up to $N=8$.  Our first test focuses on the effect of the level of nonlinearity, indicated by $\norm{F_{2}}$. This is confirmed by the numerical results in \cref{BURF2}: weaker nonlinearity induces smaller truncation error, and when the norm of ${F_{2}}$ is too large, the approximation from the Carleman linearization begins to diverge. For the setup of our numerical experiments, the parameter $\mathsf{R}_d\approx0.8223\norm{F_{2}}_{2}$ {and $\mathsf{R}_r\approx 1261\norm{F_2}_{1}$} can be much greater than 1 in the numerical experiments.  Nevertheless, the Carleman linearization still converges, which was also observed in  \cite{liu2021efficient}. This could be attributed to the smoothing effect, which can be seen from \cref{BUR_mesh}: the solution decays over time, which in light of \eqref{err-eta1}, can make the error much smaller. Another option to reduce the parameter $\mathsf{R}_d$ is to adjust the real part of the eigenvalues of $F_1$. Toward this end, we shift the eigenvalues of $F_1$ toward zero by adding $\beta I$ to $F_1$. By choosing $\beta=5$, the eigenvalue with the smallest absolute value is $\lambda=-0.48$s  {The corresponding resonance parameter $\Delta=0.4287$ and $\mathsf{R}_r\approx 440.1\norm{F_2}_{1}$.} We can see from the right panel in \cref{BURF2} that the  error from Carleman linearization begins to diverge when $\norm{F_2}_{1}=12$. On one hand, this highlights the importance of the dissipative condition \eqref{eq: dissip}. On the other hand, it also shows that Carleman linearization still converges when an eigenvalue is very close to zero. 

\begin{figure}[htbp]
\centering
		\includegraphics[width=0.46\linewidth]{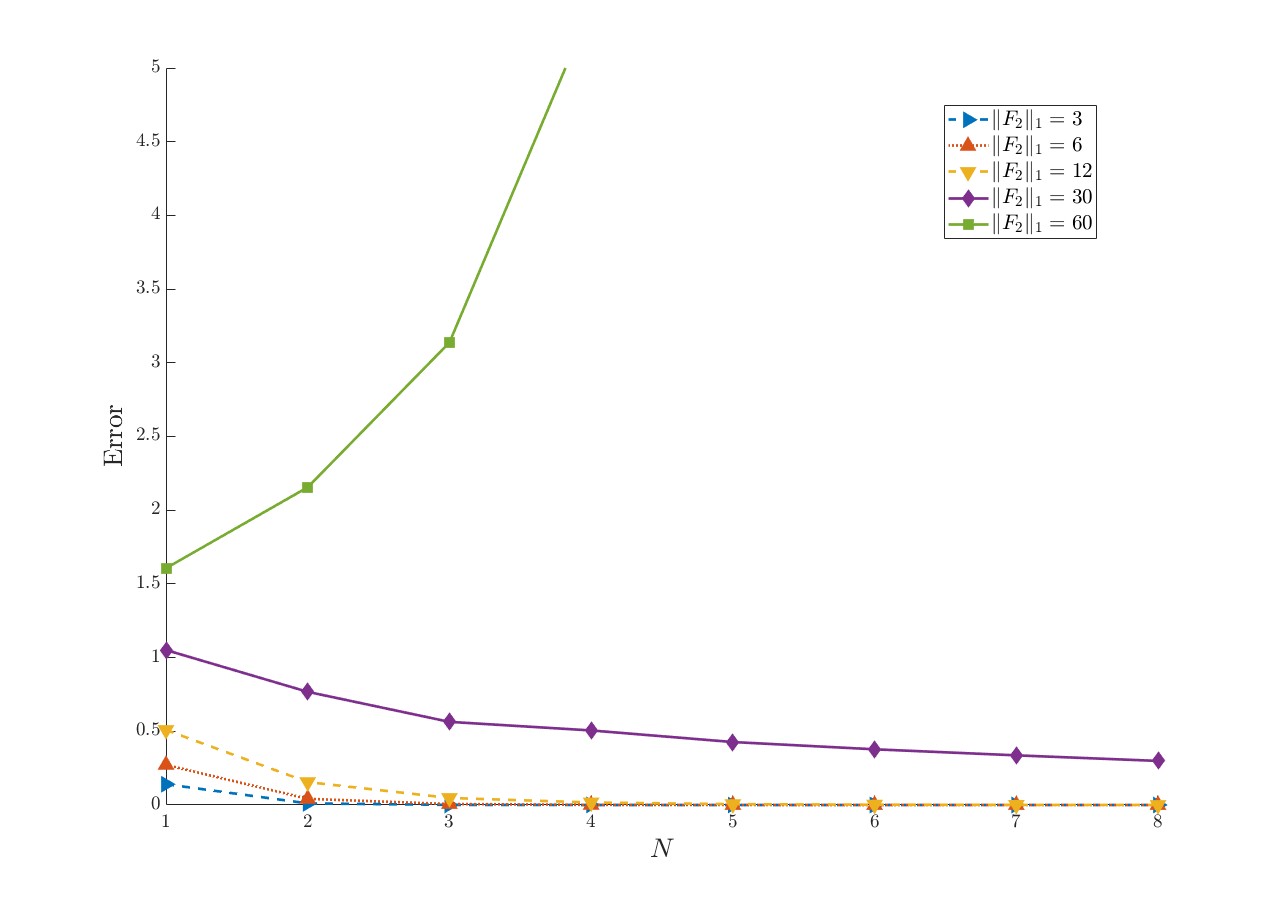}
        \includegraphics[width=0.46\linewidth]{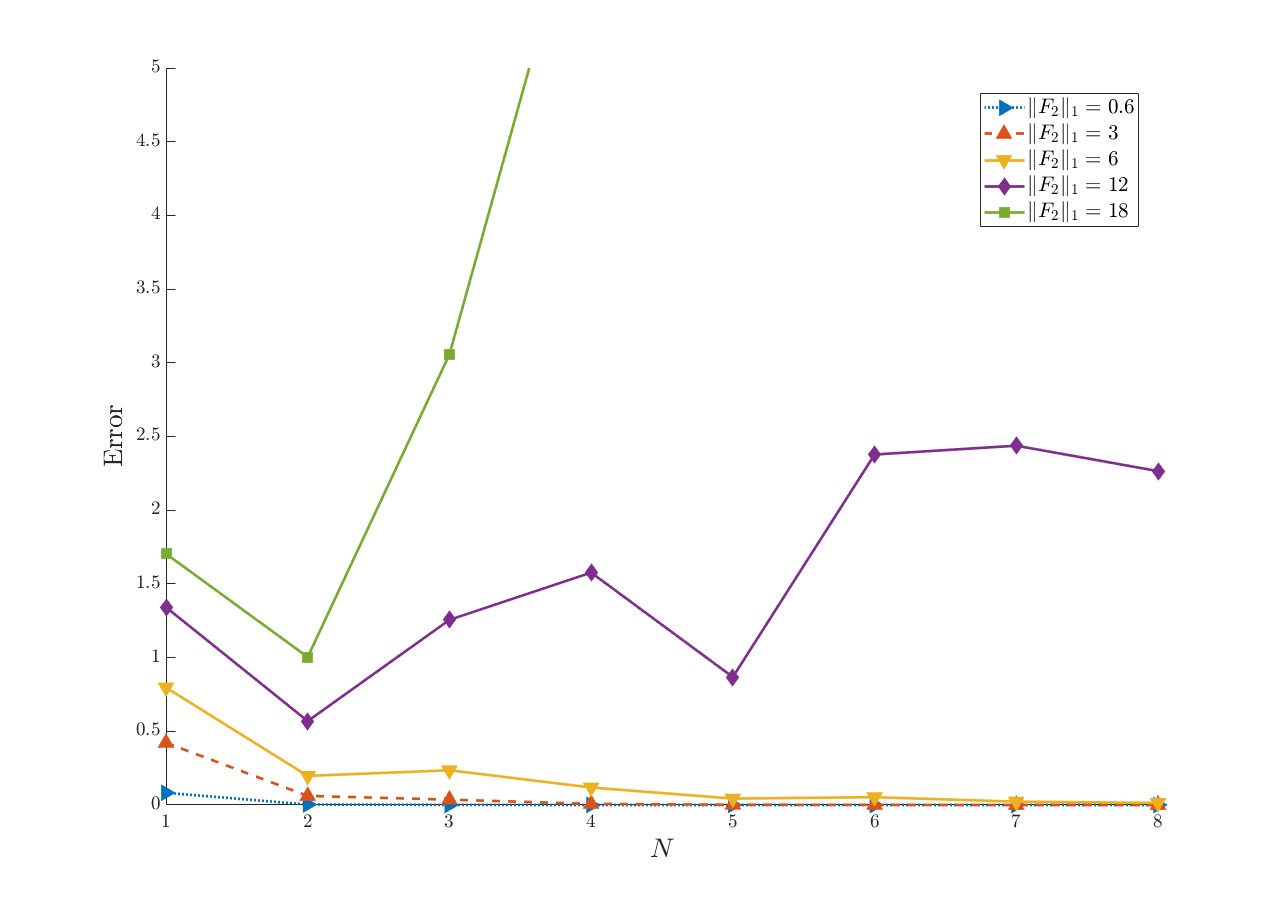}
		\caption{Left: The Carleman linearization error for the Burgers' equation in the time interval $[0,1]$  for different trunction level $N \leq 8$  with different level of nonlinearity, indicated by $\norm{F_{2}}_{1}$, which is controlled by changing $c$ in \cref{bnumcal}; Right:  We add a  $\beta I$  term to $F_1$ in \cref{bnumcal} so that the leading eigenvalue is close to zero. }
		\label{BURF2}
	\end{figure}

The analysis in \cite{forets2017explicit,liu2021efficient} (also see \cref{err-eta1}) suggests a small error for a short time $T$. To test this, we fix  $\norm{F_{2}}_{1}=6$ and examine the error of Carleman linearization using $L^\infty$ norm in time. 
We observe from \cref{BURT} that for shorter times, the error is indeed small. However, the error begins to saturate for longer time intervals. 

\begin{figure}[htbp]
		\centering
		\includegraphics[width=0.56\linewidth]{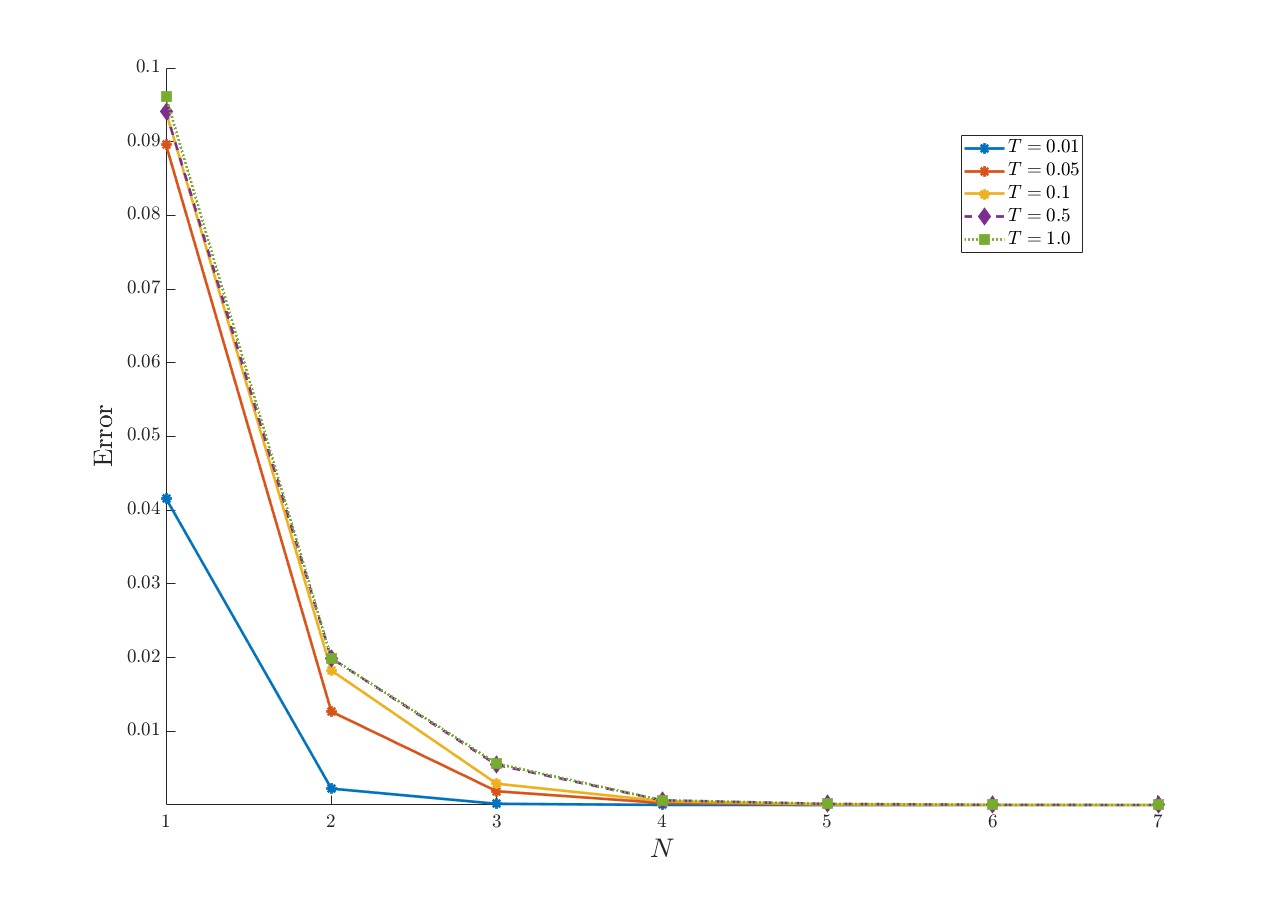}
		\caption{The Carleman linearization error for the Burgers' equation for different truncation levels  $N$ and different time durations $T$. Here we set $\norm{F_{2}}_{1}=6$.
  }		\label{BURT}
	\end{figure}

Finally, we consider a larger ODE system from \eqref{bphyequ} using a finer spatial discretization with $n=31$. The corresponding parameter is $\mathsf{R}_d=1.145\norm{F_{2}}_2$ { and $\Delta = 1.802 \times 10^{-2},\mathsf{R}_r\approx 42920\norm{F_2}_{1}$}. By comparing the previous results in \cref{BURF2} with discretization $n=7$ with results shown  \cref{BURF2_modnx}, we can see the point for $\norm{F_{2}}$ when the error begins to diverge is larger.

 \begin{figure}[htbp]
\centering
		\includegraphics[width=0.56\linewidth]{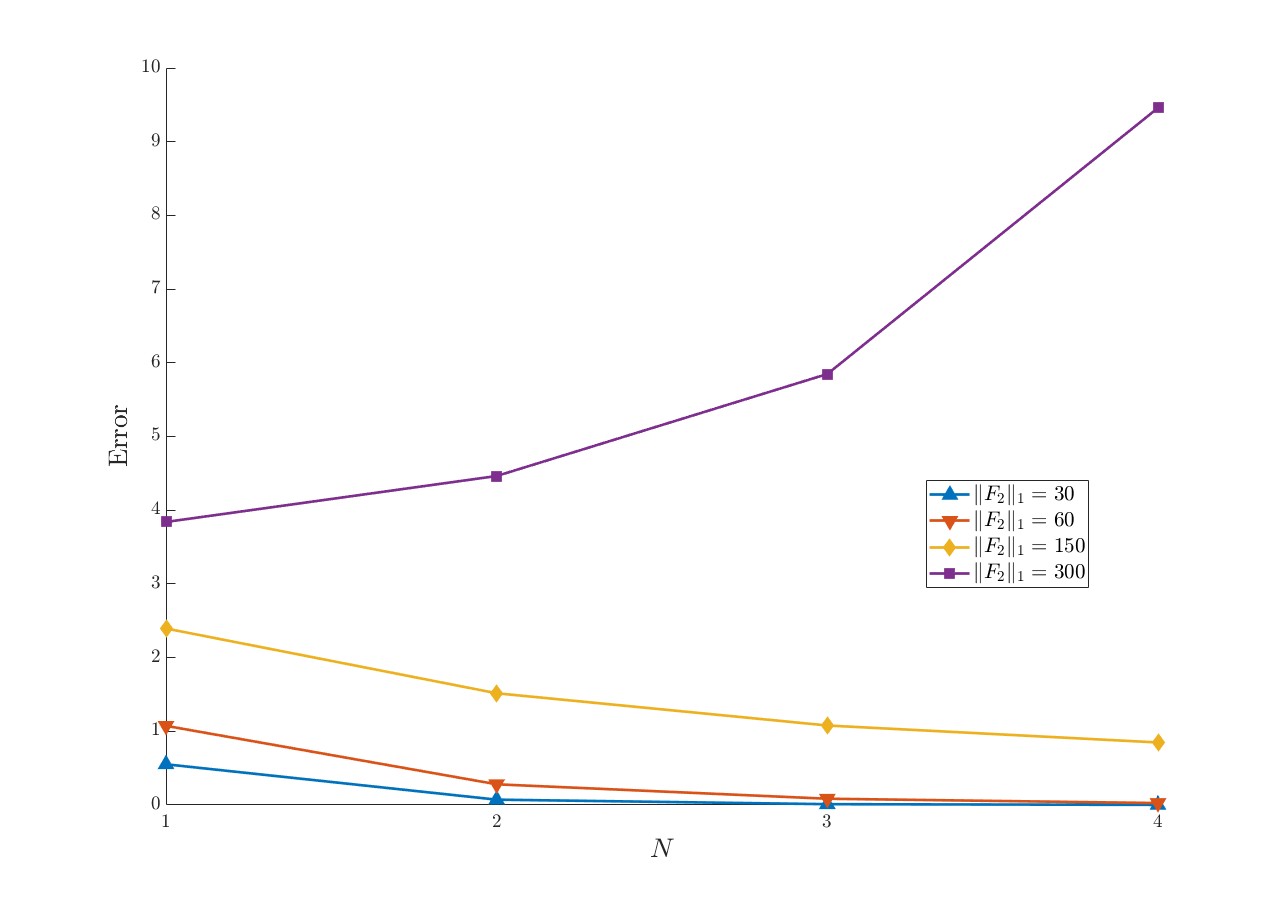}
		\caption{The Carleman linearization error for the Burgers' equation in the time interval $[0,0.5]$  for different trunction level $N \leq 4$  with different level of nonlinearity, indicated by $\norm{F_2}$, which is controlled by changing $c$ in \cref{bnumcal}. Here we set $n=31$. }
		\label{BURF2_modnx}
	\end{figure}

\subsection{Carleman linearization for the KdV equation} 

In this section, we provide numerical results from simulating the KdV equation, 
\begin{align}
    u_t + cu u_x  = -u_{xxx}, \quad x\in (-\frac12, \frac12), \quad t>0.  \label{kdvequ}
\end{align}
In contrast to the dissipation in the Burgers' equation \cref{bphyequ}, the KdV equation involves a third-order derivative, introducing a dispersive regularization \cite{lax1983small}.  To convert the PDE to the nonlinear ODE system,  we use the following discretization,
\begin{align}
    \frac{d}{dt} {u}_{j}= -c\frac{u_{j}^{2}-u_{j-1}^{2}}{2\Delta x} - \frac{u_{j+2}-2u_{j+1}+2u_{j-1}-u_{j-2}}{2\Delta x^{3}}.\label{knumcal}
\end{align}

We employ periodic boundary conditions. Notice that \cref{knumcal} is no longer dissipative. The eigenvalues of $F_{1}$ are purely imaginary and the criterion using $\mathsf{R}_d$ in \cite{liu2021efficient} does not apply.  {In light of \cref{rem:PBC}, we neglected the zero eigenvalues when computing the resonance parameter $\Delta $ in \cref{Delta}, and the resulting value is $\Delta = 9.860$.}

Similar to the setup of the numerical experiments for the Burgers' equation, we begin by selecting a modest system size of $n=7$, focusing on the Carleman linearization \eqref{matrix-car} up to level $N=8$. The solution to the KdV equation, initialized via \cref{sim_initial}, is depicted in \cref{KDV_Mesh}, where one can observe a soliton-like propagation through the system within the time interval $[0,0.1]$ and under the periodic boundary condition. In particular, unlike the solution of the Burgers' equation in \cref{BUR_mesh}, the solution of the KdV equation does not exhibit a decay. 

\begin{figure}[htbp]
	\centering
		\includegraphics[width=0.56\linewidth]{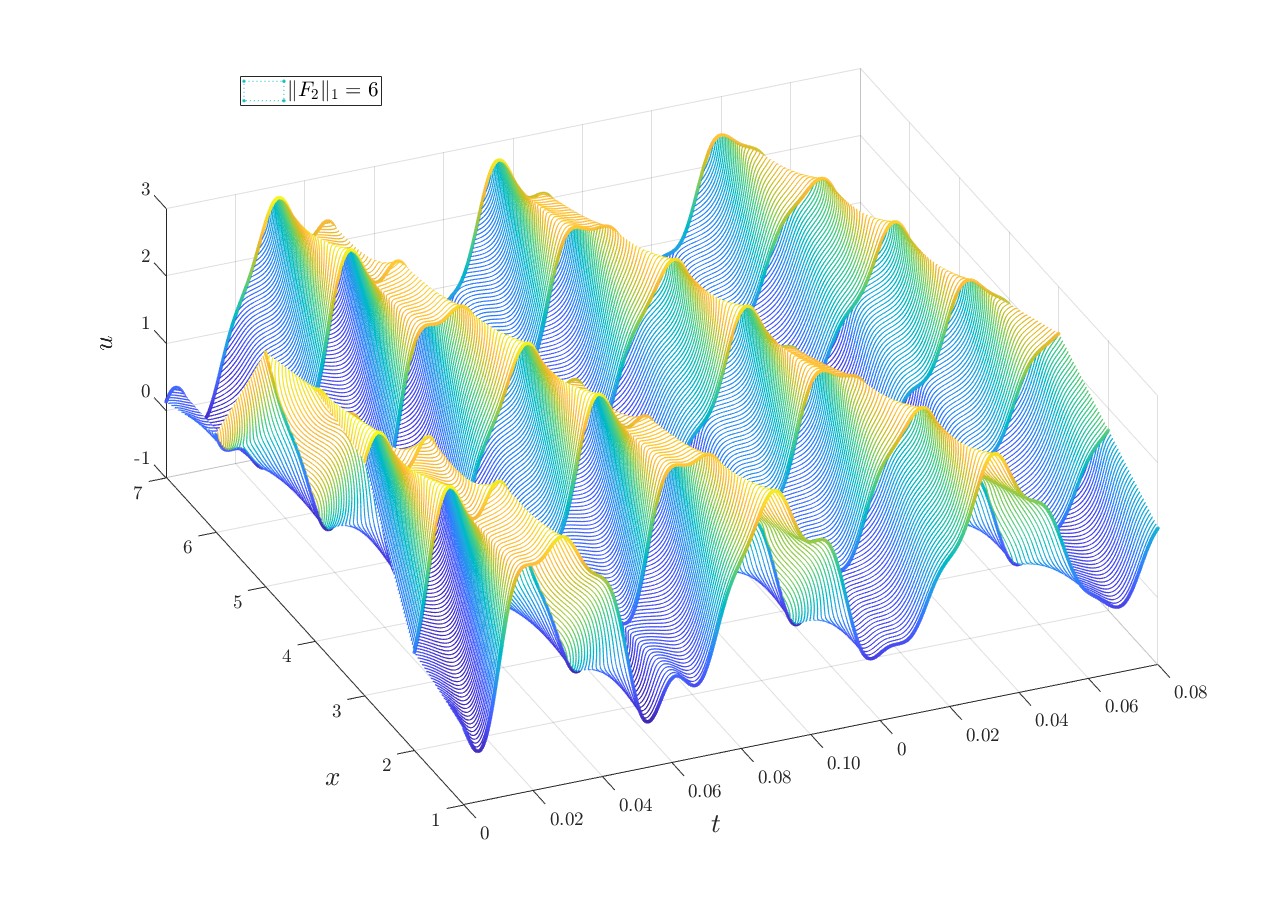}
		\caption{Solution of the KdV \cref{kdvequ} in the time interval $[0,0.1]$. The solutions are obtained by using the 4th-order Runge-Kutta methods applied to the nonlinear ODEs \eqref{knumcal}. } 
		\label{KDV_Mesh}
\end{figure}

Next, we investigate the error due to the nonlinearity for various simulation durations, quantified by the $L^\infty$ norm over time. By tuning the parameter $c$ in \cref{kdvequ}, we can adjust the level of the nonlinearity, and the results are illustrated in \cref{KDVF2}. It is clear that even without the dissipative condition \eqref{eq: dissip}, the Carleman linearization still convergences when $\norm{F_{2}}$ is sufficiently small. To compare to the analysis in \cref{thm:main}, we calculated the parameter {$\mathsf{R}_r=21.20\norm{F_{2}}_{1}$}. Due to the lack of dissipation, we attribute the observed convergence to the resonance condition \eqref{cond-res}. It is also interesting the convergence can still occur when $\mathsf{R}_r>1.$

\begin{figure}[htbp]
		\centering
		\includegraphics[width=0.465\linewidth]{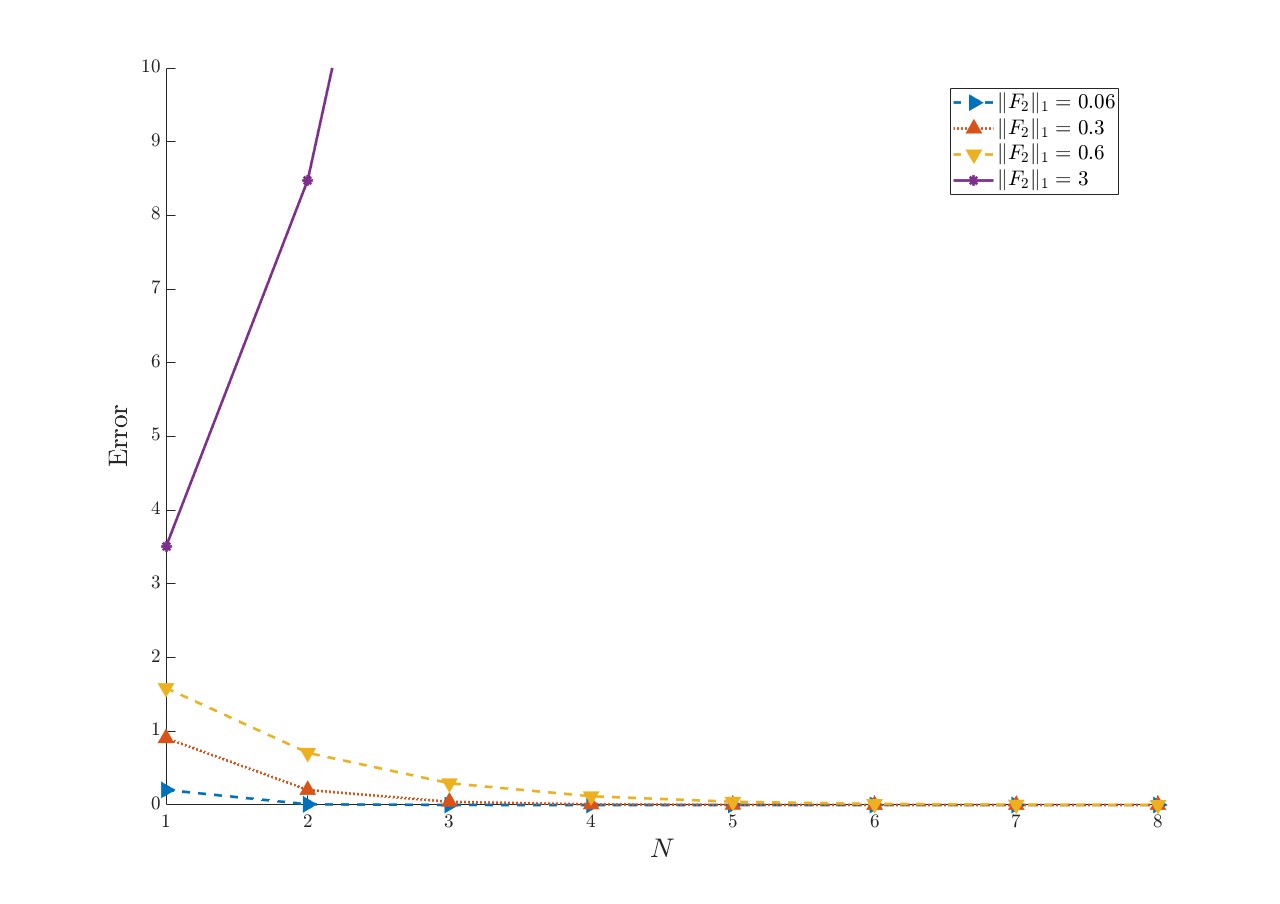}
  \includegraphics[width=0.465\linewidth]{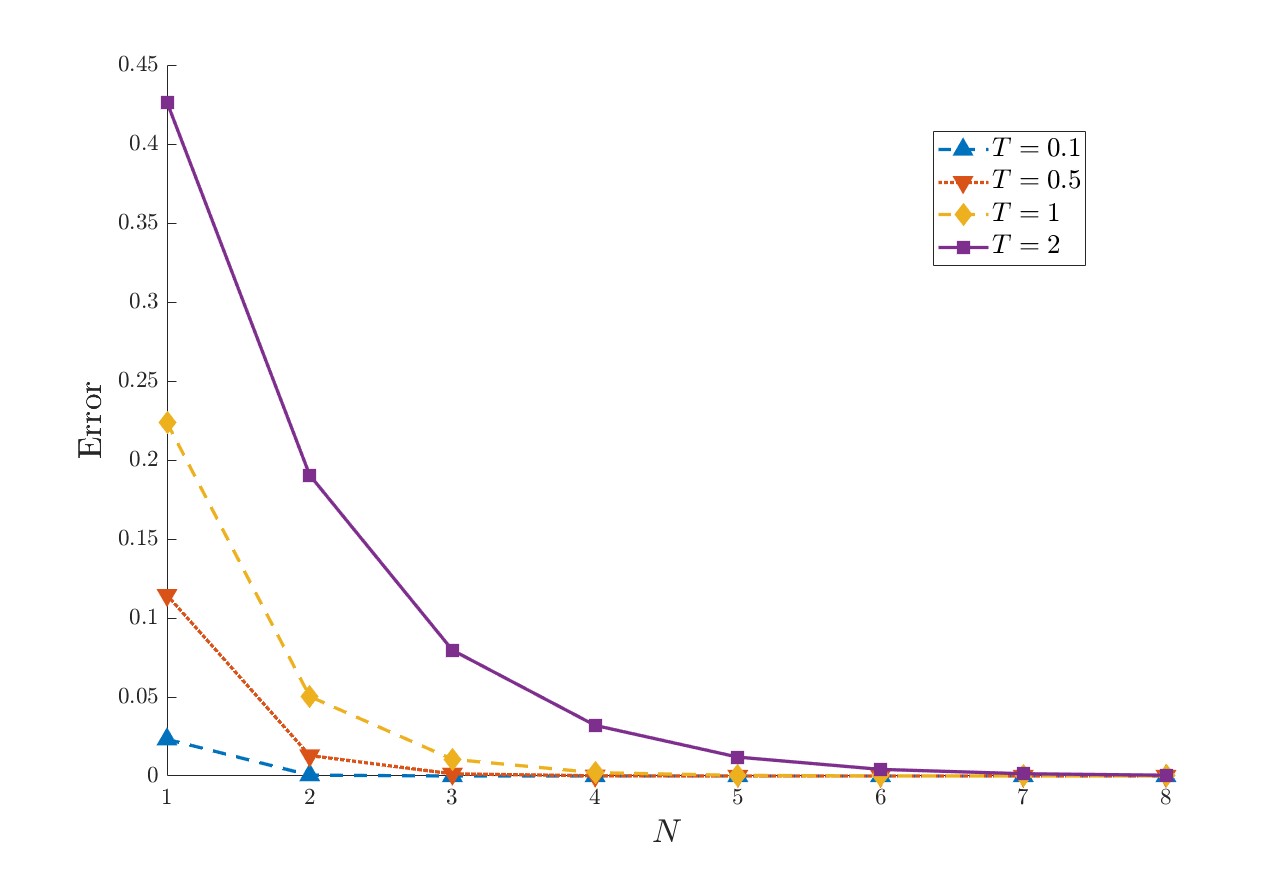}
		\caption{Left: The Carleman linearization error for the KdV equation in the time interval $[0,0.1]$  for different truncation level $N \leq 8$  with different level of nonlinearity, indicated by $\norm{F_2}_1$, which is controlled by changing $c$ in \cref{knumcal}; Right: The Carleman linearization error for the KdV equation for different truncation level $N$ and different time durations $T$. Here we set $\norm{F_{2}}_{1}=0.6$.  }
		\label{KDVF2}	
\end{figure}

To examine the impact of total duration $T$, we fixed $\norm {F_{2}}_{1}=0.6$. We observe from the right panel in \cref{KDVF2} that the error grows with total duration, while the convergence with truncation level is still markedly evident.

\subsection{Carleman linearization for the Fermi-Pasta-Ulam (FPU) dynamics }

Our next example is the Fermi-Pasta-Ulam (FPU) chain model, which is a foundational example in nonlinear dynamics and statistical mechanics for the study of chaos \cite{rink2001symmetry}, thermodynamic properties \cite{cretegny1998localization} and phonon transport \cite{dhar2008heat}. FPU chain model consists of a one-dimensional chain of particles connected by linear and nonlinear springs. For $i=1,\cdots, p$, the governing equations for the atomic displacement, denoted by $u_i$, are given by
\begin{equation}\label{eq: fpu}
    \frac{\mathrm{d}^{2} {u}_{i}}{\mathrm{d} t^{2}}  = k \left(u_{i+1}-2u_{i}+u_{i-1}\right) + \alpha \left(u_{i+1}-u_{i}\right)^{3} -\alpha \left(u_{i}-u_{i-1}\right)^{3}.
\end{equation}
      
Here we set the mass $m=1$ and impose Dirichlet boundary conditions $u_{0}=u_{p+1}=0$. To express the ODEs in a first-order form, we define
\begin{equation}
    \bm x=\left(u_{1},u_{2},\dots,u_{p-1},u_{p},\;\dot{u}_{1},\dot{u}_{2},\dots,\dot{u}_{p-1},\dot{u}_{p}\right)^{T},
\end{equation}
where $p$ is the number of moving particles.  Due to the linear and cubic terms in  \cref{eq: fpu}, one can rewrite the equations of motion as follows,
\begin{equation}\label{uequ}
      \frac{d}{dt} \bm x= F_{1} \bm x + F_{3} \bm x\otimes \bm x\otimes \bm x. 
\end{equation}
Here $F_{1} \in \R^{2p\times 2p}$  is a sparse matrix with nonzero entries:  $ \left(F_{1}\right)_{i, i}=1, \left(F_{1}\right)_{i+p,i}=-2k, \forall 1\le i\le p, $ 
$\left(F_{1}\right)_{i+p, i+1}=k, \quad \forall 1\le i\le p-1$, and $ \left(F_{1}\right)_{i+p, i-1}=k, \quad  \forall 2\le i\le p. $

Notice that $F_1$ has imaginary eigenvalues that are proportional to the frequency of the chain. In addition,  we removed the zero frequency by using Dirichlet boundary conditions. 

Note that $F_{3} \in \R^{2p\times \left(2p\right)^{3}}$, we denote $(i,j,k)$ to be the column of $F_{3}$ acting on $\bm x_{i}\otimes \bm x_{j}\otimes \bm x_{k}$, then $F_{3}$ is sparse with the following non-zero elements:
\( \left(F_{3}\right)_{i+p,(i,i,i)}=-2\alpha, \quad \text{for any } 1\le i\le p. \)
Meanwhile, for any $1\le i\le p-1$, they are given by,
\[ \case{ \left(F_{3}\right)_{i+p, (i+1,i+1,i+1)}=\left(F_{3}\right)_{i+p, (i,i,i+1)}=\left(F_{3}\right)_{i+p, (i,i+1,i)}=\left(F_{3}\right)_{i+p, (i+1,i,i)}=\alpha, \\ 
          \left(F_{3}\right)_{i+p,(i+1,i+1,i)}=\left(F_{3}\right)_{i+p,(i,i+1,i+1)}=\left(F_{3}\right)_{i+p,(i+1,i,i+1)}=-\alpha. }\]
Similarly, for any $2\le i\le p$,
\[ \case{\left(F_{3}\right)_{i+p, (i-1,i-1,i-1)}=\left(F_{3}\right)_{i+p, (i,i,i-1)}=\left(F_{3}\right)_{i+p, (i,i-1,i)}=\left(F_{3}\right)_{i+p, (i-1,i,i)}=\alpha, \\ 
     \left(F_{3}\right)_{i+p,(i-1,i-1,i)}=\left(F_{3}\right)_{i+p,(i,i-1,i-1)}=\left(F_{3}\right)_{i+p,(i-1,i,i-1)}=-\alpha.} \]

This is constructed from the condition that $\left(F_{3} \bm x\otimes  \bm x\otimes \bm x\right)_{p+i} = \alpha (u_{i+1}-u_{i})^{3}-\alpha (u_{i}-u_{i-1})^{3}.$  

With the model fully described, we write down the Carleman linearization for equation \eqref{uequ}
\begin{align}\label{fpu-car}
    \frac{\mathrm{d}}{\mathrm{d} t}\begin{pmatrix}
 \bm y_{1}\\
\bm  y_{2}\\
\bm  y_{3}\\
\bm  y_{4}\\
 \vdots \\
\bm y_{N}
\end{pmatrix}=\begin{pmatrix}
A_{1,1}  & 0 & A_{1,3}& 0 & \cdots & 0 & 0 \\
0  & A_{2,2} & 0 &  A_{2,4}& \cdots & 0 & 0 \\
 0 & 0 & A_{3,3} &  0& \ddots & \vdots & \vdots \\
 0 & 0 & &A_{4,4}  & \ddots & \vdots & \vdots  \\
  \vdots  & \vdots  &  \vdots &\vdots  & \ddots & \vdots & \vdots  \\
 0 & 0 & 0 & 0 & \cdots & A_{N-1,N-1} & 0  \\
 0 & 0 & 0 & 0& \cdots  & 0 & 
A_{N,N}\end{pmatrix}\begin{pmatrix}
\bm  y_{1}\\
\bm  y_{2}\\
\bm  y_{3}\\
\bm  y_{4}\\
 \vdots \\
\bm y_{N}
\end{pmatrix}.
\end{align}
where $A_{j,j}=\sum_{i=1}^{j-1}I^{\otimes i}\otimes F_{1}\otimes I^{\otimes j-1-i}$ and $A_{j,j+2}=\sum_{i=1}^{j-1}I^{\otimes i}\otimes F_{3}\otimes I^{\otimes j-1-i}$. Due to the cubic nonlinearity, the off-diagonal blocks are shifted toward the upper right corner. In the following, we show that this system can be converted into our standard form \cref{eq:1.1} when $N$ is even.

\begin{figure}[htbp]
	\centering
		\includegraphics[width=0.56\linewidth]{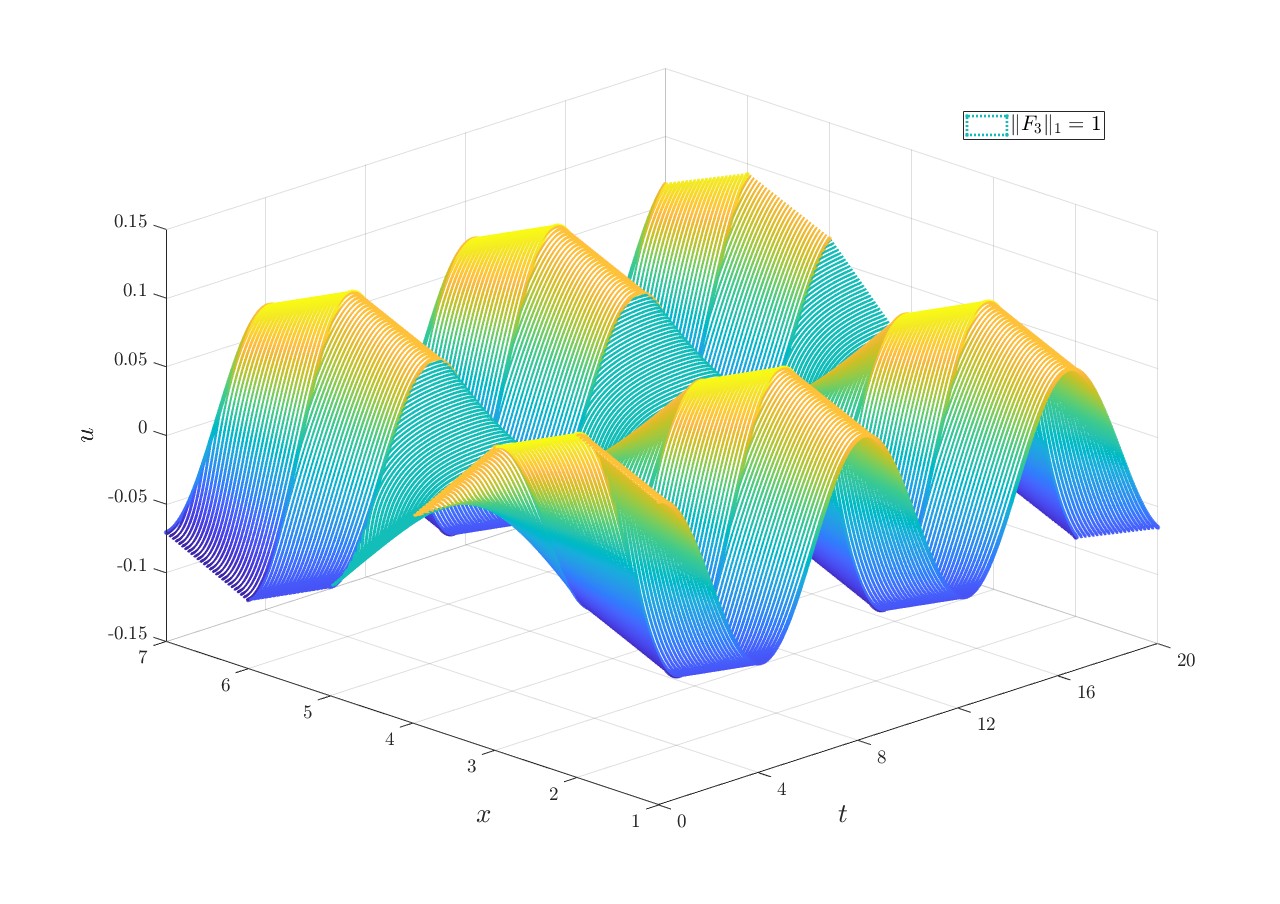}
		\caption{The displacement part of the solution of the FPU chain model \eqref{eq: fpu} in the time interval $[0,20]$. }
		\label{FPU_Mesh}
\end{figure}

\begin{figure}[htbp]
		\centering
		\includegraphics[width=0.46\linewidth]{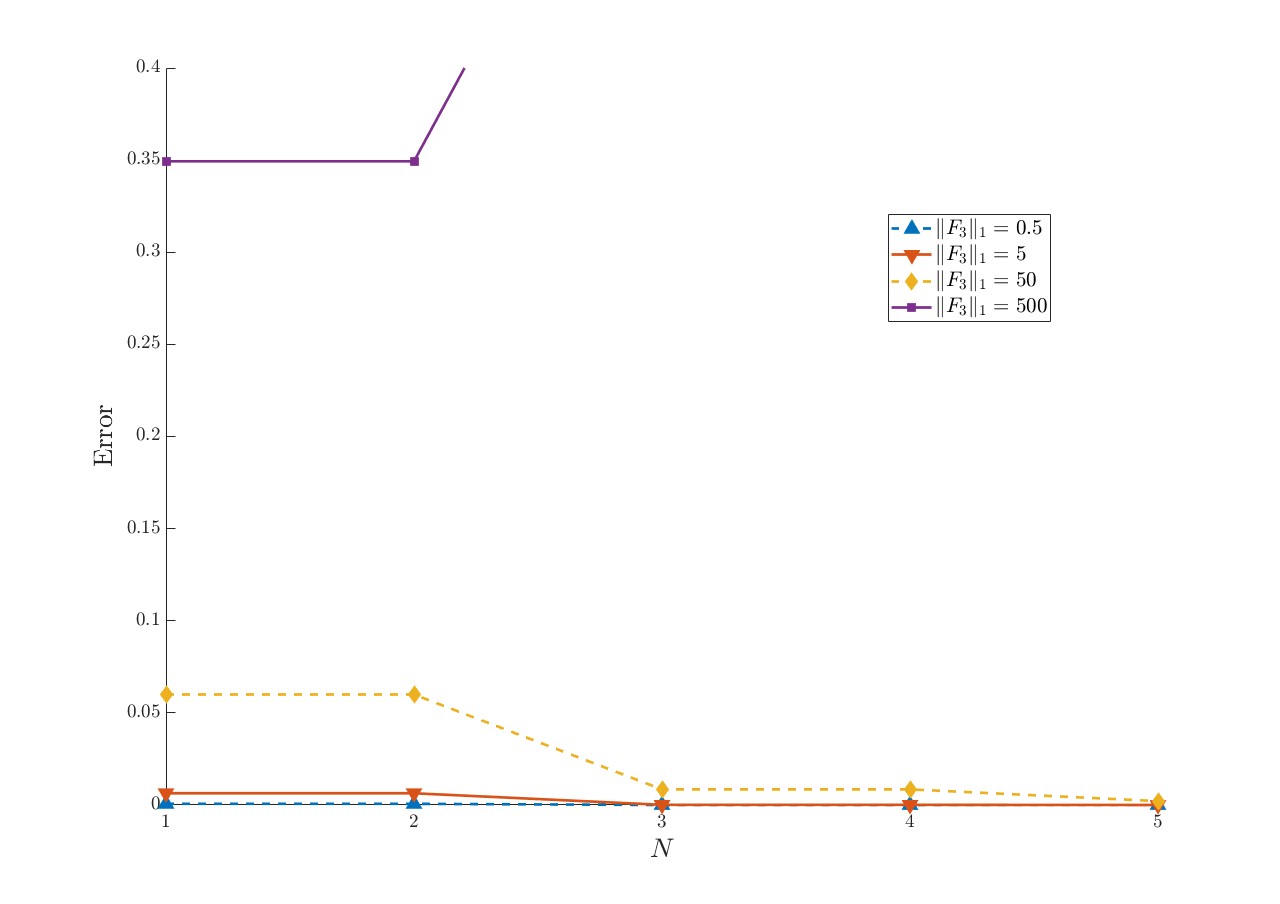}
  \includegraphics[width=0.46\linewidth]{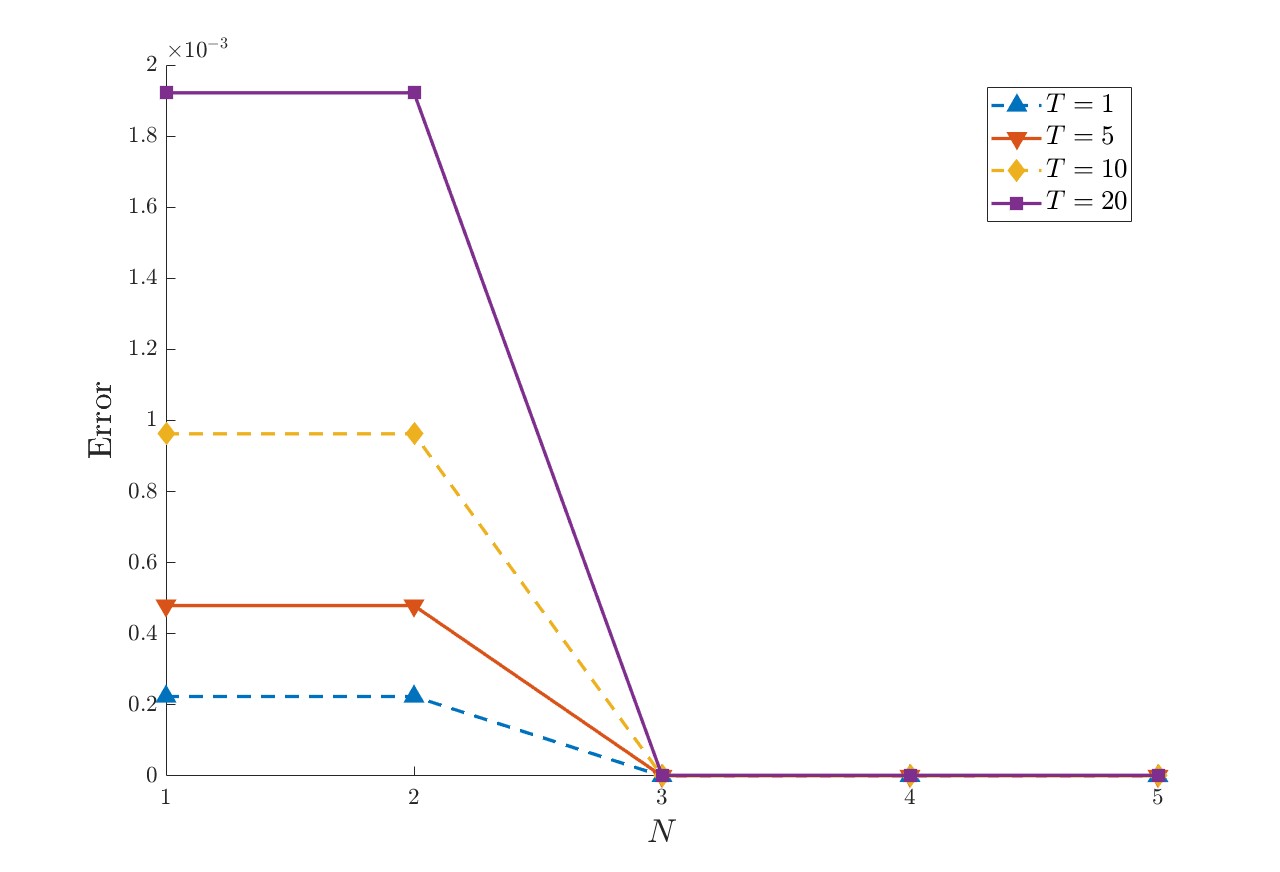}
		\caption{Left: The Carleman linearization error for the FPU chain model in \cref{eq: fpu}  in the time interval $[0,10]$  for different trunction level $N \leq 5$  with different level of nonlinearity, indicated by $\norm{F_3}$, which is controlled by changing $\alpha$ in \cref{eq: fpu}; Right: The Carleman linearization error for the FPU chain model in \cref{eq: fpu} for different truncation levels $N$ and different time duration $T$. Here we set $\norm{F_{3}}_{1}=1$.  }
		\label{FPUF3}	
\end{figure}

In the numerical tests, we initialize the FPU chain by a sinusoidal function,
\begin{align}
u\left(x,0\right) =  \frac{1}{10}\sin\left(\frac{2\pi x}{p+1}\right), \quad 
\frac{\partial}{\partial t}{u}\left(x,0\right) =  0.
\label{fpu_initial}
 \end{align}
We choose $k=1$ and set the length of the chain to  $p=7$, and test the Carleman linearization up to level $N=5$. The snapshots of the exact solution are displayed in \cref{FPU_Mesh}. Unlike the solution to the Burgers' equation depicted in \cref{BUR_mesh}, the solution to the FPU equation under these settings exhibits persistent oscillations.

To assess the impact of nonlinearity on error across various simulation durations, we tune the parameter $\alpha$ in \cref{eq: fpu}. The error from the Carleman linearization for each choice of $\alpha$  is shown in \cref{FPUF3}. We observed that even in the absence of dissipative conditions \eqref{eq: dissip}, the Carleman linearization converges when $\norm{F_{3}}$ remains sufficiently small, and divergence from the exact solution is observed for stronger nonlinearity. With numerical calculations, we find that this FPU chain is non-resonant, but with a small resonance parameter $\Delta= 2.3444\times 10^{-4}.$ Therefore it is quite remarkable that the Carleman linearization still shows convergence for modest nonlinearity.

We also studied the influence of total time duration $T$ by keeping $\norm{F_{3}}_{1}=1$. The numerical results, as shown in the right panel in \cref{FPUF3}, suggest that the error over a longer time interval often grows with $T$ (with a few exceptions), but the growth does not seem to be exponential.

\section{Summary and discussions}

In this paper, we have identified a new regime for dynamical systems where Carleman linearization can achieve linear convergence with the truncation level \(N\). This discovery can be utilized to establish an efficient quantum algorithm. The key to this convergence is a resonance parameter \(\Delta\), which, in the error bound, functions similarly to the dissipative parameter in previous analyses of the dissipative regime. However, these two regimes are not mutually exclusive; both dissipation and dispersion can be at play simultaneously. Additionally, our numerical results suggest that the current error bounds are likely pessimistic. For instance, the sparsity of \(F_2\) has not been taken into account.
There are other linear embedding schemes \cite{engel2021linear, giannakis2022embedding} for reducing nonlinear dynamical systems to linear ODEs. Extending the current analysis to study the truncation error of those schemes is another interesting direction to explore.

We have kept the focus of this paper mainly on the analysis of the error from the Carleman linearization. The actual implementation of the algorithm should also involve the preparation of $\bm y(0)$, a quantum algorithm for linear ODEs, fast-forwarding schemes, and efficient algorithms for the measurements of quantities of interest. We refer readers to \cite{krovi2022improved, liu2023efficient, an2022theoryof,liu2021efficient} and the references therein for the discussions of these aspects. It is also important to point out that the present approach and analysis are not limited to ODEs. Rather, such ODEs can arise from the semi-discrete approximations of many PDEs. The Burgers and KdV equations are already two important examples of this kind.

Our analysis primarily focuses on the dynamical systems described in \cref{eq:1.1}, where there exists a stable equilibrium at $\bm x=0$. Additionally, we can introduce an external force $F_0$, as discussed in \cite{liu2021efficient}. When $F_0$ is sufficiently small, it merely shifts the equilibrium point, allowing the same analytical approach to remain applicable. However, the situation becomes more complex when the system is exploring multiple equilibrium points, necessitating a different analytical framework.
Moreover, even in scenarios involving a single equilibrium, the occurrence of resonance can induce chaotic behavior, as illustrated in examples from \cite[Chap. 7]{ott2002chaos}. In such instances, any approximation method must contend with rapidly growing errors. Considering that the optimal complexity of a quantum algorithm for solving a linear ordinary differential equation (ODE) system, particularly via the Carleman linearization method, increases linearly with time $T$, it appears challenging for Carleman linearization to achieve fast or even any convergence in these contexts. This perspective is consistent with the insights presented in \cite{lewis2023limitations}.

%\section*{Acknowledgement}

%%%% Appendix =====================================

\appendix
\section{Proof of Lemma \ref{Lem: Main lemma}}\label{a-proof}
\begin{proof}
We first suppose $A_{i,i}$ and $A_{j,j}$ do not share the same eigenvalues. We can assume $N$ is an even number, so $\frac{N}{2}=\floor{\frac{N}{2}}$.  The case when $N$ is odd can be proved similarly. Denote $s = \frac{N}{2}$. For the case $1\le k\le s$, we rewrite \cref{w_N-k} into

\begin{equation}\label{eq: w_N-k 2}
    \begin{aligned}
        &\bm w_{j-k} = \sum_{m_1=1}^{j-k}\mathcal{L}(G_{p_{m_1}}\bxi_{m_1}^{(k)})\\
        & + \chi_{k\ge2} \sum_{m_1=1}^{j-k-1}\left(\sum_{m_2=m_1+k}^{j-1}\mathcal{L}(G_{p_{m_1}}\bxi_{m_1}^{(k-1)},G_{p_{m_2}}\bxi_{m_2}^{(1)})+\sum_{m_2=m_1+2}^{j-k+1}\mathcal{L}(G_{p_{m_1}}\bxi_{m_1}^{(1)},G_{p_{m_2}}\bxi_{m_2}^{(k-1)})\right)\\
        & + \chi_{k\ge3} \sum_{m_1=1}^{j-k-1}\!\! \left( \sum_{m_2=m_1+k-1}^{j-2}\mathcal{L}(G_{p_{m_1}}\bxi_{m_1}^{(k-2)},G_{p_{m_2}}\bxi_{m_2}^{(2)})+\!\!\sum_{m_2=m_1+3}^{j-k+2}\!\!\mathcal{L}(G_{p_{m_1}}\bxi_{m_1}^{(2)},G_{p_{m_2}}\bxi_{m_2}^{(k-2)})  \right)\\
        & + \chi_{k\ge3} \sum_{m_1=1}^{j-k-2}\Bigg(\sum_{m_2=m_1+k-1}^{j-3}\sum_{m_3=m_2+2}^{j-1}\mathcal{L}(G_{p_{m_1}}\bxi_{m_1}^{(k-2)},G_{p_{m_2}}\bxi_{m_2}^{(1)},G_{p_{m_3}}\bxi_{m_3}^{(1)})\\
        & \qquad\qquad +\sum_{m_2=m_1+2}^{j-k}\sum_{m_3=m_2+k-1}^{j-1}\mathcal{L}(G_{p_{m_1}}\bxi_{m_1}^{(1)},G_{p_{m_2}}\bxi_{m_2}^{(k-2)},G_{p_{m_3}}\bxi_{m_3}^{(1)})\\
        & \qquad\qquad +\sum_{m_2=m_1+2}^{j-k}\sum_{m_3=m_2+2}^{j-k+2}\mathcal{L}(G_{p_{m_1}}\bxi_{m_1}^{(1)},G_{p_{m_2}}\bxi_{m_2}^{(1)},G_{p_{m_3}}\bxi_{m_3}^{(k-2)})\Bigg)\\
       % &\quad+ \chi_{k\ge4}\sum_{m_1=1}^{j-k-1}\left(\sum_{m_2=m_1+k-2}^{j-3}\mathcal{L}(G_{p_{m_1}}\bxi_{m_1}^{(k-3)},G_{p_{m_2}}\bxi_{m_2}^{(3)})+\sum_{m_2=m_1+4}^{j-k+3}\mathcal{L}(G_{p_{m_1}}\bxi_{m_1}^{(3)},G_{p_{m_2}}\bxi_{m_2}^{(k-3)})\right)\\
        & + \cdots 
        \quad + \chi_{k= s}\sum_{m_1=1}^{j-k-s+1}\sum_{m_2=m_1+2}^{j-k-s+3}\cdots \sum_{m_{k}=j-1}^{j-1}\mathcal{L}(G_{p_{m_1}}\bxi_{m_1}^{(1)},\cdots,G_{p_{m_{k}}}\bxi_{m_{k}}^{(1)}).
    \end{aligned}
\end{equation}
The indicator function simply indicates which terms will appear depending on the value of $k$.

The proof is by induction. For the base cases  $\bm w_{N-1}$ and $\bm w_{N-2}$, we know they are in the form from previous calculations, in particular \cref{eq: w_j-1,w_j-2 Fredholm}. Assume that the formula holds true for all $k=1,\cdots,c$, where $c<s-1$. We proceed by picking out the next equation in \cref{eq: evector} by back substitution, the equation $ A_{j-c-1,j-c-1}\bm w_{j-c-1}+A_{j-c-1,j-c}\bm w_{j-c}=\l \bm w_{j-c-1}$ implies that
\begin{equation*}
    %\begin{aligned}
       % A_{j-c-1,j-c-1}\bm w_{j-c-1}+A_{j-c-1,j-c}\bm w_{j-c}&=\l \bm w_{j-c-1}\\
        %\To 
        (\l I-A_{j-c-1,j-c-1})\bm w_{j-c-1}=A_{j-c-1,j-c}\bm w_{j-c}
        = \left(\sum_{k=0}^{j-c-2}I^{\otimes j-c-2-k}\otimes F_2\otimes I^{\otimes k}\right)\bm w_{j-c}.
    %\end{aligned}
\end{equation*}
Now we try to find the form of $A_{j-c-1,j-c}\bm w_{j-c}$. For any $r\in\{1,\cdots,N-c-1\}$. Fix $j=N$, i.e., we work with the last block in \cref{eq: evector}, and we have
\begin{equation}\label{eq:xi^(c+1)}
    \begin{aligned}
         \bm\xi_{m_r}^{(c+1)} &= F_2\left(G_{p_{m_r}}\bxi_{m_r}^{(c)}\otimes \be_{i_{m_r+c+1}}+ \be_{i_{m_r}}\otimes G_{p_{m_r+1}}\bxi_{m_r+1}^{(c)}\right)\\
    &\quad +F_2\left(G_{p_{m_r}}\bxi_{m_r}^{(c-1)}\otimes G_{p_{m_{r+c}}}\xi_{m_{r+c}}\bxi^{(1)}+G_{p_{m_r}}\bxi_{m_r}^{(1)}\otimes G_{p_{m_{r+2}}}\bxi_{m_{r+2}}^{(c-1)} \right)\\
    &\quad +F_2\left(G_{p_{m_r}}\bxi_{m_r}^{(c-2)}\otimes G_{p_{m_{r+c-1}}}\bxi_{m_{r+c-1}}^{(2)}+G_{p_{m_r}}\bxi_{m_r}^{(2)}\otimes G_{p_{m_{r+3}}}\bxi_{m_{r+3}}^{(c-2)} \right)\\
    &\quad +\cdots
     = \sum_{a=0}^c F_2\left(G_{p_{m_r}}\bxi_{m_r}^{(a)}\otimes G_{p_{m_r+a+1}}\bxi_{m_r+a+1}^{(c-a)}\right).
    \end{aligned}
\end{equation}
As a result, we find that $A_{j-c-1,j-c}\bm w_{j-c}$ has a term in the following form $\sum_{k=1}^{j-c-1}\mathcal{L}\left(\bxi_{k}^{(c+1)}\right).$
%\[ \sum_{k=1}^{j-c-1}\mathcal{L}\left(\bxi_{k}^{(c+1)}\right). \]

\cref{eq:xi^(c+1)} presented the case where two consecutive vectors at position $m_r$ with orders adding up to $c$. Now we may look at the case where the orders sum up to $c-1$. Similarly,
we get $\bxi_{m_r}^{(c)}$. From the induction assumption, there exists another vector $G_{m_r'}\bxi_{m_r'}^{(1)}$. Therefore, we have $\mathcal{L}(\bxi_{m_r}^{(c)},G_{m_r'}\bxi_{m_r'}^{(1)})$. We pair this case together with the case where at position $m_r$ the term has order $c$ but $F_2$ applies to the vector that has order $1$. From these observation,  we find that $A_{j-c-1,j-c}\bm w_{j-c}$ has a term 
\[ \sum_{r=1}^{j-c-2}\sum_{r'=r+c+1}^{N-1} \left(\mathcal{L}(\bxi_{m_r}^{(c)},G_{p_{m_r'}}\bxi_{m_r'}^{(1)})+ \mathcal{L}(G_{p_{m_r}}\bxi_{m_r}^{(c)},\bxi_{m_r'}^{(1)})\right).\]

We repeat this argument and get the remaining terms in $A_{j-c-1,j-c}\bm w_{j-c}$. Note that 
\[ \bm w_{j-c-1} = (\l I-A_{j-c-1,j-c-1})^{-1}A_{j-c-1,j-c}\bm w_{j-c} \]
The invertibility of $ (\l I-A_{j-c-1,j-c-1})$ follows from the assumption of $\sigma(A_{j,j})\cap \sigma(A_{j-k,j-k})=\phi$ for all $k$. Next we look at $(\l I-A_{j-c-1,j-c-1})^{-1}\sum_{r=1}^{l-c-1}\mathcal{L}\left(\bxi_r^{c+1}\right)$. By spectrum decomposition of matrix $(\l I-A_{j-c-1,j-c-1})^{-1}$, we have
\begin{equation*}
    \begin{aligned}
        (\l I-A_{j-c,j-c})^{-1}&(\be_{i_1}\otimes\cdots\otimes \be_{i_{r-1}}\otimes \bxi_r^{(c+1)}\otimes\cdots\otimes \be_{i_j})\\
        &= \be_{i_1}\otimes\cdots\otimes \be_{i_{r-1}}\otimes G_{r:(r+c+1)}\bxi_r^{(c+1)}\otimes\cdots\otimes \be_{i_j} 
        = \mathcal{L}(G_{p_r}\bxi_r^{(c+1)}).
    \end{aligned}
\end{equation*}
Therefore,
\( (\l I-A_{j-c-1,j-c-1})^{-1}\sum_{r=1}^{N-c-1}\mathcal{L}\left(\bxi_r^{(c+1)}\right) = \sum_{r=1}^{N-c-1}\mathcal{L}\left(G_{p_r}\bxi_{r}^{(c+1)}\right) \).
Now we prove that for every $1\le a< c+1$, we have
\[  (\l I-A_{j-c-1,j-c-1})^{-1} \left(\oL(G_{p_r}\bxi_r^{(c+1-a)},\bxi_{q}^{(a)})+\oL(\bxi_r^{(c+1-a)},G_{p_{q}}\bxi_{q}^{(a)})\right)= \oL(G_{p_r}\bxi_r^{(c+1-a)},G_{p_{q}}\bxi_q^{(a)}). \]

A direct calculation shows that, 
\begin{align*}
         (\l I\!-\! A_{j-c-1,j-c-1})^{-1}\oL(G_{p_r}\bxi_r^{(c+1-a)},\bxi_{q}^{(a)})\!
          &=\!\sum_{\l_{i_k'}}\frac{1}{\l-\!(\l_{i_1'}+\cdots+\l_{i_{j-c-1}'})}\bigotimes_{k=1}^{j-c-1}{\be_{i_k'}}\left(\bigotimes_{k=1}^{j-c-1}{\bm f_{i_k'}}\right)^T\\
          & \be_{i_1}\otimes\cdots\otimes \be_{i_{r-1}}\otimes  G_{p_r}\bxi_r^{(c+1-a)}\otimes\cdots\otimes \bxi_{q}^{(a)}\otimes\cdots\otimes \be_{i_j}. 
\end{align*}
In order to make it nonzero, we must require that
\[i_1=i_1',\cdots, i_{r-1}=i_{r-1}', i_{r+c-a+2}=i_{r+1}',\cdots, i_{q-1} = i_{q-1}', \]
and
\(i_{q+a+1}=i_{q+1}',\cdots,i_{j}=i_{j-c-1}'. \)
Then the above expression is equivalent to
\begin{equation*}
    \begin{aligned}
        &\sum_{\l_{i_r'},\l_{i_q'}}\frac{1}{(\l_{i_{r}}+\cdots+\l_{i_{r+c+1-a}}+\l_{i_{q}}+\cdots+\l_{i_{q+a}})-(\l_{i_{r'}}+\l_{i_{q'}})}\\ 
        &\qquad\qquad\qquad\qquad\qquad\qquad\qquad (\be_{i_{r'}}\otimes \be_{i_{q'}})(\bbf_{i_{r'}}\otimes \bbf_{i_{q'}})^T G_{p_r}\bxi_{r}^{(c+1-a)}\otimes \bxi_q^{(a)}\\
        &= \sum_{\l_{i_r'},\l_{i_q'}}\frac{1}{(\l_{i_{r}}+\cdots+\l_{i_{r+c+1-a}}+\l_{i_{q}}+\cdots+\l_{i_{q+a}})-(\l_{i_{r'}}+\l_{i_{q'}})}(\be_{i_{r'}}\otimes \be_{i_{q'}})\bbf_{i_{r'}}^T \\ 
        &\qquad\qquad\qquad\qquad\qquad\qquad\qquad\frac{1}{(\l_{i_{r}}+\cdots+\l_{i_{r+c+1-a}})-\l_{i_r'}}\bxi_{r}^{(c+1-a)}\otimes \bbf_{i_{q'}}^T \bm \xi_q^{(a)}.
    \end{aligned}
\end{equation*}

Similarly, we simplify $(\l I- A_{j-c-1,j-c-1})^{-1}\left(\oL(\bxi_r^{(c-a)},G_{p_q}\bxi_{q}^{(a)})\right)$ to,
\begin{equation*}
    \begin{aligned}
        & \sum_{\l_{i_r'},\l_{i_q'}}\frac{1}{(\l_{i_{r}}+\cdots+\l_{i_{r+c+1-a}}+\l_{i_{q}}+\cdots+\l_{i_{q+a}})-(\l_{i_{r'}}+\l_{i_{q'}})}\\
        &\qquad\qquad\qquad\qquad\qquad(
        \be_{i_{r'}}\otimes \be_{i_{q'}})(\bbf_{i_{r'}}\otimes \bbf_{i_{q'}})^T \bxi_{r}^{(c+1-a)}\otimes G_{p_q}\bxi_q^{(a)}\\
        &= \sum_{\l_{i_r'},\l_{i_q'}}\frac{1}{(\l_{i_{r}}+\cdots+\l_{i_{r+c+1-a}}+\l_{i_{q}}+\cdots+\l_{i_{q+a}})-(\l_{i_{r'}}+\l_{i_{q'}})}\\
        &\qquad\qquad\qquad(\be_{i_{r'}}\otimes \be_{i_{q'}})\bbf_{i_{r'}}^T \bxi_{r}^{(c+1-a)}\otimes\frac{1}{(\l_{i_q}+\cdots+\l_{i_{q+a}})-\l_{i_{q'}}} \bbf_{i_{q'}}^T \bxi_q^{(a)}.
    \end{aligned}
\end{equation*}

By collecting these terms, we arrive at, 
\begin{equation*}
    \begin{aligned}
         &(\l I- A_{j-c-1,j-c-1})^{-1}\left(\oL(G_{p_r}\bxi_r^{(c-a)},\bxi_{q}^{(a)})+\oL(G_{p_r}\bxi_r^{(c+1-a)},\bxi_{q}^{(a)})\right)\\
         &= \sum_{\l_r'}\frac{1}{(\l_{i_{r}}+\cdots+\l_{i_{r+c+1-a}})-\l_{i_r'}} (\be_{i_r'}\otimes \be_{i_q'})(\bbf_{i_r'}\otimes \bbf_{i_q'})^T \bxi_r^{(c+1-a)}\otimes\bxi_q^{(a)}+\\
         &\qquad\qquad\qquad\qquad \sum_{\l_q'}\frac{1}{(\l_{i_q}+\cdots+\l_{i_{q+a}})-\l_{i_{q'}}} (\be_{i_r'}\otimes \be_{i_q'})(\bbf_{i_r'}\otimes \bbf_{i_q'})^T \bxi_r^{(c+1-a)}\otimes\bxi_q^{(a)}\\
         &= \oL(G_{p_r}\bxi_r^{(c+1-a)},G_{p_q}\bxi_q^{(a)}).
    \end{aligned}
\end{equation*}
Therefore, for $a=1,\cdots,\floor{\frac{c}{2}}$, we know $\bm w_{j-k}$ with $k=c+1$ contains
\begin{equation*}
    \begin{aligned}
        &\sum_{m_1=1}^{j-k-1}\left(\sum_{m_2=m_1+k}^{j-1}\mathcal{L}(G_{p_{m_1}}\bxi_{m_1}^{(k-1)},G_{p_{m_2}}\bxi_{m_2}^{(1)})+\sum_{m_2=m_1+2}^{j-k+1}\mathcal{L}(G_{p_{m_1}}\bxi_{m_1}^{(1)},G_{p_{m_2}}\bxi_{m_2}^{(k-1)})\right)\\
        &\quad+ \sum_{m_1=1}^{j-k-1} \left( \sum_{m_2=m_1+k-1}^{j-2}\mathcal{L}(G_{p_{m_1}}\bxi_{m_1}^{(k-2)},G_{p_{m_2}}\bxi_{m_2}^{(2)})+\sum_{m_2=m_1+3}^{j-k+2}\mathcal{L}(G_{p_{m_1}}\bxi_{m_1}^{(2)},G_{p_{m_2}}\bxi_{m_2}^{(k-2)})  \right) +  \\
        &\quad +\cdots +\sum_{m_1=1}^{j-k-1}\Big(\sum_{m_2=m_1+k-\floor{\frac{c}{2}}+1}^{j-\floor{\frac{c}{2}}}\mathcal{L}(G_{p_{m_1}}\bxi_{m_1}^{(k-\floor{\frac{c}{2}})},G_{p_{m_2}}\bxi_{m_2}^{(\floor{\frac{c}{2}})})
        \\
&\qquad\qquad\qquad\qquad\qquad\qquad\qquad\qquad
        +\sum_{m_2=m_1+\floor{\frac{c}{2}}+1}^{j-k+\floor{\frac{c}{2}}}\mathcal{L}(G_{p_{m_1}}\bxi_{m_1}^{(\floor{\frac{c}{2}})},G_{p_{m_2}}\bxi_{m_2}^{(k-\floor{\frac{c}{2}})})\Big).
    \end{aligned}
\end{equation*}

With similar arguments and using spectral decomposition and pairing terms, we can show that for any $k_1+\cdots+k_s=c+1$ and $k_1,\cdots,k_s\ge1$.
\begin{multline*}
    (\l I-A_{j-k-1,j-k-1})^{-1} \sum_{p(\bxi^{(k_1)})}\oL(\bxi_{m_1}^{(k_1)},G_{p_{m_2}}\bxi_{m_2}^{(k_2)},\cdots,G_{p_{m_s}}\bxi_{m_s}^{(k_s)}) \\ = \oL(G_{p_{m_1}}\bxi_{m_1}^{(k_1)},G_{p_{m_2}}\bxi_{m_2}^{(k_2)},\cdots,G_{p_{m_s}}\bxi_{m_s}^{(k_s)}).
\end{multline*}
Therefore, we see that  when $k=c+1$, $\bm w_{j-k}$ still satisfies \cref{eq: w_N-k 2}. By induction, the proof for the case $1\le k\le s$ is done. Now we look at the case $s\le k\le N-1$. 
\begin{equation*}
    \begin{aligned}
        \bm w_{s} &= \sum_{m_1=1}^{j-s}\mathcal{L}(G_{p_{m_1}}\bxi_{m_1}^{(s)})\\
        &\quad + \sum_{m_1=1}^{j-s-1}\left(\sum_{m_2=m_1+s}^{j-1}\mathcal{L}(G_{p_{m_1}}\bxi_{m_1}^{(s-1)},G_{p_{m_2}}\bxi_{m_2}^{(1)})+\sum_{m_2=m_1+2}^{j-s+1}\mathcal{L}(G_{p_{m_1}}\bxi_{m_1}^{(1)},G_{p_{m_2}}\bxi_{m_2}^{(s-1)})\right)\\
        &\quad + \sum_{m_1=1}^{j-s-1} \left( \sum_{m_2=m_1+s-1}^{j-2}\mathcal{L}(G_{p_{m_1}}\bxi_{m_1}^{(s-2)},G_{p_{m_2}}\bxi_{m_2}^{(2)})+\sum_{m_2=m_1+3}^{j-k+2}\mathcal{L}(G_{p_{m_1}}\bxi_{m_1}^{(2)},G_{p_{m_2}}\bxi_{m_2}^{(s-2)})  \right)\\
%        &\quad + \sum_{m_1=1}^{j-s-2}\Bigg(\sum_{m_2=m_1+s-1}^{j-3}\sum_{m_3=m_2+2}^{j-1}\mathcal{L}(G_{p_{m_1}}\bxi_{m_1}^{(s-2)},G_{p_{m_2}}\bxi_{m_2}^{(1)},G_{p_{m_3}}\bxi_{m_3}^{(1)})\\
 %       & \qquad\qquad +\sum_{m_2=m_1+2}^{j-s}\sum_{m_3=m_2+s-1}^{j-1}\mathcal{L}(G_{p_{m_1}}\bxi_{m_1}^{(1)},G_{p_{m_2}}\bxi_{m_2}^{(s-2)},G_{p_{m_3}}\bxi_{m_3}^{(1)})\\
  %      & \qquad\qquad +\sum_{m_2=m_1+2}^{j-s}\sum_{m_3=m_2+2}^{j-s+2}\mathcal{L}(G_{p_{m_1}}\bxi_{m_1}^{(1)},G_{p_{m_2}}\bxi_{m_2}^{(1)},G_{p_{m_3}}\bxi_{m_3}^{(s-2)})\Bigg)\\
        &\quad+ \sum_{m_1=1}^{j-s-1}\left(\sum_{m_2=m_1+s-2}^{j-3}\mathcal{L}(G_{p_{m_1}}\bxi_{m_1}^{(s-3)},G_{p_{m_2}}\bxi_{m_2}^{(3)})+\sum_{m_2=m_1+4}^{j-s+3}\mathcal{L}(G_{p_{m_1}}\bxi_{m_1}^{(3)},G_{p_{m_2}}\bxi_{m_2}^{(s-3)})\right)\\
        &\quad + \cdots
         + \sum_{m_1=1}^{1}\sum_{m_2=m_1+2}^{3}\cdots \sum_{m_{k}=j-1}^{j-1}\mathcal{L}(G_{p_{m_1}}\bxi_{m_1}^{(1)},\cdots,G_{p_{m_{k}}}\bxi_{m_{k}}^{(1)}).
    \end{aligned}
\end{equation*}

From
\( \bm w_{j-k-1} = (\l I-A_{j-k-1,j-k-1})^{-1}A_{j-k-1,j-k}\bm w_{j-k}, \)
we find that the order a of $\bxi^{(a)}$ still sums up to $k$ for each $\bm w_{j-k}$. Therefore, \cref{w_N-k} still holds. 

Thus, the proof is completed under the assumption that $\sigma(A_{i,i})\cap \sigma(A_{j,j})=\phi$ for all $i<k$. Now we show that the assumption can be removed by using Fredholm alternative. As can be seen from \eqref{G12},  $(\l-A_{j-1,j-1})$ is always invertible under the nonresonance condition. Equation \eqref{w_j-2 Fredholm} shows that Fredholm alternative works for $k=2$. Assume that $\l I - A_{j-k,j-k}$ is psuedo-invertible for $k=1,\cdots,c$ (i.e., fulfilling the Fredholm alternative).
Consider the next case where $k=c+1$: 
\[ (\l I - A_{j-k,j-k})\bm w_{j-k}=A_{j-k,j-k+1}\bm w_{j-k+1}, \]
where $(\l I-A_{j-k,j-k})$ has a nontrivial null space. Denote $\bm h = \left(\bbf_{i_1'}\otimes\cdots\otimes \bbf_{i_{j-k}'}\right)^T$,  a left eigenvector of $\l I-A_{j-k,j-k}$. Since $\bm w_{j-k+1}$ follows the form of  \cref{eq: w_N-k 2}, $A_{j-c-1,j-c}\bm w_{j-c}$ follows the pattern, 
\[ \sum_{k=1}^{j-c-1}\mathcal{L}\left(\bxi_{k}^{(c+1)}\right), \quad \sum_{r=1}^{j-c-2}\sum_{r'=r+c+1}^{N-1} \left(\mathcal{L}(\bxi_{m_r}^{(c)},G_{p_{m_r'}}\bxi_{m_r'}^{(1)})+ \mathcal{L}(G_{p_{m_r}}\bxi_{m_r}^{(c)},\bxi_{m_r'}^{(1)})\right),\quad \cdots\]

Let us first look at the term $\oL\left(\bxi_1^{(c+1)}\right)=\bxi_1^{(c+1)}\otimes \be_{i_{c+3}}\otimes\cdots\be_{i_j}$. If $\bm h\cdot \oL\left(\bxi_1^{(c+1)}\right)\ne0$, then
$i_2'=i_{c+3}$, $i_3'=i_{c+4}$,..., $i_{j-c-1}'=i_j$, which leads to $\l_{i_2'}=\l_{i_{c+3}}$, $\l_{i_3'}=\l_{i_{c+4}}$,..., $\l_{i_{j-c-1}'}=\l_{i_j}$. Then
\[ \l=\l_{i_1}+\cdots+\l_{i_j}=\l_{i_1'}+\cdots+\l_{i_{j-c-1}'}\To \l_i'=\l_{i_1}+\cdots+\l_{i_{c+2}} \]
contradicting the nonresonance condition. Similary, we conclude that $\bm h\cdot \sum_{k=1}^{j-c-1}\oL\left(\bxi_k^{(c+1)}\right)=0$. 

Next we look at the term $\mathcal{L}(\bxi_{m_r}^{(c)},G_{p_{m_r'}}\bxi_{m_r'}^{(1)})+ \mathcal{L}(G_{p_{m_r}}\bxi_{m_r}^{(c)},\bxi_{m_r'}^{(1)})$. Using the same argument as in  \cref{w_j-2 Fredholm}, we know $\bm h\cdot \mathcal{L}(G_{p_{m_r}}\bxi_{m_r}^{(c)},\bxi_{m_r'}^{(1)}) = 0$. Now we claim that
\begin{multline}\label{Fredholm c-1}
    \bm h \cdot \Big(\oL(G_{1:c}\bxi_1^{(c-1)},G_{c+1,c+2}\bxi_{c+1}^{(1)}, \bxi_{c+3}^{(1)})+\oL(\bxi_1^{(c-1)},G_{c+1,c+2}\bxi_{c+1}^{(1)}, G_{c+3,c+4}\bxi_{c+3}^{(1)})\\ 
    +\oL(G_{1:c}\bxi_1^{(c-1)},\bxi_{c+1}^{(1)}, G_{c+3,c+4}\bxi_{c+3}^{(1)})\Big)=0.
\end{multline} 
Notice  that 
\begin{gather*}
    \oL(G_{1:c}\bxi_1^{(c-1)},G_{c+1,c+2}\bxi_{c+1}^{(1)}, \bxi_{c+3}^{(1)}) = G_{1:c}\bxi_1^{(c-1)}\otimes G_{c+1,c+2}\bxi_{c+1}^{(1)}\otimes\bxi_{c+3}^{(1)}\otimes\cdots\otimes \be_{i_j},\\
    \oL(\bxi_1^{(c-1)},G_{c+1,c+2}\bxi_{c+1}^{(1)}, G_{c+3,c+4}\bxi_{c+3}^{(1)}) = \bxi_1^{(c-1)}\otimes G_{c+1,c+2}\bxi_{c+1}^{(1)}\otimes G_{c+3,c+4}\bxi_{c+3}^{(1)}\otimes\cdots\otimes\be_{i_j},\\
    \oL(G_{1:c}\bxi_1^{(c-1)},\bxi_{c+1}^{(1)}, G_{c+3,c+4}\bxi_{c+3}^{(1)}) =  G_{1:c}\bxi_1^{(c-1)}\otimes \bxi_{c+1}^{(1)}\otimes G_{c+3,c+4}\bxi_{c+3}^{(1)}\otimes\cdots\otimes \be_{i_j}.
\end{gather*}
Thus we require $i_4'=i_{c+4}$, ... $i_{j-c-1}'=i_j$ for the inner product not be zero.
Then 
\begin{equation*}
    \begin{aligned}
        &\bm h\cdot \oL(G_{1:c}\bxi_1^{(c-1)}, G_{c+1,c+2}\bxi_{c+1}^{(1)}, \bxi_{c+3}^{(1)})
        =\bm h\cdot G_{1:c}\bxi_1^{(c-1)}\otimes G_{c+1,c+2}\bxi_{c+1}^{(1)}\otimes\bxi_{c+3}^{(1)}\otimes\cdots\otimes \be_{i_j}\\
        &\qquad =\frac{1}{\l_{i_1}+\cdots+\l_{i_c}-\l_{i_1'}}\bbf_{i_1'}^T\bxi_{1}^{(c-1)}\cdot\frac{1}{\l_{i_{c+1}}+\l_{i_{c+2}}-\l_{i_2'}}\bbf_{i_2'}^T\bxi_{c+1}^{(1)}\cdot \bbf_{i_3'}^T\bxi_{c+3}^{(1)}
    \end{aligned}
\end{equation*}
Similarly, we obtain,
\begin{align*}
    &\bm h\cdot \oL(\bxi_1^{(c-1)},G_{c+1,c+2}\bxi_{c+1}^{(1)}, G_{c+3,c+4}\bxi_{c+3}^{(1)}) \\
    &=\bbf_{i_1'}^T\bxi_{1}^{(c-1)}\cdot\frac{1}{\l_{i_{c+1}}+\l_{i_{c+2}}-\l_{i_2'}}\bbf_{i_2'}^T\bxi_{c+1}^{(1)}\cdot \frac{1}{\l_{i_{c+3}}+\l_{i_{c+4}}-\l_{i_3'}}\bbf_{i_3'}^T\bxi_{c+3}^{(1)},\\
    & \bm h\cdot \oL(G_{1:c}\bxi_1^{(c-1)},\bxi_{c+1}^{(1)}, G_{c+3,c+4}\bxi_{c+3}^{(1)}) \\
    &=\frac{1}{\l_{i_1}+\cdots+\l_{i_c}-\l_{i_1'}}\bbf_{i_1'}^T\bxi_{1}^{(c-1)}\cdot\bbf_{i_2'}^T\bxi_{c+1}^{(1)}\cdot \frac{1}{\l_{i_{c+3}}+\l_{i_{c+4}}-\l_{i_3'}}\bbf_{i_3'}^T\bxi_{c+3}^{(1)}.
\end{align*}

As a result, the left hand side of  \cref{Fredholm c-1} can be simplified to
\begin{equation}
    \begin{aligned}
        \frac{(\l_{i_1}+\cdots+\l_{i_{c+4}})-(\l_{i_1'}+\l_{i_2'}+\l_{i_3'})}{(\l_{i_1}+\cdots+\l_{i_c}-\l_{i_1'})(\l_{i_{c+1}}+\l_{i_{c+2}}-\l_{i_2'})(\l_{i_{c+3}}+\l_{i_{c+4}}-\l_{i_3'})}\bbf_{i_1'}^T\bxi_{1}^{(c-1)}\bbf_{i_2'}^T\bxi_{c+1}^{(1)}\bbf_{i_3'}^T\bxi_{c+3}^{(1)}=0.
    \end{aligned}
\end{equation}
One may prove for the general case using similar pairing techniques. Thus, psuedo-inverse of $\l I-A_{j-k,j-k}$ always exists for any $k=1,\cdots,j-1$.

\end{proof}

\section{Proof of Lemma \ref{lem: wj-k}} \label{b-proof}
Notice that right hand side of \cref{wj-k-cartalan} is 
\[ \left( \frac{\norm{F_2}}{\Delta} \right)^{k} \sum_{\overset{k_1+k_2+\cdots+k_r=k}{r= j-k}}  C(k_1) C(k_2) \cdots C(k_r), \]
where $C(k_i)$ is the Catalan numbers for $i=1,\cdots,r$. This product can be conveniently recovered from  the generating function for Catalan numbers, given by \cite{xie2021new},
\[ G(x)=\sum_{n=0}^\oo \mat{2n\\ n}\frac{x^n}{n+1}=\frac{1-\sqrt{1-4x}}{2x}. \]
Then the coefficient of $x^k$ in $\left(\frac{1-\sqrt{1-4x}}{2x}\right)^r$ coincides with the sum of the product, 
%\[ \sum_{\overset{k_1+k_2+\cdots+k_r=k}{r= j-k}}  C(k_1) C(k_2) \cdots C(k_r), \]
%As a result, we have 
giving the explicit formula,
\[ \sum_{\overset{k_1+k_2+\cdots+k_r=k}{r= j-k}}  C(k_1) C(k_2) \cdots C(k_r) = \frac{r}{2k+r}\mat{2k+r\\ k}=\frac{j-k}{j+k}\mat{j+k\\ k}. \]
%which follows from Lagrange Inversion Theorem. Thus, \cref{lem: wj-k} holds.

\end{document}